\def\Pr{\mathop{\rm Pr}}
\def\B{{\mathcal B}}
\def\P{{\mathcal P}}
\def\sPr{{\mathsf{Pr}}}
\def\sX{{\mathds X}}
\def\sY{{\mathds Y}}
\def\sA{{\mathds A}}
\def\sU{{\mathds U}}
\def\sH{{\mathds H}}
\def\sZ{{\mathds Z}}
\def\sU{{\mathds U}}
\pgfplotsset{compat=1.6}
\newtheorem{prop}{Proposition}[section]
\newtheorem{assumption}{Assumption}[section]
\newtheorem{exmp}{Example}[section]
 \newtheorem{remark}{Remark}
\newcommand{\R}{\mathds{R}}
\newcommand{\Zplus}{\mathds{Z}_+}
\newcommand{\N}{\mathds{N}}
\newcommand{\dd}{\mathrm{d}}
\newcommand{\adk}[1]{{\color{black} #1}}
\pgfplotsset{soldot/.style={color=blue,only marks,mark=*}}
\pgfplotsset{holdot/.style={color=blue,fill=white,only marks,mark=*}}
\begin{document}
\sloppy
\title{Robustness to Incorrect Priors in Partially Observed Stochastic Control
\thanks{
This research was supported in part by
the Natural Sciences and Engineering Research Council (NSERC) of Canada. Part of this paper has been submitted to the 2018 IEEE Conference on Decision and Control in an abbreviated form.}
}
\author{Al\.{I} Devran Kara and Serdar Y\"uksel
\thanks{The authors are with the Department of Mathematics and Statistics,
     Queen's University, Kingston, ON, Canada,
     Email: \{16adk,yuksel\}@queensu.ca}
     }
\maketitle
\begin{abstract}
We study the continuity properties of optimal solutions to stochastic control problems with respect to initial probability measures and applications of these to the robustness of optimal control policies applied to systems with incomplete or incorrect priors. It is shown that for single and multi-stage optimal cost problems, continuity and robustness cannot be established under weak convergence in general, but that the optimal expected cost is continuous in the priors under the convergence in total variation under mild conditions. By imposing further assumptions on the measurement models, robustness and continuity also hold under weak convergence of priors. We thus obtain robustness results and bounds on the mismatch error that occurs due to the application of a control policy which is designed for an incorrectly estimated prior as the incorrect prior converges to the true one. Positive and negative practical implications of these results in empirical learning for stochastic control are presented, where almost surely weak convergence of i.i.d. empirical measures occurs but stronger notions of convergence, such as total variation convergence, in general, do not. \end{abstract}

\begin{AMS}
93E20, 93E03, 93E11, 62G35  	
\end{AMS}

\section{Introduction}\label{section:intro}
\subsection{Preliminaries}

We start with the probabilistic setup of the problem. Let $\mathds{X} \subset \mathds{R}^n$, be a Borel set in which elements of a
controlled Markov process $\{X_t,\, t \in \Zplus\}$ live.  Here
and throughout the paper $\Zplus$ denotes the set of non-negative
integers and $\mathds{N}$ denotes the set of positive integers.  Let
$\mathds{Y} \subset \mathds{R}^m$ be a Borel set, and let an
observation channel $Q$ be defined as a stochastic kernel (regular
conditional probability) from  $\mathds{X}$ to $\mathds{Y}$, such that
$Q(\,\cdot\,|x)$ is a probability measure on the (Borel)
$\sigma$-algebra ${\cal B}(\mathds{Y})$ on $\mathds{Y}$ for every $x
\in \mathds{X}$, and $Q(A|\,\cdot\,): \mathds{X}\to [0,1]$ is a Borel
measurable function for every $A \in {\cal B}(\mathds{Y})$. Let a
decision maker (DM) be located at the output of an observation channel
$Q$, with inputs $X_t$ and outputs $Y_t$.  Let $\mathds{U}$, the action space, be a Borel
subset of some Euclidean space. An {\em admissible policy} $\gamma$ is a
sequence of control functions $\{\gamma_t,\, t\in \Zplus\}$ such
that $\gamma_t$ is measurable with respect to the $\sigma$-algebra
generated by the information variables
\[
I_t=\{Y_{[0,t]},U_{[0,t-1]}\}, \quad t \in \mathds{N}, \quad
  \quad I_0=\{Y_0\}.
\]
where
\begin{equation}
\label{eq_control}
U_t=\gamma_t(I_t),\quad t\in \Zplus
\end{equation}
are the $\mathds{U}$-valued control
actions and 
\[Y_{[0,t]} = \{Y_s,\, 0 \leq s \leq t \}, \quad U_{[0,t-1]} =
  \{U_s, \, 0 \leq s \leq t-1 \}.\]

\noindent We define $\Gamma$ to be the set of all such admissible policies.

The joint distribution of the state, control, and observation
processes is determined by (\ref{eq_control}) and the following
relationships:
\[  \Pr\bigl( (X_0,Y_0)\in B \bigr) =  \int_B P(dx_0)Q(dy_0|x_0), \quad B\in \mathcal{B}(\mathds{X}\times\mathds{Y}), \]
where $P$ is the (prior) distribution of the initial state $X_0$, and
\begin{eqnarray*}
\label{eq_evol}
 \Pr\biggl( (X_t,Y_t)\in B \, \bigg|\, (X,Y,U)_{[0,t-1]}=(x,y,u)_{[0,t-1]} \biggr)
 \\
 = \int_B \mathcal{T}(dx_t|x_{t-1}, u_{t-1})Q(dy_t|x_t),  B\in \mathcal{B}(\mathds{X}\times
\mathds{Y}), t\in \mathds{N},
\end{eqnarray*}
where $\mathcal{T}(\cdot|x,u)$ is a stochastic kernel from $\mathds{X}\times
\mathds{U}$ to $\mathds{X}$ and $Q(\cdot | x)$ is a stochastic kernel from $\mathds{X}$ to $\mathds{Y}$.

We let the objective of the agent be the minimization of the cost for the static or single stage case,
  \begin{align*}
    J(P,Q,\gamma)=E_P^{Q,\gamma}\left[c(X_0,U_0)\right]
  \end{align*}
 \noindent over the set of admissible policies $\gamma\in\Gamma$, where $c:\mathds{X}\times\mathds{U}\to\R$ is a Borel-measurable stage-wise cost function and $E_P^{Q,\gamma}$ denotes the expectation with initial state probability measure $P$ and measurement channel $Q$ under policy $\gamma$. Note that $P\in\mathcal{P}(\mathds{X})$, where we let $\mathcal{P}(\mathds{X})$ denote the set of probability measures on $\mathds{X}$.

For the multi-stage case, we will discuss the discounted cost infinite horizon setting, with the following cost criterion to be minimized.
  \begin{align*}
  &  J_{\beta}(P,Q,\gamma)= E_P^{Q,\gamma}\left[\sum_{t=0}^{\infty} \beta^t c(X_t,U_t),\right]
  \end{align*}
for some $\beta \in (0,1)$.

We define the optimal cost for the single-stage and the discounted infinite horizon as a function of the priors as
\begin{align*}
  J^*(P,Q)&=\inf_{\gamma\in\Gamma} J(P,Q,\gamma),\\
  J_{\beta}^*(P,Q)&=\inf_{\gamma\in\Gamma} J_{\beta}(P,Q,\gamma)
\end{align*}
respectively.

Note that for the discounted infinite horizon case, the cost function is also affected by the transition kernel $\mathcal{T}$. Thus, in the following, we may sometimes use $J_{\beta}(P,\mathcal{T},\gamma)$ and $J^*(P,\mathcal{T})$ instead of $J_{\beta}(P,Q,\gamma)$ and $J^*(P,Q)$ depending on the context.

The focus of the paper will be to address the following problems:

\noindent{\bf Problem P1: Continuity of $J^*(P,Q)$ and $J_\beta^*(P,Q)$  on the space of prior distributions.}
 Suppose ${\{P_n, n \in \mathds{N}\}}$ is a
sequence of priors converging in some sense to $P$. When does $P_n \to P$
imply $J^*(P_n,Q)\to J^*(P,Q)$ or $J_\beta^*(P_n,Q)\to J_\beta^*(P,Q)$?

\noindent{\bf Problem P2: Robustness to incorrect priors} \
 A problem of major practical importance is robustness of an optimal controller to modeling errors. Suppose that an optimal policy is constructed according to a model which is incorrect: how does the application of the control to the true model affect the system performance and does the error decrease to zero as the models become closer to each other? In particular, suppose that $\gamma_n$ is an optimal policy designed for $P_n$, an incorrect prior model for a true model $P$. Is it the case that if $P_n \to P$ then $J(P, Q, \gamma_n) \to  J^*(P,Q)$?

\noindent{\bf Problem P3:  Empirical consistency of optimal costs designed under learned priors.}
 Let $P\in\mathcal{P}(\mathds{X})$ be a fixed initial distribution for some state variable $X$ which is unknown to a decision maker. Suppose that the decision maker learns $\widetilde{P}_n$, its estimate of $P$, from the collection of empirical observations of realizations from i.i.d. random variables $X_1, X_2, \cdots $ distributed with $P$, and applies an optimal policy for the control problem with initial distribution $\widetilde{P}_n$ (i.e., through a {\it plug-in} or {\it separated} controller design). Defining for every (fixed) Borel $B \subset \mathbb{X}$, and $n \in
\mathbb{N}$, the empirical occupation measures
\[
\widetilde{P}_n(B)=
\frac{1}{n}\sum_{i=1}^{n} 1_{\{X_i \in B\}},
\]
do we have that the optimal cost computed for the empirical measures converges to the true optimal cost (for the case where the true $P$ is known) as $n \to \infty$, almost surely?


Here is a summary of the rest of the paper: In the following two subsections we present a literature review and some basic properties with regard to the convergences of probability measures. In Section \ref{ch:Prior}, we study the continuity properties of the optimal cost functions with respect to prior measures under different convergence notions for both the single-stage and multi-stage settings. In Section \ref{robustness}, we use the results from Section \ref{ch:Prior} to obtain robustness results on control policies designed under incorrect prior estimates. Finally, in Section \ref{empiricalLearning}, an application of the results to control systems where the prior measures are estimated through empirical measurements is presented. 

\subsection{Literature review}\label{ch:Background}
The $H_{\infty}$ criterion in robust control \cite{zames1981feedback} \cite{basbern} \cite{zhou1996robust} addresses the problem of robustness of control policies with respect to unmodeled dynamics. The goal in robust control is to design control policies that work sufficiently well for systems with model or disturbance uncertainty. 

Researchers have developed robust controllers through a game formulation, where the minimizer is the controller and the maximizer is the uncertainty, and have established the equivalence with a risk-sensitive cost minimization for a class of systems \cite{jacobson1973optimal} \cite{whittle1991risk} \cite{dupuis2000robust}. Through such a formulation, and by Legendre-type transforms, the operational use of the relative entropy methods have come to the literature; see e.g. \cite[Eqn. (4)]{dai1996connections} or \cite[Eqns. (2)-(3)]{dupuis2000robust}. Here, one selects a nominal system which satisfies a relative entropy bound between the actual measure and the nominal measure, solves a risk sensitive optimal control problem, and this solution provides an upper bound for the original system. As such, a common approach in robust stochastic control has been to model the stochastic disturbance affecting a system and consider perturbations which lead to finite deviations according to the Kullback-Leibler divergence (or relative entropy) between the actual measure and a reference measure, or embed the uncertainty with a penalty term imposed on the cost function under such a distance measure, see e.g. \cite{dupuis2000robust,CharalambousRezaei07, dupuis2000kernel}. Along a similar theme, \cite{lam2016} studies an optimization problem for the expected cost of an uncontrolled i.i.d. model under relative entropy bounds for the probability measures on the state variables; this can be considered to be similar to the setup considered in our paper where the convergence notion is in the relative entropy on the priors, as the considered process is i.i.d. We note here that the relative entropy is a very restrictive {\it distance} measure (note though that this does not define a metric) and in particular, through Pinsker's inequality \cite[Lemma 5.2.8]{GrayInfo}, it is stronger than even total variation which has also been studied in robust stochastic control: \cite{tzortzis2015dynamic} has studied a min-max formulation for robust control where the one-stage transition kernel belongs to a ball under the total variation metric for each state action pair and develops a dynamic programming based solution for both finite and discounted cost infinite horizon problems. Further related work with model uncertainty includes \cite{oksendal2014forward, benavoli2011robust}, with some further work in the economics literature \cite{hansen2001robust, gossner2008entropy}. 

The results are also related to the input estimation problem from finitely many samples, as reviewed in \cite{goeva2016reconstructing} and empirical risk minimization \cite{zhu2015risk}, although in our context, we will investigate robustness in the context of a {\it separated} design: An input model is estimated through empirical data, and an optimal policy is constructed with the assumption that the estimated model is correct (see e.g. \cite{KellyWeakConv} for an application). Can we guarantee empirical (asymptotic) consistency? We will discuss a number of general results in Section \ref{empiricalLearning} and observe that in many situations empirical consistency may not hold. 

On continuity properties in prior measures, \cite{WuVerdu} and \cite{WuVer11} have studied the special case of minimum mean-square estimation, that is with $c(x,u)=(x-u)^2$ for a single-stage problem across additive noisy channels of the form $y=x+w$, and established conditions leading to continuity or upper semi-continuity properties under weak convergence and Wasserstein metrics. 

Related work also includes the recent studies \cite{yuksel12:siam} and \cite{YukselBaker}; \cite{yuksel12:siam} considers various topologies on the sets of observation channels and quantizers in partially observed stochastic control and provides some supporting results, whereas \cite{YukselBaker} presents a number of continuity properties for single-stage stochastic control problems. 

The problems we consider are also related to, in the control-free context, the theory and applications of non-linear filtering with incorrect initial specifications. Here, the problem is to identify conditions on when an incorrectly initialized non-linear filter asymptotically gets corrected with the accumulation of additional measurements; these often require strong ergodicity properties of the Markov process \cite{budhiraja1997exponential, budhiraja1999exponential, chigansky2009intrinsic} or regularity properties (such as absolute continuity) of incorrect prior with respect to the true one and conditions on the measurement processes \cite{Ramon2008discrete}. 

\subsection{Convergence of probability measures and some supporting results}
Three important notions of convergences for sets of probability measures are weak convergence, setwise convergence and convergence under total variation (see, e.g., \cite{Hernandez} and \cite{yuksel12:siam}). For some $N\in\N$ a sequence $\{\mu_n,n\in\N\}$ in $\mathcal{P}(\R^N)$ is said to converge to $\mu\in\mathcal{P}(\R^N)$ \emph{weakly} if
  \begin{align}\label{converge}
    \int_{\R^N}c(x)\mu_n(dx) \to \int_{\R^N}c(x)\mu(dx)
  \end{align}

  \noindent for every continuous and bounded $c:\R^N \to \R$.
$\{\mu_n\}$ is said to converge \emph{setwise} to $\mu\in\mathcal{P}(\R^N)$ if (\ref{converge}) holds
\noindent for all measurable and bounded $c:\R^N \to \R$. Setwise convergence can also be equivalently defined through pointwise convergence on Borel subsets of $\R^N$, that is, $\mu_n(A)\to\mu(A) \text{ for all } A\in\mathcal{B}(\R^N)$.

  For probability measures $\mu,\nu\in\mathcal{P}(\R^N)$, the \emph{total variation} metric is given by
  \begin{align*}
    \|\mu-\nu\|_{TV}&=2\sup_{B\in\mathcal{B}(\R^N)}|\mu(B)-\nu(B)|\\
    &=\sup_{f:\|f\|_\infty \leq 1}\left|\int f(x)\mu(\dd x)-\int f(x)\nu(\dd x)\right|,
  \end{align*}

  \noindent where the supremum is taken over all measurable real $f$ such that \newline $\|f\|_\infty=\sup_{x\in\R^N}|f(x)|\leq 1$. A sequence $\{\mu_n\}$ is said to converge in total variation to $\mu\in\mathcal{P}(\R^N)$ if $\|\mu_n-\mu\|_{TV}\to 0$.

We next introduce the Wasserstein metric. The \emph{Wasserstein metric} of order 1 for two distributions $\mu,\nu\in\mathcal{P}(\mathds{X})$ is defined as
\begin{align*}
  W_1(\mu,\nu) =\inf_{\eta \in \mathcal{H}(\mu,\nu)} \int_{\mathds{X}\times\mathds{X}} \eta(\dd x,\dd y) |x-y|,
\end{align*}
where $\mathcal{H}(\mu,\nu)$ denotes the set of probability measures on $\mathds{X}\times\mathds{X}$ with first marginal $\mu$ and second marginal $\nu$.

A sequence $\{\mu_n\}$ is said to converge in $W_1$ to $\mu \in \mathcal{P}(\R^N)$ if $W_1(\mu_n,\mu)  \to 0$. For compact $\mathds{X}$, the Wasserstein distance of order $1$ metrizes the weak topology on the set of probability measures on $\mathds{X}$ (see \cite[Theorem 6.9]{villani2008optimal}). For non-compact $\mathds{X}$ convergence in the $W_1$ metric implies weak convergence (in particular this metric bounds from above the Bounded-Lipschitz metric \cite[p.109]{villani2008optimal}, which metrizes the weak convergence).

The following result shows the relation between convergence of prior measures and convergence of joint measures of channel and initial distribution given a fixed channel $Q \in \mathcal{Q}$.

The joint measure $PQ$ is induced on $(\mathds{X}\times\mathds{Y}, \mathcal{B}(\mathds{X}\times\mathds{Y}))$, for $Q \in \mathcal{Q}$ and $P \in \mathcal{P}(\mathds{X})$ where $\mathcal{Q}$ is a set of communication channels,
\begin{equation}
 PQ(A)=\int_A Q(\dd y|x)P(\dd x),~A\in\mathcal{B}(\mathds{X}\times\mathds{Y}).\nonumber
\end{equation}

The following is a result that will be used later in the paper. 

 \begin{lemma} \label{lemma:joint} 
 Let $Q:\mathcal{B}(\mathds{Y})\times\mathds{X}\to[0,1]$ be a stochastic kernel on $\mathds{Y}$ given $\mathds{X}$.
    \begin{enumerate}[label=(\roman*)]
      \item Assume that $Q(dy|x)$ is weakly continuous in $x$ in the sense that $\int Q(dy|x) c(y)$ is continuous in $x$ for every continuous and bounded $c$. If $\{P_n,n\in\N\}$ in $\mathcal{P}(\mathds{X})$ converges to $P\in\mathcal{P}(\mathds{X})$ weakly then $P_nQ\to PQ$ weakly.
       \item If $\{P_n,n\in\N\}$ in $\mathcal{P}(\mathds{X})$ converges to $P\in\mathcal{P}(\mathds{X})$ setwise then $P_nQ\to PQ$ setwise.
      \item If $\{P_n,n\in\N\}$ in $\mathcal{P}(\mathds{X})$ converges to $P\in\mathcal{P}(\mathds{X})$ in total variation then $P_nQ\to PQ$ in total variation. In particular, ${\|P_nQ-PQ\|_{TV}=\|P_n-P\|_{TV}}$.
    \end{enumerate}
  \end{lemma}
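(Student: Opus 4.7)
The plan is to verify each claim by reducing the joint-measure integrals to integrals of auxiliary functions against the priors, so that the assumed mode of convergence of $P_n \to P$ can be applied directly.

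I would handle part (iii) first, since it is the cleanest. Given any measurable $f:\mathds{X}\times\mathds{Y}\to\R$ with $\|f\|_\infty\le 1$, set $h(x)=\int f(x,y)\,Q(dy|x)$. Then $h$ is measurable (as an integral against a stochastic kernel) and $\|h\|_\infty\le 1$, and Fubini gives $\int f\,d(P_nQ)-\int f\,d(PQ)=\int h\,d(P_n-P)$. Taking the supremum over such $f$ yields $\|P_nQ-PQ\|_{TV}\le \|P_n-P\|_{TV}$. For the reverse inequality, given $g:\mathds{X}\to\R$ with $\|g\|_\infty\le 1$, lift it to $f(x,y)=g(x)$; the integrals against $P_nQ$ and $PQ$ reduce to those of $g$ against $P_n$ and $P$, so $\|P_nQ-PQ\|_{TV}\ge \|P_n-P\|_{TV}$. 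Equality follows, and total-variation convergence is immediate.

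For part (ii), the same reduction works: writing $P_nQ(A)=\int Q(A_x|x)\,P_n(dx)$ where $A_x=\{y:(x,y)\in A\}$, the integrand is bounded and measurable, so setwise convergence of $P_n\to P$ gives $P_nQ(A)\to PQ(A)$. More generally, if one replaces $1_A$ by a bounded measurable $f$, the function $h(x)=\int f(x,y)\,Q(dy|x)$ is bounded measurable, and setwise convergence yields $\int h\,dP_n\to\int h\,dP$.

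Part (i) is the main obstacle. The same reduction is attempted: for continuous bounded $f:\mathds{X}\times\mathds{Y}\to\R$, put $h(x)=\int f(x,y)\,Q(dy|x)$; boundedness is immediate, so the work is to show $h$ is continuous, since then weak convergence of $P_n\to P$ finishes the argument. To verify continuity, I would fix $x_n\to x_0$ and split
\[
h(x_n)-h(x_0)=\int[f(x_n,y)-f(x_0,y)]\,Q(dy|x_n)+\int f(x_0,y)\,[Q(dy|x_n)-Q(dy|x_0)].
\]
The second term tends to zero by the weak-continuity hypothesis on $Q$ applied to the continuous bounded function $y\mapsto f(x_0,y)$. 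For the first term, the sequence $\{Q(\cdot|x_n)\}$ is tight (being weakly convergent), so given $\eps>0$ one can choose a compact $K\subset\mathds{Y}$ with $Q(K^c|x_n)<\eps$ for all $n$; on the compact set $\{x_0,x_1,\ldots\}\times K$ the function $f$ is uniformly continuous, hence $\sup_{y\in K}|f(x_n,y)-f(x_0,y)|\to 0$, and the first term vanishes in the limit. (Equivalently, one can invoke Serfozo's continuous-convergence extension of the Portmanteau theorem.) Once $h$ is continuous and bounded, $\int h\,dP_n\to\int h\,dP$ gives weak convergence of $P_nQ\to PQ$.
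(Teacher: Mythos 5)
Your proposal is correct and takes essentially the same route as the paper: every part reduces to integrating the marginal $h(x)=\int f(x,y)\,Q(dy|x)$ against $P_n$ and $P$, and the reverse inequality in (iii) obtained by lifting functions of $x$ alone is just the functional form of the paper's argument via sets of the form $B\times\mathds{Y}$. The only divergence is in (i), where you verify the continuity of $h$ by hand using tightness of $\{Q(\cdot|x_n)\}$ and uniform continuity of $f$ on compacts rather than citing the Serfozo/Langen continuous-convergence theorem as the paper does --- a valid self-contained substitute, granting that weak convergence on $\mathds{Y}\subset\R^m$ yields a uniformly tight sequence (true when $\mathds{Y}$ is Polish, e.g.\ closed; this is precisely the technicality the cited theorem lets one sidestep).
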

\begin{proof}
    \begin{enumerate}[label=(\roman*)]
      \item 
The sequence $\{f_n\}$ is said to converge continuously to $f$ when $\lim_{n \to \infty} f_n(x_n)=f(x)$ for any $x_n \to x$.

 First we show that under the given assumptions, and with any continuous and bounded $c:\mathds{X}\times\mathds{Y}\to\R$, $\int_\mathds{Y}c(x,y)Q(\dd y|x)$ is continuous in $x$.  Let $x_n\to x$ in $\mathds{X}$, then $c(x_n,y)\to c(x,y)$ for all $y\in\mathds{Y}$ by continuity of $c$. In particular, with
$c_n(y):=c(x_n,y)$ and $c(y):=c(x,y)$, it follows that as $y_n \to y$, $c_n(y_n) \to c(y)$; that is, $c_n(\cdot)$ {\it continuously converges} to $c(\cdot)$ as $(x_n,y_n) \to (x,y)$ by \cite[Theorem~3.3]{serfozo82} or  \cite[Theorem~3.5]{Langen81}, it follows that the continuity property
\[\lim_{n\to\infty}\int_\mathds{Y}c(x_n,y)Q(\dd y|x_n)=\int_\mathds{Y}c(x,y)Q(\dd y|x)\]
holds.
  
      Then, we have,
      \begin{align}
        & \lim_{n\to\infty}\int_{\mathds{X}\times\mathds{Y}}c(x,y)P_nQ(\dd x, \dd y) \nonumber \\
        &= \lim_{n\to\infty}\int_\mathds{X}\left(\int_\mathds{Y}c(x,y)Q(\dd y|x)\right)P_n(\dd x) \nonumber\\
        &=\int_\mathds{X}\left(\int_\mathds{Y}c(x,y)Q(\dd y|x)\right)P(\dd x) \nonumber \\
        &=\int_{\mathds{X}\times\mathds{Y}}c(x,y)PQ(\dd x, \dd y) \label{serfozoWeakC}
      \end{align}
      where we have applied Fubini's theorem and the fact that $\int_\mathds{Y}c(x,y)Q(\dd y|x)$ is a bounded and continuous function in $x$ under the given assumptions.


      \item Given any measurable and bounded $c:\mathds{X}\times\mathds{Y}\to\R$ we have
      \begin{align*}
        \lim_{n\to\infty}\int_{\mathds{X}\times\mathds{Y}}c(x,y)P_nQ(\dd x, \dd y)&= \lim_{n\to\infty}\int_\mathds{X}\left(\int_\mathds{Y}c(x,y)Q(\dd y|x)\right)P_n(\dd x)\\
        &=\int_\mathds{X}\left(\int_\mathds{Y}c(x,y)Q(\dd y|x)\right)P(\dd x)\\
        &=\int_{\mathds{X}\times\mathds{Y}}c(x,y)PQ(\dd x, \dd y)
      \end{align*}
      where we have applied Fubini's theorem and the fact that $\int_\mathds{Y}c(x,y)Q(\dd y|x)$ is a bounded and measurable function of $x$ under the given assumptions (see Proposition 7.29 in \cite{bertsekas78}).
       \item {Observe the following,
      \begin{align*}
&        \|P_nQ-PQ\|_{TV} \\
        &=\sup_{f:\|f\|_\infty \leq 1}\left|\int_{\mathds{X}\times\mathds{Y}} f(x,y)P_nQ(\dd x, \dd y)-\int_{\mathds{X}\times\mathds{Y}} f(x,y)PQ(\dd x, \dd y)\right|, \\
        &=\sup_{f:\|f\|_\infty \leq 1}\left|\int_{\mathds{X}}\left(\int_\mathds{Y} f(x,y)Q(\dd y | x)\right)P_n(\dd x) - \int_{\mathds{X}}\left(\int_\mathds{Y} f(x,y)Q(\dd y | x)\right)P(\dd x)\right| \\
        &\leq \sup_{\bar{f}:\|\bar{f}\|_\infty \leq 1}\left|\int_{\mathds{X}}\bar{f}(x) P_n(\dd x) - \int_{\mathds{X}}\bar{f}(x)P(\dd x)\right| \\     
        &=  \|P_n-P\|_{TV}
      \end{align*}
      where we used the fact that $\left|\bar{f}(x)\right| := \left|\int_\mathds{Y} f(x,y)Q(\dd y | x)\right| \leq 1$ for every $x \in \mathds{X}$. \\}      

      In addition, as $P(B)=PQ(B\times\mathds{Y})$ for any Borel $B$ and any $P\in\mathcal{P}(\mathds{X})$, then we have,
      \begin{align*}
        \|P_n-P\|_{TV}&=2\sup_{B\in\mathcal{B}(\mathds{X})}\left|P_n(B)-P(B)\right|\\
        &=2\sup_{B\in\mathcal{B}(\mathds{X})}\left|P_nQ(B\times\mathds{Y})-PQ(B\times\mathds{Y})\right|\\
        &\leq 2\sup_{A\in\mathcal{B}(\mathds{X}\times\mathds{Y})}\left|P_nQ(A)-PQ(A)\right|\\
        & =\|P_nQ-PQ\|_{TV}.
      \end{align*}
      Therefore ${\|P_nQ-PQ\|_{TV}=\|P_n-P\|_{TV}}\to 0.$ 
         \end{enumerate}
  \end{proof}

\section{Continuity of Optimal Cost with respect to Convergence of Prior Probability Measures}\label{ch:Prior}
In this section, we study the continuity in the priors.

\subsection{Single-stage setup}
 First, we consider a single-stage setup. The initial measure will be denoted by  $P\in\mathcal{P}(\mathds{X})$, the measurement channel will be $Q\in\mathcal{Q}$ and the control action will be from the admissible control policies, $\Gamma$. Here, $\Gamma$ is just the set of functions from $\mathds{Y}$ to $\mathds{U}$ that are measurable with respect to the $\sigma$-algebra generated by $I_0=\{Y_0\}$.

The optimal single stage cost will be defined as,
 \begin{align*}
    J^*(P,Q)&=\inf_{\gamma\in\Gamma} \int_{\mathds{X}\times\mathds{Y}} c(x,\gamma(y))Q(\dd y|x)P(\dd x).
  \end{align*}

  The following is a useful result, to be used throughout the paper. 
  
  \begin{theorem}\cite[Theorem 3.1]{yuksel12:siam}
    Let $c$ be continuous in $u$ for every $x$, and $\mathds{U}$ are compact. Then, for the static problem, there exists an optimal control policy for any channel $Q$.
  \end{theorem}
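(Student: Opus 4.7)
My plan is to reduce the optimization over admissible policies $\gamma\in\Gamma$ to a pointwise optimization in $u$ for each fixed $y$, and then to extract a measurable selector. First, I would disintegrate the joint measure $PQ$ on $\mathds{X}\times\mathds{Y}$: since $\mathds{X},\mathds{Y}$ are Borel subsets of Euclidean spaces (hence standard Borel), a regular conditional probability $P_{X\mid Y}(\,\cdot\,|y)$ and a marginal $P_Y$ exist. Applying Fubini rewrites the cost as
\begin{align*}
J(P,Q,\gamma)=\int_{\mathds{Y}} g(y,\gamma(y))\,P_Y(dy), \qquad g(y,u):=\int_{\mathds{X}} c(x,u)\,P_{X\mid Y}(dx\mid y),
\end{align*}
so that
\begin{align*}
\inf_{\gamma\in\Gamma} J(P,Q,\gamma)\;\geq\;\int_{\mathds{Y}}\inf_{u\in\mathds{U}} g(y,u)\,P_Y(dy),
\end{align*}
with equality provided a measurable selector $\gamma^{\ast}(y)\in\argmin_{u\in\mathds{U}} g(y,u)$ exists.

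Next, I would establish that $g$ is a Carath\'eodory-type integrand: jointly Borel in $(y,u)$ and continuous in $u$ for each $y$. Measurability of $y\mapsto g(y,u)$ for fixed $u$ follows from the stochastic-kernel property of $P_{X\mid Y}$ together with Borel measurability of $c(\,\cdot\,,u)$. For continuity of $u\mapsto g(y,u)$ for fixed $y$, I would take $u_n\to u$; by the standing assumption $c(x,u_n)\to c(x,u)$ pointwise in $x$, and compactness of $\mathds{U}$ together with continuity of $c(x,\cdot)$ yields local uniform control of $c$ in the $u$-variable, so that dominated convergence gives $g(y,u_n)\to g(y,u)$ (this is where the statement implicitly presumes sufficient integrability of $c$, e.g., boundedness or an integrable envelope).

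Finally, the set-valued map $y\mapsto\argmin_{u\in\mathds{U}} g(y,u)$ is nonempty- and compact-valued because $\mathds{U}$ is compact and $g(y,\cdot)$ is continuous. I would invoke the Kuratowski--Ryll-Nardzewski measurable selection theorem, applied to $g$ viewed as a normal integrand, to obtain a Borel selector $\gamma^{\ast}:\mathds{Y}\to\mathds{U}$ with $g(y,\gamma^{\ast}(y))=\inf_{u} g(y,u)$ for $P_Y$-a.e.\ $y$. Since $\gamma^{\ast}$ is $\sigma(Y_0)$-measurable, it lies in $\Gamma$, and substituting back into the cost shows that it attains the infimum. The main obstacle I anticipate is the simultaneous verification of (i) the integrability needed for dominated convergence to yield continuity of $g(y,\cdot)$, and (ii) joint Borel measurability of $g$ sufficient to make the argmin correspondence measurable; both are routine once $c$ is treated as a Carath\'eodory function with a suitable dominating envelope, but they are the technical heart of the argument.
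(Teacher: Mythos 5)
The paper does not actually prove this statement: it is imported verbatim from Theorem 3.1 of \cite{yuksel12:siam}, so there is no in-paper proof to compare against. Your argument --- disintegrating $PQ$ into $P_Y$ and $P_{X\mid Y}$, reducing the policy optimization to pointwise minimization of the Carath\'eodory integrand $g(y,u)=\int_{\mathds{X}} c(x,u)\,P_{X\mid Y}(dx\mid y)$ over the compact set $\mathds{U}$, and extracting a Borel selector of the argmin correspondence --- is exactly the standard measurable-selection proof used in that reference, and it is correct; the integrability hypothesis you flag (a dominating envelope for $c$, e.g.\ boundedness) is indeed needed for the dominated-convergence step and is consistent with the boundedness assumptions the paper imposes on $c$ everywhere else.
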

For the rest of the paper, we will use optimal policies whenever they exist otherwise we will use $\epsilon$-optimal policies.

\subsubsection{Weak convergence}\label{section:weakcont}
We will first observe that $J^*(P,Q)$ is not always continuous in initial measures under weak convergence, then we will present an upper semi-continuity result and some sufficiency continuity conditions for weak convergence of prior measures.

\begin{theorem}\label{weak_count}
    Let a channel $Q\in\mathcal{Q}$ be given. $J^*(P,Q)$ is not necessarily continuous in $P$ under weak convergence. This holds even when $\mathds{X}$, $\mathds{Y}$, and $\mathds{U}$ are compact and when $c$ is bounded and continuous on $\mathds{X}\times \mathds{U}$.
  \end{theorem}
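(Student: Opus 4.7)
The plan is to give an explicit counterexample rather than argue abstractly. By Lemma~\ref{lemma:joint}(i), a counterexample must use a channel $Q$ that is not weakly continuous in $x$, since otherwise $P_n\to P$ weakly would imply $P_nQ\to PQ$ weakly, and for any fixed measurable $\gamma$ the cost $J(P,Q,\gamma)=\int c(x,\gamma(y))\,PQ(\dd x,\dd y)$ would converge. So the construction will hinge on a channel with a jump in $x$ together with priors whose atoms straddle the jump but whose weak limit collapses entirely to the uninformative side.

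Concretely, I would take $\mathds{X}=\mathds{U}=[0,1]$, $\mathds{Y}=\{0,1\}$, $c(x,u)=(x-u)^2$, and the deterministic channel $Q(y=1\mid x)=1_{\{x>1/2\}}$, which is discontinuous at $x=1/2$. For the priors, set
\[
P_n=\tfrac14\delta_0+\tfrac14\delta_{1/2-1/n}+\tfrac12\delta_{1/2+1/n},\qquad P=\tfrac14\delta_0+\tfrac34\delta_{1/2}.
\]
Weak convergence $P_n\to P$ follows immediately from continuity of test functions at $1/2$. Under $P_n$, the channel separates the atom at $1/2+1/n$ (which produces $y=1$) from the atoms at $0$ and $1/2-1/n$ (which both produce $y=0$); the best policy sets $\gamma(1)=1/2+1/n$, wiping out the $y=1$ branch entirely, and sets $\gamma(0)$ to the conditional mean on the $y=0$ branch, yielding $J^*(P_n,Q)=\tfrac12\bigl(\tfrac14-\tfrac1{2n}\bigr)^2\to \tfrac1{32}$. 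Under $P$, however, every support point satisfies $x\le 1/2$, so $y=0$ almost surely and the controller has no usable observation; minimizing $\tfrac14 u^2+\tfrac34(1/2-u)^2$ over $u\in[0,1]$ yields $u=3/8$ and $J^*(P,Q)=\tfrac3{64}$. Since $\tfrac1{32}\neq\tfrac3{64}$, continuity fails.

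The only delicate step is calibrating the three weights in $P_n$ so that two things hold simultaneously: in the weak limit the atom at $1/2-1/n$ must coalesce with the atom at $1/2$ and thereby become indistinguishable by the channel from the $\delta_0$ mass; and the optimal uninformed response $u=3/8$ for $P$ must strictly underperform the informed response $u=\tfrac14-\tfrac1{2n}$ available on the $y=0$ branch of $P_n$. Once the weights are chosen correctly, the hypotheses of the theorem — compactness of $\mathds{X},\mathds{Y},\mathds{U}$ and boundedness plus continuity of $c$ on $\mathds{X}\times\mathds{U}$ — are clearly satisfied, so the construction completes the claim.
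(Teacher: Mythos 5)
Your counterexample is correct and is essentially the same construction as the paper's (the paper's Example~\ref{weak_count2} uses a two-cell quantizer with atoms straddling the threshold; your three-atom variant with $J^*(P_n,Q)\to\tfrac1{32}\neq\tfrac3{64}=J^*(P,Q)$ checks out, noting only that the atoms lie in $[0,1]$ for $n\ge 2$). However, your opening assertion that a counterexample \emph{must} use a channel that fails to be weakly continuous in $x$ is false: the paper's primary counterexample uses $Q(\cdot|x)=\tfrac12\delta_x+\tfrac12\delta_0$, which \emph{is} weakly continuous in $x$, and discontinuity of $J^*$ still occurs. The gap in your reasoning is twofold: for a merely measurable policy $\gamma$, the integrand $c(x,\gamma(y))$ is not continuous, so weak convergence $P_nQ\to PQ$ from Lemma~\ref{lemma:joint}(i) does not give $J(P_n,Q,\gamma)\to J(P,Q,\gamma)$; and even if it did for each fixed $\gamma$, pointwise convergence of the costs only yields upper semicontinuity of the infimum (which is exactly what the paper proves under weak continuity of $Q$), not continuity. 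Since this remark is purely motivational and your explicit construction does not rely on it, the proof itself stands.
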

\begin{proof}
We prove the result with a counter example. Let $\mathds{X}=\mathds{U}=\mathds{Y}=[0,1]$ and $c(x,u)=(x-u)^2$. The optimal policy for this cost function is $\gamma^*(y)=E[X|Y=y]$. The measurement channel is given in the following form,
\[Q(\cdot|x)=\frac{1}{2}\delta_x(\cdot)+\frac{1}{2}\delta_0(\cdot).\]
Let the prior distributions $P$ and $P_n$ are given by
\begin{align*}
&P=\frac{1}{2}\delta_0+\frac{1}{2}\delta_1\\
&P_n=\frac{1}{2}\delta_{\frac{1}{n}}+\frac{1}{2}\delta_1
\end{align*}
Note that $P_n \to P$ weakly as $n \to \infty$. Now, we calculate the optimal control actions for prior model $P$.
\begin{align*}
&\gamma^*(0)=E[X|Y=0]=Pr(X=1|Y=0)=Pr(Y=0|X=1)\frac{Pr(X=1)}{Pr(Y=0)}=\frac{1}{3}\\
&\gamma^*(1)=E[X|Y=1]=Pr(X=1|Y=1)=Pr(Y=1|X=1)\frac{Pr(X=1)}{Pr(Y=1)}=1.
\end{align*} 
Thus, the optimal cost for $P$ can be calculated as
\begin{align*}
J^*(P,Q)&=E[(X-\gamma^*(Y))^2]\\
&=Pr(X=0)Pr(Y=0|X=0)(0-\frac{1}{3})^2+Pr(X=1)Pr(Y=0|X=1)(1-\frac{1}{3})^2\\
&\quad+Pr(X=0)Pr(Y=1|X=0)(0-1)^2+Pr(X=1)Pr(Y=1|X=1)(1-1)^2\\
&=\frac{1}{18}+\frac{1}{9}=\frac{1}{6}
\end{align*}
The optimal control actions for $P_n$ can be calculated as
\begin{align*}
\gamma_n^*(0)&=E_n[X|Y=0]=\frac{1}{n}Pr(X=1/n|Y=0)+1Pr(X=1|Y=0)\\
&=\frac{1}{n}\frac{Pr(Y=0|X=1/n)Pr(X=1/n)}{Pr(Y=0)}+\frac{Pr(Y=0|X=1)Pr(X=1)}{Pr(Y=0)}\\
&=\frac{n+1}{2n}\\
\gamma_n^*(1)&=E_n[X|Y=1]=\frac{1}{n}Pr(X=1/n|Y=1)+1Pr(X=1|Y=1)\\
&=\frac{1}{n}\frac{Pr(Y=1|X=1/n)Pr(X=1/n)}{Pr(Y=1)}+\frac{Pr(Y=1|X=1)Pr(X=1)}{Pr(Y=1)}\\
&=1\\
\gamma_n^*(1/n)&=E_n[X|Y=1/n]=\frac{1}{n}Pr(X=1/n|Y=1/n)+1Pr(X=1|Y=1/n)\\
&=\frac{1}{n}\frac{Pr(Y=1/n|X=1/n)Pr(X=1/n)}{Pr(Y=1/n)}+\frac{Pr(Y=1/n|X=1)Pr(X=1)}{Pr(Y=1/n)}\\
&=\frac{1}{n}.
\end{align*}
Using the optimal control actions, we can calculate the optimal cost for $P_n$.
\begin{align*}
J^*(P_n,Q)&=E_n[(X-\gamma^*(Y))^2]\\
&=Pr(X=0)Pr(Y=0|X=0)(0-\frac{n+1}{2n})^2 \\
& \quad \quad \quad \quad +Pr(X=1)Pr(Y=0|X=1)(1-\frac{n-1}{2n})^2\\
&=\frac{1}{2}(\frac{n+1}{2n})^2+\frac{1}{4}(\frac{n-1}{2n})^2=\frac{3n^2+2n+3}{16n^2}.
\end{align*} 
We can see that as $n \to \infty$, $J^*(P_n,Q)\to \frac{3}{16} \neq \frac{1}{6}=J^*(P,Q)$.
\end{proof}

\adk{The channel model in the example is a channel which either sends full information across the channel without error, or it provides no information and this is an infinite-capacity erasure channel: This channel has a practical significance as in practice we may have package drops during the transmission of the state variable which causes controller not to get any data from the system at random times. 

 We now provide another example which may have further practical significance. Quantizer channels are used often in practice (see \cite[Section 5]{yuksel12:siam}). In the following example, we show that continuity may not hold for quantizer channels either.

Before the example we define quantizers: An $M$-cell vector quantizer, $q$, is a (Borel) measurable mapping
from $\mathds{X}=\mathds{R}^n$ to the finite set $\{1, 2,...,M\}$, characterized by a measurable partition
$\{B1, B2,...,BM \}$ such that $B_i = \{x : q(x) = i\}$ for  $i = 1,\dots,M$. The $B_i$ are called
the cells (or bins) of $q$

A quantizer $q$ with cells $\{B_1,\dots,B_M \}$, however, can also be characterized as a
stochastic kernel $Q$ from $\mathds{X}$ to $\{1,...,M\}$ defined by
\begin{align*}
Q(i|x)=1_{x\in B_i},\qquad i = 1,\dots,M,
\end{align*}
so that $q(x) = \sum_{i=1}^M Q(i|x)$.

\begin{exmp}\label{weak_count2}
Assume the prior distributions are given by
\begin{align*}
&P(\cdot)=\frac{1}{2}\delta_{\frac{1}{2}}(\cdot)+\frac{1}{2}\delta_{1}(\cdot)\\
&P_n(\cdot)=\frac{1}{2}\delta_{\frac{1}{2}-\frac{1}{n}}(\cdot)+\frac{1}{2}\delta_{1}(\cdot)
\end{align*} 
and the channel is a quantizer with $B_1=[0,\frac{1}{2})$ and $B_2= [\frac{1}{2},1]$ where the range of the quantizer is $\{0,1\}$ i.e.
\begin{align*}
Q(0|x)=1_{x\in[0,\frac{1}{2})},\qquad Q(1|x)=1_{x\in[\frac{1}{2},1]}.
\end{align*}
If the cost function is given by $c(x,u)=(x-u)^2$ then the optimal controls are
\begin{align*}
&\gamma^*(0)=0\qquad \gamma^*(1)=3/4\\
&\gamma_n^*(0)=\frac{1}{2}-\frac{1}{n}\qquad \gamma_n^*(1)=1
\end{align*}
under this setup the value functions can be calculated as
\begin{align*}
&J^*(P_n,Q)=0\\
&J^*(P,Q)=P(X=1/2)(1/2-3/4)^2+P(X=1)(1-3/4)^2=1/16
\end{align*}
which shows that the optimal cost is not continuous under the weak convergence of priors when the channel is a quantization channel.
\hfill $\diamond$
\end{exmp}

}

Now we show that the optimal cost is upper semi-continuous under weak convergence of priors. The next lemma, building on \cite{yuksel12:siam}, shows that the optimal cost is unchanged when $\gamma$ is restricted to the class of continuous policies. A brief proof is presented in the appendix. 

  \begin{lemma}\label{lemma:lusin}
    Let $\mu$ be an arbitrary probability measure on $(\mathds{X}\times\mathds{Y},\mathcal{B}(\mathds{X}\times\mathds{Y}))$ and let $\mathcal{C}$ be the set of continuous functions from $\mathds{Y}$ to $\mathds{U}$. If $\mathds{U}$ is convex and $c(x,u)$ is non-negative, measurable and bounded in $\mathds{X}\times\mathds{U}$ then

    \begin{align*}
      \inf_{\gamma\in\Gamma} \int c(x,\gamma(y))\mu(\dd x, \dd y)=\inf_{\gamma\in\mathcal{C}} \int c(x,\gamma(y))\mu(\dd x, \dd y)
    \end{align*}
  \end{lemma}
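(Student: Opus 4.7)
The inequality $\inf_{\gamma\in\Gamma}\int c(x,\gamma(y))\mu(\dd x,\dd y)\le \inf_{\gamma\in\mathcal{C}}\int c(x,\gamma(y))\mu(\dd x,\dd y)$ is immediate since $\mathcal{C}\subset\Gamma$. The plan is therefore to show the reverse inequality by approximating any measurable policy by a continuous one, up to arbitrarily small loss in expected cost.

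Fix $\gamma\in\Gamma$ and $\epsilon>0$. Let $\mu_\mathds{Y}$ denote the $\mathds{Y}$-marginal of $\mu$. Since $\mathds{U}\subset\R^k$ is a separable metric space and $\gamma:\mathds{Y}\to\mathds{U}$ is Borel measurable, Lusin's theorem applied to the Borel probability measure $\mu_\mathds{Y}$ on $\mathds{Y}$ (which is a subset of Euclidean space, hence a Polish space) yields a closed set $K_\epsilon\subset\mathds{Y}$ with $\mu_\mathds{Y}(\mathds{Y}\setminus K_\epsilon)<\epsilon$ such that $\gamma|_{K_\epsilon}$ is continuous. The next step is to extend this restriction to a continuous function $\tilde\gamma_\epsilon:\mathds{Y}\to\mathds{U}$. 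Since $\mathds{U}$ is convex (a convex subset of a Euclidean space) and $K_\epsilon$ is closed in the metric (hence normal) space $\mathds{Y}$, Dugundji's extension theorem (the vector-valued Tietze extension) produces a continuous $\tilde\gamma_\epsilon:\mathds{Y}\to\mathds{U}$ that agrees with $\gamma$ on $K_\epsilon$; convexity of $\mathds{U}$ is exactly what is needed to guarantee that the extension stays inside the action space.

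Since $\tilde\gamma_\epsilon\in\mathcal{C}$, we can estimate the cost difference:
\begin{align*}
\int c(x,\tilde\gamma_\epsilon(y))\mu(\dd x,\dd y) - \int c(x,\gamma(y))\mu(\dd x,\dd y) &= \int_{\mathds{X}\times(\mathds{Y}\setminus K_\epsilon)}\bigl(c(x,\tilde\gamma_\epsilon(y))-c(x,\gamma(y))\bigr)\mu(\dd x,\dd y) \\
&\le 2\|c\|_\infty \,\mu_\mathds{Y}(\mathds{Y}\setminus K_\epsilon) \\
&\le 2\|c\|_\infty\,\epsilon,
\end{align*}
using non-negativity and boundedness of $c$. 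Hence
\[
\inf_{\gamma'\in\mathcal{C}}\int c(x,\gamma'(y))\mu(\dd x,\dd y) \le \int c(x,\gamma(y))\mu(\dd x,\dd y) + 2\|c\|_\infty\epsilon,
\]
and taking the infimum over $\gamma\in\Gamma$ and then letting $\epsilon\downarrow 0$ gives the reverse inequality.

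The main technical point is the joint use of Lusin's theorem (to obtain continuity on a large closed set) together with a vector-valued extension theorem that respects the constraint $\tilde\gamma_\epsilon(y)\in\mathds{U}$; convexity of $\mathds{U}$ is essential here, as Tietze's theorem for real-valued components alone would not automatically keep the extended vector inside $\mathds{U}$. The boundedness of $c$ is what allows the $(\mathds{Y}\setminus K_\epsilon)$-contribution to be controlled uniformly in $\gamma$.
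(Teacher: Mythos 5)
Your proof is correct and follows essentially the same route as the paper's: apply Lusin's theorem to replace a measurable policy by a continuous one agreeing with it off a set of $\mu_\mathds{Y}$-measure less than $\epsilon$, then bound the resulting cost difference by a constant times $\epsilon$ using boundedness of $c$. You are in fact more explicit than the paper about the step where convexity of $\mathds{U}$ enters (the Dugundji/Tietze extension keeping the extended policy inside the action space), which the paper's citation of Rudin's form of Lusin's theorem leaves implicit.
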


  We now show that the optimal cost $J^*(P,Q)$ is upper semi-continuous under weak convergence on the space of initial distributions. The following theorem is related to Theorem 3 of Wu and Verd\'u \cite{WuVerdu}.

  \begin{theorem}
    Let a channel $Q\in\mathcal{Q}$ be given. If $Q(dy|x)$ is weakly continuous in $x$ in the sense that $\int Q(dy|x) c(y)$ is continuous in $x$ for every continuous and bounded $c$, and if $\mathds{U}$ is convex and $c(x,u)$ is non-negative, continuous and bounded in $\mathds{X}\times\mathds{U}$ then $J^*(P,Q)$ is upper semi-continuous on $\mathcal{P}(\mathds{X})$ under weak convergence.
  \end{theorem}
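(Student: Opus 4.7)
The plan is to establish upper semi-continuity by fixing a near-optimal continuous policy for the target prior $P$, applying that same policy to the perturbed priors $P_n$, and passing to the limit via the joint-measure convergence already obtained in Lemma \ref{lemma:joint}(i).

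More concretely, fix $\epsilon > 0$. By Lemma \ref{lemma:lusin} (invoked with $\mu = PQ$ and using that $c$ is non-negative, measurable and bounded in $\mathds{X}\times\mathds{U}$ and $\mathds{U}$ is convex), the infimum over admissible policies $\Gamma$ coincides with the infimum over continuous policies $\mathcal{C}$. Hence I can choose a continuous policy $\gamma_\epsilon \in \mathcal{C}$ with
\[
\int c(x,\gamma_\epsilon(y))\, PQ(\dd x, \dd y) \leq J^*(P,Q) + \epsilon.
\]
Since $\gamma_\epsilon$ is admissible for every prior, one has $J^*(P_n,Q) \leq \int c(x,\gamma_\epsilon(y))\, P_nQ(\dd x, \dd y)$ for each $n$.

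Next I would observe that the function $(x,y) \mapsto c(x,\gamma_\epsilon(y))$ is bounded and continuous on $\mathds{X}\times\mathds{Y}$ because $c$ is bounded and continuous on $\mathds{X}\times\mathds{U}$ and $\gamma_\epsilon$ is continuous. By Lemma \ref{lemma:joint}(i), the weak continuity of $Q(\dd y|x)$ together with $P_n \to P$ weakly implies $P_nQ \to PQ$ weakly. Therefore
\[
\lim_{n\to\infty}\int c(x,\gamma_\epsilon(y))\, P_nQ(\dd x, \dd y) = \int c(x,\gamma_\epsilon(y))\, PQ(\dd x, \dd y) \leq J^*(P,Q) + \epsilon.
\]
Combining this with the inequality above yields $\limsup_{n\to\infty} J^*(P_n,Q) \leq J^*(P,Q) + \epsilon$. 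Since $\epsilon > 0$ was arbitrary, $\limsup_{n\to\infty} J^*(P_n,Q) \leq J^*(P,Q)$, which is upper semi-continuity.

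The only potentially delicate point is the invocation of Lemma \ref{lemma:lusin} to ensure a continuous $\epsilon$-optimal policy exists; but this lemma is stated for an arbitrary joint probability measure $\mu$ on $\mathds{X}\times\mathds{Y}$, so it applies directly to $\mu = PQ$. No compactness of $\mathds{X}$ or $\mathds{Y}$ is needed here because I only use weak convergence of $P_nQ$ against the single bounded continuous test function $c(x,\gamma_\epsilon(y))$, not tightness-type arguments.
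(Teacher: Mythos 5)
Your proposal is correct and is essentially the paper's own argument: the paper restricts to continuous policies via Lemma~\ref{lemma:lusin}, exchanges $\limsup$ and $\inf$, and passes to the limit using $P_nQ \to PQ$ weakly from Lemma~\ref{lemma:joint}(i), which is exactly what your $\epsilon$-optimal continuous policy $\gamma_\epsilon$ accomplishes in unpacked form. The only cosmetic difference is that you invoke Lemma~\ref{lemma:lusin} only at the limit measure $PQ$ (using $\mathcal{C}\subset\Gamma$ for the $P_n$ side), whereas the paper applies it to each $P_nQ$ as well; both are valid.
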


  \begin{proof}
    Recall from the statement of Lemma~\ref{lemma:lusin} that $\mathcal{C}$ denotes the set of all continuous functions from $\mathds{Y}$ to $\mathds{U}$. Let $P_n\to P$ in $\mathcal{P}(\mathds{X})$ weakly. Then 
    \begin{align*}
      \limsup_{n\to\infty}\inf_{\gamma\in\Gamma} \int_{\mathds{X}\times\mathds{Y}} & c(x,\gamma(y))P_nQ(\dd x,\dd y)\\
      &=\limsup_{n\to\infty}\inf_{\gamma\in\mathcal{C}} \int_{\mathds{X}\times\mathds{Y}} c(x,\gamma(y))P_nQ(\dd x,\dd y)\\
      &\leq\inf_{\gamma\in\mathcal{C}} \limsup_{n\to\infty} \int_{\mathds{X}\times\mathds{Y}} c(x,\gamma(y))P_nQ(\dd x,\dd y)\\
      &=\inf_{\gamma\in\mathcal{C}} \int_{\mathds{X}\times\mathds{Y}} c(x,\gamma(y))PQ(\dd x,\dd y)\\
      &=\inf_{\gamma\in\Gamma} \int_{\mathds{X}\times\mathds{Y}} c(x,\gamma(y))PQ(\dd x,\dd y),
    \end{align*}

    \noindent where the first and last equality rely on Lemma~\ref{lemma:lusin} and the second-to-last equality holds as $c(x,\gamma(y))$ is bounded and continuous for $\gamma\in\mathcal{C}$.
  \end{proof}

The following result shows us that if we put some continuity restrictions on the measurement channel then we can guarantee the continuity of optimal single stage cost function under weak convergence of prior measures.
 \begin{assumption}\label{assumptionTV}
$Q$ is continuous in total variation in the sense that as $x_n \to x$, $\|Q(dy|x_n) - Q(dy|x) \|_{TV} \to 0$.
\end{assumption}

  \begin{theorem}\label{thm:weaksuff}
  Let a cost function $c:\mathds{X}\times\mathds{U}\to \R$ be given and let Assumption \ref{assumptionTV} holds, $c(x,u)$ be bounded and continuous on $\mathds{X}\times\mathds{U}$ and $\mathds{U}$ be compact.  If $P_n\to P$ weakly then $J^*(P_n,Q)\to J^*(P,Q)$.
 \end{theorem}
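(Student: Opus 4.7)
The plan is to rewrite the cost by integrating $y$ out first and then exploit Assumption \ref{assumptionTV} to obtain enough uniform regularity over the policy class to pass the weak limit through the infimum. For each admissible $\gamma$, define
\begin{align*}
h_\gamma(x) \ :=\ \int_{\mathds{Y}} c(x,\gamma(y))\, Q(dy|x).
\end{align*}
By Fubini, $J^*(P_n,Q)=\inf_{\gamma\in\Gamma}\int_{\mathds{X}} h_\gamma\, dP_n$, and likewise for $P$, so the claim reduces to showing
\begin{align*}
\sup_{\gamma\in\Gamma}\left|\int_{\mathds{X}} h_\gamma\, dP_n - \int_{\mathds{X}} h_\gamma\, dP\right|\to 0 \qquad \text{as } n\to\infty,
\end{align*}
after which $|J^*(P_n,Q)-J^*(P,Q)|\leq \sup_{\gamma}\left|\int h_\gamma\, dP_n-\int h_\gamma\, dP\right|$ closes the argument.

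The first step is to show that $\{h_\gamma\}_{\gamma\in\Gamma}$ is uniformly bounded by $\|c\|_\infty$ and equicontinuous on $\mathds{X}$. For $x_n\to x$,
\begin{align*}
|h_\gamma(x_n)-h_\gamma(x)| \leq \sup_{u\in\mathds{U}}|c(x_n,u)-c(x,u)| + \|c\|_\infty\,\|Q(\cdot|x_n)-Q(\cdot|x)\|_{TV}.
\end{align*}
The second summand vanishes by Assumption \ref{assumptionTV}, and the first by joint continuity of $c$ together with compactness of $\mathds{U}$ via a standard finite-cover argument (cover $\mathds{U}$ by finitely many small balls on which $c(x,\cdot)$ oscillates little and then take $x_n$ close enough to $x$ to control the centers). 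Neither bound depends on $\gamma$, so equicontinuity holds at every point of $\mathds{X}$, uniformly over $\gamma$.

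Next I pass to uniform control across $\gamma$. Since $P_n\to P$ weakly on the Polish space $\mathds{X}$, the set $\{P_n\}\cup\{P\}$ is tight; given $\eta>0$, pick a compact $K\subset\mathds{X}$ with $P(K^c),\ \sup_n P_n(K^c)<\eta$. On $K$ the family $\{h_\gamma|_K\}$ is equicontinuous and uniformly bounded, so Arzel\`a--Ascoli makes it totally bounded in $C(K)$: for any $\delta>0$ there exist finitely many policies $\gamma_1,\ldots,\gamma_N$ such that each $h_\gamma|_K$ lies within sup-norm $\delta$ of some $h_{\gamma_i}|_K$. Splitting each integral over $K$ and $K^c$, using the uniform bound $\|c\|_\infty$ on $K^c$, and invoking that each $h_{\gamma_i}$ is itself in $C_b(\mathds{X})$ (by Step 1) so that $\int h_{\gamma_i}\, dP_n\to\int h_{\gamma_i}\, dP$ by weak convergence, one obtains a bound of the form
\begin{align*}
\sup_{\gamma}\Bigl|\int h_\gamma\, dP_n-\int h_\gamma\, dP\Bigr| \leq \max_{1\leq i\leq N}\Bigl|\int h_{\gamma_i}\, dP_n-\int h_{\gamma_i}\, dP\Bigr| + 2\delta + 4\|c\|_\infty\,\eta,
\end{align*}
whose right-hand side is made arbitrarily small by first sending $n\to\infty$ and then $\delta,\eta\to 0$.

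The main obstacle is this \emph{uniform-in-$\gamma$} step: weak convergence of priors gives termwise convergence of $\int h_\gamma\, dP_n$ for each fixed $\gamma$, but exchanging the limit with $\inf_{\gamma}$ requires equicontinuity of the entire family $\{h_\gamma\}$. This is precisely where Assumption \ref{assumptionTV} is indispensable; without TV continuity of $Q$ the TV term in the equicontinuity estimate cannot be made to vanish uniformly, and the erasure and quantizer counterexamples earlier in this subsection show that continuity of $J^*$ may then genuinely fail.
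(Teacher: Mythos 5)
Your proof is correct and follows essentially the same route as the paper's: both reduce the problem to showing that the family $h_\gamma(x)=\int c(x,\gamma(y))\,Q(dy|x)$ is uniformly bounded and equicontinuous, by splitting the increment into a total-variation term (killed by Assumption \ref{assumptionTV}) and a term controlled by uniform continuity of $c$ on the compact $\mathds{U}$, and then pass the weak limit through the infimum via a $\sup_\gamma$ (equivalently, the max over the two near-optimal policies) bound. The only difference is that where the paper cites Lemma C.1 of \cite{gupta2014existence} for the fact that weak convergence integrates equicontinuous families uniformly, you prove that fact directly via tightness and Arzel\`a--Ascoli, which makes the argument self-contained.
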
   

  \begin{proof}
  We first show that under the stated assumptions, the following holds.
    \begin{align}
      \lim_{k\to\infty} \sup_\gamma \left| \int Q(\dd y| x_k)c(x_k,\gamma(y)) - \int  Q(\dd y| x)c(x,\gamma(y)) \right| =0 \label{equiC}
    \end{align}
    for $x_k\to x$. To see this, write
        \begin{eqnarray}
&&      \lim_{k \to \infty} \sup_\gamma \left| \int Q(\dd y| x_k)c(x_k,\gamma(y)) - \int  Q(\dd y| x)c(x,\gamma(y)) \right| \nonumber \\
&&   \leq \lim_{k \to \infty}\sup_\gamma \left| \int Q(\dd y| x_k)c(x_k,\gamma(y)) - \int  Q(\dd y| x)c(x_k,\gamma(y)) \right| \label{firstT} \\
&&    + \lim_{k \to \infty}\sup_\gamma \left|  \int  Q(\dd y| x)c(x_k,\gamma(y))  - \int  Q(\dd y| x)c(x,\gamma(y)) \right| \label{secondT}
    \end{eqnarray}
The term in (\ref{firstT}) converges to $0$ by Assumption (\ref{assumptionTV}). Since we have,
\begin{align*}
 &\lim_{k \to \infty}\sup_\gamma \left| \int Q(\dd y| x_k)c(x_k,\gamma(y)) - \int  Q(\dd y| x)c(x_k,\gamma(y)) \right| \\
&\quad \leq  \lim_{k\to\infty} \|Q(.|x_k)-Q(.|x) \|_{TV}\\
&\quad=0
\end{align*}
By assumption the action space $\mathds{U}$ is compact. Therefore, $c(x_k,u) \to c(x,u)$ uniformly on  $\mathds{U}$ . Thus, (\ref{secondT}) also goes to 0.

 Let the optimal policies be given by $\gamma_n^*$ and $\gamma^*$ for initial distributions $P_n$ and $P$ respectively. We now consider two cases: \newline
    Firstly, if $J^*(P_n,Q)< J^*(P,Q)$ then
    \begin{align}
      \label{eq:ineq1}
      \begin{split}
        J^*(P,Q)-J^*(P_n,Q)\leq J(P,Q,\gamma_n^*)-J(P_n,Q,\gamma_n^*).
      \end{split}
    \end{align}
    Secondly, if $J^*(P,Q)<J^*(P_n,Q)$ then
    \begin{align}
      \label{eq:ineq2}
      \begin{split}
        J^*(P_n, Q)-J^*(P,Q)\leq J(P_n,Q,\gamma^*)-J(P,Q,\gamma^*).
      \end{split}
    \end{align}
    Inequalities \eqref{eq:ineq1} and \eqref{eq:ineq2} are combined to give
    \begin{align}\label{eq:max}
      \begin{split}
        |J^*(P,Q&)-J^*(P_n,Q)|\\
        &\leq\max(J(P, Q, \gamma_n^*)-J(P_n,Q,\gamma_n^*),J(P_n, Q, \gamma^*)-J(P,Q,\gamma^*)).
      \end{split}
    \end{align}
    
   Observe the following:
    \begin{align}
      &|J^*(P_n,Q)-J^*(P,Q)|\nonumber\\
      & \quad\leq  \max\bigg(\left| \int (P_n-P)(\dd x) \int Q(\dd y| x)c(x,\gamma_n^*(y))\right|, \nonumber \\ 
& \quad  \quad\quad  \quad  \quad  \quad  \quad       \left|\int  (P_n-P)(\dd x) \int Q(\dd y| x)c(x,\gamma^*(y)) \right| \bigg), \label{maxterm}
    \end{align}
    For continuity, we need \eqref{maxterm} to tend to $0$ as $n\to\infty$.

Let $F$ be a family of functions from normed linear spaces $\mathds{S}$ to $\mathds{T}$. The family $F$ is said to be {\it equicontinuous at a point} $x_0 \in S$ if, for every
$\epsilon > 0$, there exists a $\delta > 0$ such that $|f(x) - f(x_0)| \leq \epsilon$ for all $f\in F$ and for every $x$ such that $|x-x_0| \leq \delta$.
The family $F$ is said to be {\it equicontinuous} if it is equicontinuous at each $x\in S$. 
A consequence of weak convergence is that (see, e.g. Lemma C.1 in \cite{gupta2014existence}) if $f_n$ is an equicontinuous family, then, $P_n \to P$ weakly, implies that $\int (P_n - P)(dx) f_n(x) \to 0$. Condition (\ref{equiC}) ensures that the sequence of functions $\int Q(\dd y| x_k)c(x_k,\gamma_n(y))$ is equicontinuous as it gives us a uniform continuity over family of all admissible policies.  and the result follows.
  \end{proof}

\begin{exmp}\label{ex:additive}
  Consider the following additive noisy channel:
  \begin{align*}
    y=x+w,
  \end{align*}
  where $w \sim \mu$ with $\mu$ admitting a density, $\eta$, which is continuous. An example is the Gaussian density. Suppose that $\mathds{U}$ is compact and $c(x,u)$ is continuous and bounded in $\mathds{X}\times\mathds{U}$. For $x_k\to x$ we have that
  \begin{align}
    &\lim_{k\to\infty} \sup_\gamma \left| \int \eta(y- x_k)c(x_k,\gamma(y))\dd y - \int  \eta(y-x)c(x,\gamma(y)) \dd y \right|\nonumber\\
    &\quad \leq \lim_{k\to\infty} \sup_\gamma \left| \int (\eta(y- x_k)-\eta(y-x))c(x_k,\gamma(y))\dd y \right. + \dots \nonumber \\
    &\quad \quad +\left. \int  \eta(y-x)\left(c(x_k,\gamma(y))-c(x,\gamma(y))\right)\dd y \right| \nonumber\\
    &\quad \leq \|\eta(\cdot-x_k)-\eta(\cdot-x)\|_{TV}\cdot \|c\|_{\infty} + \dots \label{eq:addexamp1}\\
    &\quad \quad +\lim_{k\to\infty} \sup_\gamma \left| \int  \eta(y-x)\left(c(x_k,\gamma(y))-c(x,\gamma(y))\right)\dd y \right| \label{eq:addexamp2}\\
    &\quad =0 \nonumber,
  \end{align}
  where $\|c\|_\infty$ denotes the supremum norm of $c$. We note that the term in \eqref{eq:addexamp1} tends to zero since $\eta(\cdot-x_k)$ converges to $\eta(\cdot-x)$ pointwise and therefore by Scheff\'e's theorem it converges in $L_1$ and thus in total variation. Additionally, the term  in \eqref{eq:addexamp2} tends to zero since $c$ is uniformly continuous by assumptions. Therefore   Assumption~\ref{assumptionTV} is satisfied and Theorem~\ref{thm:weaksuff} holds. Thus, for a special but practically important class of channels weak convergence of priors is sufficient for continuity if further $c(x,u)$ is bounded and continuous on $\mathds{X}\times\mathds{U}$ and $\mathds{U}$ is compact.
  \hfill $\diamond$
\end{exmp}

We note here that a related result due to Wu and Verd\'u \cite{WuVerdu} establishes continuity of the MMSE error (that is with $c(x,u) = \|x-u\|^2$) under weak convergence when the channel is additive, the additive noise has a finite variance and it admits a continuous and bounded density function. In general, however, the following example shows that the boundedness condition cannot be relaxed even when the channel is non-informative, which can be viewed as an extreme form of regularity.

  \begin{exmp}\label{ex:weakcounter}
    Let $\mathds{X}=\mathds{U}=\R$, $\mathds{Y}=[0,1]$, and $c(x,u)=(x-u)^2$. With the given cost function this is a mean-square error problem; therefore, the optimal policy is $\gamma^*(y)=E[x|y]$. We let the channel be distributed uniformly on $[0,1]$, that is, $Q\sim U([0,1])$. Note that this channel is non-informative. Let $P_n$ be the following discrete distribution,
    \begin{align*}
      P_n=\left(\frac{1}{2}-\frac{1}{n}\right)\cdot\delta_{\frac{1}{n}}+\left(\frac{1}{2}-\frac{1}{n}\right)\cdot\delta_{-\frac{1}{n}}+\frac{1}{2n}\cdot\delta_{a_n}+\frac{1}{2n}\cdot\delta_{-a_n}
    \end{align*}
    where $\delta_s$ is the delta measure at point $s$, that is,
    \begin{align*}
        \delta_s(A)=1_{\{s\in A\}}
    \end{align*}
    for any Borel set $A$, and $a_n$ is the sequence of numbers in $\N$ defined by
    \begin{align*}
      a_n=\sqrt{n-\left(\frac{1}{n}+\frac{2}{n^2}\right)}.
    \end{align*}
    Clearly $P_n\to \delta_0$ weakly as for any bounded and continuous function $f$ we have
    \begin{align*}
      \int_\R f(x) &P_n(\dd x)=\\
      &\left(\frac{1}{2}-\frac{1}{n}\right)\cdot f\left(\frac{1}{n}\right)+\left(\frac{1}{2}-\frac{1}{n}\right)\cdot f\left(-\frac{1}{n}\right)+\frac{1}{2n}\cdot f(a_n)+\frac{1}{2n}\cdot f(-a_n)\\
      &\to f(0) = \int_\R f(x) \delta_0(\dd x)
    \end{align*}
    by boundedness and continuity. By symmetry and the non-informative nature of $Q$, the optimal policy is $\gamma^*(y)=E[X|Y]=0$ for all $P_n$ and for $P=\delta_0$. With initial distribution $P$, we have 
    \begin{align*}
      J^*(P,Q)&=E_P^{Q,\gamma^*}[(X-U)^2]\\
      &=E_P^{Q,\gamma^*}[(X-\gamma^*(Y))^2]\\
      &=E_P^{Q,\gamma^*}[(X)^2]=0.
    \end{align*}
    Whereas for all $n\in\N$ we have
    \begin{align*}
      J^*(P_n,Q)&=E_{P_n}^{Q,\gamma^*}[(X)^2]\\
      &=\left(\frac{1}{2}-\frac{1}{n}\right)\cdot\frac{1}{n^2}+\left(\frac{1}{2}-\frac{1}{n}\right)\cdot\frac{1}{n^2}+\frac{1}{2n}\cdot a_n^2+\frac{1}{2n}\cdot a_n^2\\
      &=\left(1-\frac{2}{n}\right)\cdot\frac{1}{n^2}+\frac{1}{n}\cdot\left(n-\left(\frac{1}{n}+\frac{2}{n^2}\right)\right)\\
      &=1,
    \end{align*}
    so $J^*(P_n,Q)\not\to J^*(P,Q)$ as $n\to\infty$. \hfill $\diamond$
  \end{exmp}
  
Now we present a result for continuity under the Wasserstein metric.
\begin{assumption}\label{abscont}
There exists a measurable non-negative function $f$ so that for some probability measure $P_Q$, the following (absolute continuity condition) holds:
\[Q(Y \in A | x) = \int_A f(x,y) P_Q(dy)\]
Furthermore, $\tilde{c}(x,y,u) := c(x,u) f(x,y)$ is such that
\[ |\tilde{c}(x',y,u) - \tilde{c}(x,y,u)| \leq \alpha |x' - x|\]
for all $y \in \mathds{Y}, u \in \mathds{U}$ and for some $\alpha \in \mathbb{R}_+$.
\end{assumption}

\begin{theorem}\label{absContThmStatic}
 Under Assumption \ref{abscont}, 
 \[ |J^*(P_n,Q) - J^*(P,Q)| \leq \alpha W_1(P_n,P),\] 
and thus, as $W_1(P_n,P) \to 0$, $J^*(P_n,Q) \to J^*(P,Q)$.
\end{theorem}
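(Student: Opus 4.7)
The plan is to use the absolute continuity assumption to rewrite the cost against the reference measure $P_Q$, then exploit the Lipschitz property of $\tilde{c}$ in $x$ together with the Kantorovich–Rubinstein duality characterization of $W_1$ to bound the cost difference uniformly over policies, and finally convert this into a bound on the gap of the optima by the same two-case argument used in the proof of Theorem~\ref{thm:weaksuff}.

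First I would use Assumption~\ref{abscont} and Fubini's theorem to rewrite, for any admissible $\gamma$,
\begin{align*}
J(P,Q,\gamma)
&= \int_{\mathds{X}} \int_{\mathds{Y}} c(x,\gamma(y)) \, Q(\dd y|x)\, P(\dd x) \\
&= \int_{\mathds{Y}} \int_{\mathds{X}} \tilde c(x,y,\gamma(y))\, P(\dd x) \, P_Q(\dd y),
\end{align*}
so that
\begin{align*}
J(P_n,Q,\gamma) - J(P,Q,\gamma)
= \int_{\mathds{Y}} \left( \int_{\mathds{X}} \tilde c(x,y,\gamma(y)) \, (P_n-P)(\dd x) \right) P_Q(\dd y).
\end{align*}

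Next, for each fixed $y\in\mathds{Y}$ and each admissible $\gamma$, the map $x \mapsto \tilde c(x,y,\gamma(y))$ is $\alpha$-Lipschitz by Assumption~\ref{abscont}. The Kantorovich–Rubinstein duality (see \cite[Remark 6.5]{villani2008optimal}) states that
\[ W_1(P_n,P) = \sup_{\mathrm{Lip}(g) \leq 1} \left| \int g(x) (P_n - P)(\dd x) \right|,\]
and hence
\[ \left| \int_{\mathds{X}} \tilde c(x,y,\gamma(y)) \, (P_n-P)(\dd x) \right| \leq \alpha W_1(P_n,P).\]
Integrating this pointwise bound against the probability measure $P_Q(\dd y)$ yields the uniform-in-$\gamma$ estimate
\[ |J(P_n,Q,\gamma) - J(P,Q,\gamma)| \leq \alpha W_1(P_n,P).\]

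Finally, I would convert this into a bound on the optimal costs by the same device used in \eqref{eq:ineq1}–\eqref{eq:ineq2}. Let $\gamma_n^\ast$ and $\gamma^\ast$ be optimal (or, if necessary, $\epsilon$-optimal) policies for $P_n$ and $P$ respectively. If $J^\ast(P,Q) \geq J^\ast(P_n,Q)$, then $J^\ast(P,Q) - J^\ast(P_n,Q) \leq J(P,Q,\gamma_n^\ast) - J(P_n,Q,\gamma_n^\ast)$; otherwise we use the symmetric bound with $\gamma^\ast$. In either case the uniform estimate above gives $|J^\ast(P_n,Q) - J^\ast(P,Q)| \leq \alpha W_1(P_n,P)$, and convergence follows. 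I do not anticipate a serious obstacle: the only subtlety is justifying Fubini (which follows from boundedness of $c$ and the fact that $f$ is a density against the probability measure $P_Q$) and, if optimal policies do not exist, handling an $\epsilon$-optimal selection and sending $\epsilon\to 0$.
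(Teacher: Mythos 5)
Your proof is correct and follows essentially the same route as the paper: rewrite the cost against the reference measure $P_Q$ via Assumption~\ref{abscont}, exploit the $\alpha$-Lipschitz property of $\tilde{c}$ in $x$ uniformly over $y$ and $u$, and finish with the two-sided comparison of optima from \eqref{eq:ineq1}--\eqref{eq:ineq2}. The only difference is cosmetic: you invoke the Kantorovich--Rubinstein dual formulation of $W_1$ to bound $\int \tilde{c}(x,y,\gamma(y))\,(P_n-P)(\dd x)$ pointwise in $y$, whereas the paper bounds the same quantity by integrating the Lipschitz estimate against an arbitrary coupling of $P$ and $P_n$ and then taking the infimum over couplings; the two are equivalent here since $\mathds{X}\subset\R^n$ is Polish, and your version is equally valid.
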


\begin{proof}

We first use the bound in (\ref{eq:max}) such that\begin{align*}
      \begin{split}
        |J^*(P,Q&)-J^*(P_n,Q)|\\
        &\leq\max(J(P, Q, \gamma_n^*)-J(P_n,Q,\gamma_n^*),J(P_n, Q, \gamma^*)-J(P,Q,\gamma^*)).
      \end{split}
    \end{align*}
    
Let $\mathds{P}_n$ denote a product measure on the space $\mathds{X}\times \mathds{X}$ such that its first marginal is $P$ and the second marginal is $P_n$. 
Then, for any $\gamma \in \Gamma$ we have
 \begin{align*}
& \int \mathds{P}_n(\dd x,\dd x') \bigg(\int Q(dy|x)c(x,\gamma(y)) - \int Q(dy|x')c(x',\gamma(y))\bigg) \\
  & =\int \mathds{P}_n(\dd x,\dd x') \left( \int P_Q(\dd y)   f(x, y) c(x,\gamma(y)) - \int P_Q(\dd y) f(x', y) c(x',\gamma(y)) \right)  \\
   & = \int \mathds{P}_n(\dd x,\dd x') \int P_Q(\dd y)  \left( f(x, y) c(x,\gamma(y)) - f(x', y) c(x',\gamma(y)) \right)  \\
& \leq  \int \mathds{P}_n(\dd x,\dd x') \int P_Q(\dd y) \alpha |x' - x| \\
&   \leq \int \mathds{P}_n(\dd x,\dd x') \alpha |x' - x|.
 \end{align*}
Optimizing over all such couplings $\mathds{P}_n$ completes the proof.
\end{proof}

\subsubsection{Setwise convergence}\label{section:setwisecounter}
$\quad$
\begin{theorem}\label{ex:setwise}
    Let a channel $Q\in\mathcal{Q}$ be given. $J^*(P,Q)$ is not necessarily continuous in $P$ under setwise convergence. This holds even when $\mathds{X}$, $\mathds{Y}$, and $\mathds{U}$ are compact and when $c$ is bounded and continuous in both $x$ and $u$.
  \end{theorem}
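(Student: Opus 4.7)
The plan is to exhibit an explicit counterexample. Observe first that upper semi-continuity of $J^*(\cdot,Q)$ under setwise convergence is automatic: for any fixed policy $\gamma$, the function $h_\gamma(x):=\int c(x,\gamma(y))\,Q(\dd y|x)$ is bounded and Borel measurable, so setwise convergence $P_n\to P$ forces $\int h_\gamma\,\dd P_n\to\int h_\gamma\,\dd P$, and evaluating at a (near-)optimal $\gamma$ for $P$ yields $\limsup_n J^*(P_n,Q)\le J^*(P,Q)$. The task is therefore to construct a sequence $P_n\to P$ setwise along which $J^*(P_n,Q)$ stays strictly below $J^*(P,Q)$. The mechanism will be oscillating densities: their oscillations vanish when integrated against indicators on $\mathds{X}$ (so setwise convergence holds), but survive after being routed through a channel with multi-valued preimages.

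Concretely, I would take $\mathds{X}=\mathds{Y}=\mathds{U}=[0,1]$, $c(x,u)=(x-u)^2$ (bounded and jointly continuous on the compact set $[0,1]^2$), and the deterministic doubling-map channel $Q(\dd y|x)=\delta_{2x\bmod 1}(\dd y)$, so that observing $Y=y$ reveals $X\in\{y/2,(y+1)/2\}$. Let $P$ be Lebesgue measure on $[0,1]$, and let $P_n$ have density $f_n(x)=1+\sin(2\pi(2n+1)x)$; since $|\sin|\le 1$ this density is non-negative, and a direct computation shows it integrates to $1$. Setwise convergence $P_n\to P$ then follows from the Riemann--Lebesgue lemma applied to $1_A\in L^1([0,1])$: for each Borel $A\subseteq[0,1]$, $P_n(A)-P(A)=\int_A\sin(2\pi(2n+1)x)\,\dd x\to 0$.

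The two optimal costs I would compute using the two-point Bayes formula: for a posterior supported on $\{a,b\}$ with weights $w_1,w_2$, the minimum squared-error cost is $w_1 w_2 (a-b)^2$. Under $P$, the posterior on $\{y/2,(y+1)/2\}$ has equal weights, so the conditional cost is $1/16$ for every $y$ and $J^*(P,Q)=1/16$. Under $P_n$, the identity $\sin(\theta+\pi(2n+1))=-\sin\theta$ places the two preimage densities in anti-phase: $f_n(y/2)=1+\sin(\pi(2n+1)y)$ while $f_n((y+1)/2)=1-\sin(\pi(2n+1)y)$. Consequently the $Y$-marginal under $P_n$ is uniform on $[0,1]$, the posterior weights are $(1\pm\sin(\pi(2n+1)y))/2$, and the conditional cost reduces to $\cos^2(\pi(2n+1)y)/16$; integrating against the uniform marginal gives $J^*(P_n,Q)=1/32$ for every $n$. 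Therefore $J^*(P_n,Q)\equiv 1/32\neq 1/16=J^*(P,Q)$, yielding the desired failure of continuity.

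The main conceptual step is selecting a channel whose structure lets the controller exploit fine-scale oscillations that are invisible in the setwise topology on priors; the doubling map together with the \emph{odd} frequency $2n+1$ produces precisely the phase cancellation that makes the two preimage densities anti-phase, and hence allows a strictly better posterior mean. All remaining verifications---that $f_n$ defines a probability measure, that $Q$ is a genuine stochastic kernel (the map $x\mapsto 2x\bmod 1$ is Borel), and the two-point Bayes computation---are routine.
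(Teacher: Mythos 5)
Your proposal is correct, and it reaches the conclusion by a genuinely different counterexample than the paper's. The paper keeps the same cost $c(x,u)=(x-u)^2$ on $[0,1]$ and also drives setwise convergence via the Riemann--Lebesgue lemma, but it uses square-wave densities $f_n=1+h_n$ together with the partially informative channel $Q(\cdot|x)=\tfrac12\delta_x+\tfrac12 U([0,1])$; the resulting optimal policies and costs require a fairly long computation (relegated to an appendix) and yield $J^*(P_n,Q)=\tfrac1{18}-\tfrac{1}{24n^2}\to\tfrac1{18}\neq\tfrac1{16}$. You instead use the deterministic doubling-map kernel $Q(\dd y|x)=\delta_{2x\bmod 1}$ and sinusoidal densities with odd frequency, so the two preimage densities are exactly in anti-phase, the $Y$-marginal stays uniform, and the two-point conditional-variance identity $w_1w_2(a-b)^2$ gives $J^*(P_n,Q)=\tfrac1{32}$ for \emph{every} $n$ against $J^*(P,Q)=\tfrac1{16}$; all steps I checked (nonnegativity and normalization of $f_n$, the disintegration weights $\bigl(1\pm\sin(\pi(2n+1)y)\bigr)/2$, the identity $\sin(\theta+\pi(2n+1))=-\sin\theta$, and the final integral of $\cos^2$) are correct. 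What your route buys is a much shorter, fully closed-form computation with a gap that is uniform in $n$ rather than only asymptotic, and a transparent statement of the mechanism (the channel folds the prior so that oscillations invisible to the setwise topology reappear in the posterior). What the paper's construction buys is reuse: the same $P_n$, channel, and policies $\gamma_{P_n}^*$ are recycled verbatim in Proposition 3.2 to show that the \emph{mismatch} cost $J(P,Q,\gamma_{P_n}^*)$ also fails to converge, and its channel is genuinely noisy rather than deterministic, which some readers may find a more representative measurement model. Your opening observation that upper semi-continuity under setwise convergence is automatic duplicates a separate theorem in the paper and is consistent with both examples, where the limit of $J^*(P_n,Q)$ falls strictly below $J^*(P,Q)$.
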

  \begin{proof}
    We present the following counterexample, building on \cite{yuksel12:siam}: Let $\mathds{X}=\mathds{Y}=\mathds{U}=[0,1]$ and let $c(x,u)=(x-u)^2$. For $n\in\N$ and $k=1,\dots,n$, we define
    \begin{align*}
      L_{n,k}=\left[\frac{2k-2}{2n},\frac{2k-1}{2n}\right),~R_{n,k}=\left[\frac{2k-1}{2n},\frac{k}{n}\right).
    \end{align*}
    For ease of notation, we shall let $L=\left\{y\in\cup_{k=1}^n L_{n,k}\right\}$ and $R=\left\{y\in\cup_{k=1}^n R_{n,k}\right\}$. Next, we define the square-wave function by 
    \begin{align*}
      h_n(t)=1_{\{t\in L\}}-1_{\{t\in R\}}.
    \end{align*}
    As $\int_0^1h_n(t) \dd t=0$ and $|h_n(t)|\leq 1$, the function
    \begin{align*}
      f_n(t)=(1+h_n(t))1_{\{t\in [0,1]\}}
    \end{align*}
    is a probability density function. 

    By the proof of the Riemann-Lebesgue lemma (for example, see Theorem 12.21 in \cite{wheeden77}), we have
    \begin{align*}
      \lim_{n\to\infty}\int_0^1 h_n(t)g(t) \dd t =0 \text{ for all } g\in L_1\left([0,1],\R\right),
    \end{align*}
    therefore
    \begin{align*}
      \lim_{n\to\infty}\int_0^1 f_n(t)g(t) \dd t =\int_0^1 g(t) \text{ for all } g\in L_1\left([0,1],\R\right).
    \end{align*}
    So if we let $P_n\sim f_n$ for $n\in\N$, we have that $P_n\to P\sim U([0,1])$ setwise. Next we let the channel be
    \begin{align*}
      Q(\cdot|x)\sim\frac{1}{2}\cdot\delta_x+\frac{1}{2}\cdot U([0,1]).
    \end{align*}
    For initial distribution $P$, the optimal policy is 
    \begin{align*}
      \gamma_P^*(y)=E[X|Y]=\frac{1}{2}\left(\frac{1}{2}+y\right).
    \end{align*}
    This gives
    \begin{align*}
      J^*(P,Q)=\frac{1}{16}.
    \end{align*}
    By tedious calculations (see Section \ref{calculationsSetwise}), the optimal policy for initial distribution $P_n$ is
    \begin{align*}
      \gamma_{P_n}^*(y)=
      \begin{cases}
        \frac{1}{2}-\frac{1}{4n} & \text{if } y\in\cup_{k=1}^n R_{n,k}\\
        \frac{1}{3}\cdot\left(\frac{1}{2}-\frac{1}{4n}\right)+\frac{2}{3}y & \text{if } y\in\cup_{k=1}^n L_{n,k}
      \end{cases}.
    \end{align*}
    This gives
    \begin{align*}
      J^*(P_n,Q)=\frac{1}{18}-\frac{1}{24n^2}.
    \end{align*}
    So we have $J^*(P_n,Q)\to\frac{1}{18}\neq\frac{1}{16}$ as $n\to\infty$, and we see that the optimal cost is clearly not continuous on the space of initial distributions under setwise convergence.
  \end{proof}

The next result shows that the optimal cost is upper semi-continuous under setwise convergence too. 

 \begin{theorem}
    Let a channel $Q$ be given. If $c(x,u)$ is non-negative, measurable and bounded in $\mathds{X}\times\mathds{U}$, then $J^*(P,Q)$ is upper semi-continuous on $\mathcal{P}(\mathds{X})$ under setwise convergence.
  \end{theorem}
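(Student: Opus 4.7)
The plan is to mirror the upper semi-continuity argument used for weak convergence, but exploit the fact that setwise convergence of measures preserves integrals of \emph{all} bounded measurable functions, so that no Lusin-type reduction to continuous policies is needed.

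First, I would fix an arbitrary admissible $\gamma \in \Gamma$. Since $\gamma$ is Borel measurable from $\mathds{Y}$ to $\mathds{U}$ and $c$ is measurable and bounded on $\mathds{X}\times\mathds{U}$, the map $(x,y)\mapsto c(x,\gamma(y))$ is bounded and Borel measurable on $\mathds{X}\times\mathds{Y}$. By Lemma~\ref{lemma:joint}(ii), $P_n\to P$ setwise implies $P_nQ\to PQ$ setwise on $(\mathds{X}\times\mathds{Y},\mathcal{B}(\mathds{X}\times\mathds{Y}))$, and hence for every fixed $\gamma\in\Gamma$,
\begin{align*}
\lim_{n\to\infty}\int c(x,\gamma(y))\,P_nQ(\dd x,\dd y)=\int c(x,\gamma(y))\,PQ(\dd x,\dd y).
\end{align*}

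Next, I would apply the elementary inequality $\limsup_n \inf_\gamma \leq \inf_\gamma \limsup_n$ together with the pointwise limit above to obtain
\begin{align*}
\limsup_{n\to\infty}J^*(P_n,Q)
&= \limsup_{n\to\infty}\inf_{\gamma\in\Gamma}\int c(x,\gamma(y))\,P_nQ(\dd x,\dd y) \\
&\leq \inf_{\gamma\in\Gamma}\limsup_{n\to\infty}\int c(x,\gamma(y))\,P_nQ(\dd x,\dd y) \\
&= \inf_{\gamma\in\Gamma}\int c(x,\gamma(y))\,PQ(\dd x,\dd y) \;=\; J^*(P,Q),
\end{align*}
which is exactly upper semi-continuity of $J^*(\cdot,Q)$ at $P$.

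No serious obstacle arises here; in contrast to the weak-convergence version of the statement, we neither need to establish continuity of $\int Q(\dd y|x)\,c(x,\gamma(y))$ in $x$ nor to invoke Lemma~\ref{lemma:lusin} to restrict to continuous policies, since setwise convergence accommodates any bounded measurable integrand directly. The only minor point worth checking is that the class $\Gamma$ over which the infimum is taken does not depend on the prior (it depends only on the observation $\sigma$-algebra), so the same $\gamma$ can be used as a test policy for every $n$ in the $\limsup$-$\inf$ exchange.
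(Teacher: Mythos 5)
Your proposal is correct and follows essentially the same route as the paper's own proof: the $\limsup$-$\inf$ interchange combined with Lemma~\ref{lemma:joint}(ii) and the boundedness and measurability of $c(x,\gamma(y))$. The paper's argument is exactly this, stated slightly more tersely.
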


  \begin{proof}
    Given a fixed channel $Q\in\mathcal{Q}$, let $P_n\to P$ in $\mathcal{P}(\mathds{X})$ setwise. Then 
    \begin{align*}
      \limsup_{n\to\infty}\inf_{\gamma\in\Gamma} \int_{\mathds{X}\times\mathds{Y}} & c(x,\gamma(y))Q(\dd y|x)P_n(\dd x)\\ &\leq \inf_{\gamma\in\Gamma} \limsup_{n\to\infty} \int_{\mathds{X}\times\mathds{Y}} c(x,\gamma(y))Q(\dd y|x)P_n(\dd x)\\
      &= \inf_{\gamma\in\Gamma} \int_{\mathds{X}\times\mathds{Y}} c(x,\gamma(y))Q(\dd y|x)P(\dd x),
    \end{align*}
where for the last equality we used Lemma \ref{lemma:joint} and the fact that $c$ is bounded and measurable.
  \end{proof}

\subsubsection{Continuity under total variation}
The proof of the following result builds on \cite[Theorem 3.4]{yuksel12:siam}.
  \begin{theorem} 
    The optimal cost $J^*(P,Q)$ is continuous on the set of input distributions, $\mathcal{P}(\mathds{X})$, under the topology of total variation. In other words, if $\|P_n-P\|_{TV}\to 0$, then $|J^*(P_n,Q)-J^*(P,Q)| \to 0$.
  \end{theorem}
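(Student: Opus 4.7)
The plan is to mirror the argument used in the proof of Theorem~\ref{thm:weaksuff}, but to exploit the fact that convergence in total variation gives a much stronger bound than weak convergence via the dual characterization
\[
\|P_n - P\|_{TV} \;=\; \sup_{f:\|f\|_\infty \le 1}\left|\int f\,dP_n - \int f\,dP\right|.
\]
Under this topology no equicontinuity-type argument is needed: the test functions only have to be uniformly bounded and measurable in $x$, not continuous.

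First, I would reuse the inequality derived in \eqref{eq:max}, namely, picking $\gamma^*$ and $\gamma_n^*$ to be optimal (or $\epsilon$-optimal) policies for the priors $P$ and $P_n$,
\[
|J^*(P_n,Q) - J^*(P,Q)| \le \max\bigl(J(P,Q,\gamma_n^*) - J(P_n,Q,\gamma_n^*),\; J(P_n,Q,\gamma^*) - J(P,Q,\gamma^*)\bigr).
\]
Hence it suffices to control, uniformly in $\gamma \in \Gamma$, the quantity
\[
\left|\int_{\mathds{X}}(P_n - P)(\dd x)\int_{\mathds{Y}} Q(\dd y|x)\,c(x,\gamma(y))\right|.
\]
Define $h_\gamma(x) := \int_{\mathds{Y}} Q(\dd y|x)\, c(x,\gamma(y))$. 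Since $c$ is assumed bounded, $h_\gamma$ is a measurable function of $x$ with $\|h_\gamma\|_\infty \le \|c\|_\infty$ uniformly over $\gamma \in \Gamma$. Applying the variational formula for the total variation distance therefore gives
\[
\left|\int h_\gamma\,d(P_n - P)\right| \;\le\; \|c\|_\infty \cdot \|P_n - P\|_{TV},
\]
uniformly in $\gamma$.

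Combining this uniform bound with the max-inequality above yields
\[
|J^*(P_n,Q) - J^*(P,Q)| \;\le\; \|c\|_\infty\, \|P_n - P\|_{TV} \;\longrightarrow\; 0,
\]
which proves continuity. There is essentially no obstacle: the key observation is that, unlike in the weak-convergence case (Theorem~\ref{thm:weaksuff}) where one had to establish equicontinuity of the family $\{h_\gamma\}_{\gamma \in \Gamma}$ via an additional regularity assumption on $Q$ (Assumption~\ref{assumptionTV}), convergence in total variation gives a uniform bound over the entire bounded-measurable class for free, so no structural assumption on the channel $Q$ is required.
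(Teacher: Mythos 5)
Your proposal is correct and follows essentially the same route as the paper: the max-inequality \eqref{eq:max} combined with the uniform bound $|J(P_n,Q,\gamma)-J(P,Q,\gamma)|\le \|c\|_\infty\|P_n-P\|_{TV}$ for every policy. The only cosmetic difference is that the paper obtains this bound by passing through the joint measures and Lemma~\ref{lemma:joint}(iii) ($\|P_nQ-PQ\|_{TV}=\|P_n-P\|_{TV}$), whereas you marginalize $Q$ directly into the bounded measurable function $h_\gamma$ and apply the dual formula on $\mathds{X}$ — the same computation, inlined.
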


 \begin{proof}
    Let $P_n\to P$ in total variation with a fixed channel $Q$. Recall from Section~\ref{section:intro} that $J(P',Q,\gamma')=E_{P'}^{Q,\gamma'}[c(x,u)]$, that is, the expected cost with initial distribution $P'\in\mathcal{P}(\mathds{X})$ and control policy $\gamma'\in\Gamma$. Let the optimal ($\varepsilon$-optimal) policies are given by $\gamma_n^*$ and $\gamma^*$ for initial distributions $P_n$ and $P$ respectively. Using the bound in \ref{eq:max} we write
\begin{align*}
|J^*(P_n,Q)-&J^*(P,Q)|\\
&\leq\max(J(P, Q, \gamma_n^*)-J(P_n,Q,\gamma_n^*),J(P_n, Q, \gamma^*)-J(P,Q,\gamma^*)).
\end{align*}
    
    As $c$ is bounded it follows that for any $\gamma'\in\Gamma$,
    \begin{align}\label{eq:TVbound}
      \begin{split}
      |J(P_n,Q,&\gamma')-J(P,Q,\gamma')|\\
      &=\left|\int c(x,\gamma'(y))P_nQ(\dd x,\dd y)-\int c(x,\gamma'(y))PQ(\dd x,\dd y)\right|\\
      &\leq \|c\|_\infty \|P_nQ-PQ\|_{TV}=\|c\|_\infty \|P_n-P\|_{TV},
      \end{split}
    \end{align}
    where we have used Lemma~\ref{lemma:joint} for the last equality. Inequalities \eqref{eq:max} and \eqref{eq:TVbound} together imply that $|J^*(P,Q)-J^*(P_n,Q)|\leq \|c\|_\infty \|P_n-P\|_{TV} $. Since $\|P_n-P\|_{TV}\to0$  we have that $J^*(P_n,Q)\to J^*(P,Q)$.
  \end{proof}

\adk{
\begin{remark}
In this paper we only focus on the case where the channel is known by the controller. That is the true channel model $Q$ is available to the controller. For the case where this is no longer true, some further analysis is required. If prior model $P$ and the channel $Q$ are not known, controller can have an estimating sequence $P_nQ_n \in \P(\mathds{X}\times \mathds{Y})$ for the true joint measure $PQ\in \P(\mathds{X}\times\mathds{Y})$. Now, the question becomes analyzing the convergence of $P_nQ_n \to PQ$. This joint convergence might require different set of assumptions on $P_n$ and $Q_n(\cdot|x)$ which we do not discuss on this paper. However, Lemma \ref{lemma:joint} might give an idea on this joint convergence where we consider the convergence of joint measure $P_nQ$ to $PQ$. In \cite{yuksel12:siam}, similar joint convergence is studied for convergence of measurement channels and fixed prior distributions. The reader can also refer to \cite[Lemma 2.2]{yuksel12:siam} for the analysis on convergence of $PQ_n \to PQ$.
\end{remark}
}

\subsection {Multi-stage and infinite-horizon discounted setup}
We now consider continuity problems for the multi stage case. For this case, our focus will be on the infinite stage discounted cost setting. Clearly, the lack of continuity for single-stage problems implies the lack of continuity of multi-stage problems. In the following, the emphasis will be on developing setups where continuity can be established. 

In particular, if we put further restrictions on the system model and the measurement channel, we will establish sufficient conditions for continuity under weak convergence of the priors. This will be studied in the following.

\subsubsection{Weak convergence}

Consider a partially observed Markov decision process (POMDP), with state space $\sX$, action space $\sA$, and observation space $\sY$, all Borel spaces.
Define the history spaces $\sH_{t}=(\sY \times \sU)^{t}\times\sY$, $t=0,1,2,\ldots$ endowed with their
product Borel $\sigma$-algebras generated by $\B(\sY)$ and $\B(\sU)$. A \emph{policy} $\pi=\{\pi_{t}\}$ is a sequence of stochastic kernels on $\sU$ given $\sH_{t}$. We denote by $\Pi$ the set of all policies. For any initial distribution $\mu$ and policy $\pi$ we can think of the POMDP as a stochastic process $\bigl\{ X_t,Y_t,U_t \bigr\}_{t\geq0}$ defined on the probability space $\bigl( \Omega, \B(\Omega), P_{\mu}^{\pi} \bigr)$, where $\Omega = \sH_{\infty} \times \sX^{\infty}$, the $X_t$ are $\sX$-valued random variables, the $Y_t$ are $\sY$-valued random variables, the $U_t$ are $\sU$-valued random variables.

It is known that any POMDP can be reduced to a (completely observable) MDP \cite{Yus76}, \cite{Rhe74}, whose states are the posterior state distributions or "beliefs" of the observer; that is, the state at time $t$ is
\begin{align}
Z_t(\,\cdot\,) := \sPr\{X_{t} \in \,\cdot\, | Y_0,\ldots,Y_t, U_0, \ldots, U_{t-1}\} \in \P(\sX). \nonumber
\end{align}
We call this equivalent MDP the belief-MDP \index{Belief-MDP}. The belief-MDP has state space $\sZ = \P(\sX)$ and action space $\sU$. Recall that $\sZ$ is equipped with the Borel $\sigma$-algebra generated by the topology of weak convergence \cite{Billingsley}. Since $\sX$ is a Borel space, $\sZ$ is metrizable with the Prokhorov metric which makes $\sZ$ into a Borel space \cite{Par67}. The transition probability $\eta$ of the belief-MDP can be constructed as follows (see also \cite{Her89}). If we define the measurable function $F(z,a,y) := \Pr\{X_{t+1} \in \,\cdot\, | Z_t = z, U_t = u, Y_{t+1} = y\}$ from $\sZ\times\sA\times\sY$ to $\sZ$ and the stochastic kernel $H(\,\cdot\, | z,u) := \Pr\{Y_{t+1} \in \,\cdot\, | Z_t = z, U_t = u\}$ on $\sY$ given $\sZ\times\sU$, then $\eta$ can be written as
\begin{align}
\eta(\,\cdot\,|z,u) = \int_{\sY} 1_{\{F(z,u,y) \in \,\cdot\,\}} H(dy|z,u). \nonumber
\end{align}
The one-stage cost function $c$ of the belief-MDP is given by
\begin{align}
\tilde{c}(z,u) := \int_{\sX} c(x,u) z(dx). \label{weak:eq8}
\end{align}
Hence, the belief-MDP is a Markov decision process with the components $(\sZ,\sU,\eta,\tilde{c})$.

%
It is a standard result that an optimal control policy will use the belief $z_t$ as a sufficient statistic for optimal policies (see \cite{Yus76}, \cite{Rhe74}). 

\begin{assumption}
\label{weak:as3}
\begin{itemize}
\item [(a)] The stochastic kernel $\mathcal{T}(dx_1|x_0=x,u_0=u)$ is weakly continuous in $(x,u)$.
\item [(b)] Assumption \ref{assumptionTV} holds; that is the observation channel $Q(dy|x)$ is continuous in total variation.
\item[(c)] The stage-wise cost function $c(x,u)$ is non-negative, bounded and continuous on $\mathds{X} \times \mathds{U}$.
\item [(d)] $\mathds{U}$ is compact. 
\end{itemize}
\end{assumption}

By \cite[Proposition 7.30]{bertsekas78}, the one stage cost function $\tilde{c}$ of the belief-MDP, which is defined in (\ref{weak:eq8}), is continuous and bounded, that is in $C_b(\sZ\times\sU)$, under Assumption \ref{weak:as3}-(a),(b). 
The following theorem is from \cite[Theorem 3.7, Example 4.1]{FeKaZg14} and \cite[Example 2.1]{saldi2014near}.
\begin{theorem}\label{weak:thm6}
\begin{itemize}
\item [(i)] Under Assumption \ref{weak:as3}, the stochastic kernel $\eta$ for belief-MDP is weakly continuous in $(z,u)$.
\item [(ii)] If we relax the continuity in total variation of the observation channel to weak continuity, then $\eta$ may not be weakly continuous even if the transition probability $p$ of POMDP is continuous in total variation.
\item [(iii)] $\eta$ may not be setwise continuous in $u$ even if the observation channel is continuous in total variation.
\end{itemize}
\end{theorem}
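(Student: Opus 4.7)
The plan is to prove part (i) directly from the definition of $\eta$, and to defer parts (ii) and (iii) to the counterexamples in the cited references, since those are the kind of subtle pathologies that are hardest to rediscover and the theorem is explicitly quoted from \cite{FeKaZg14, saldi2014near}.

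For part (i), fix $f \in C_b(\sZ)$ and a convergent sequence $(z_n, u_n) \to (z, u)$ in $\sZ \times \sU$. The goal is
\[
\int_\sY f\bigl(F(z_n, u_n, y)\bigr) H(dy|z_n, u_n) \to \int_\sY f\bigl(F(z, u, y)\bigr) H(dy|z, u),
\]
which is precisely weak continuity of $\eta$ in $(z,u)$. I would split this into two sub-steps, handling the outer integrating measure and the inner posterior map separately.

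The first step is to show that $H(\cdot|z_n, u_n) \to H(\cdot|z, u)$ in total variation. Unpacking, $H(\cdot|z,u) = \int_\sX z(dx) \int_\sX \mathcal{T}(dx_1|x,u) Q(\cdot|x_1)$. Combining weak continuity of $\mathcal{T}$ in $(x,u)$ (Assumption \ref{weak:as3}(a)) with the $TV$-continuity of $Q$ in $x_1$ (Assumption \ref{weak:as3}(b)), one can push a Lemma \ref{lemma:joint}(iii)-style argument through the integration against $z_n \to z$; the $TV$-continuity of $Q$ is what upgrades the marginal convergence from weak to total variation, which is essential for the second step. The second and harder step is to show that the Bayes update $y \mapsto F(z_n, u_n, y)$ converges to $y \mapsto F(z, u, y)$ weakly for $H$-a.e.\ $y$. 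Since $F$ is defined as a disintegration of the joint law of $(X_1, Y_1)$ under $(z,u)$, the approach is: establish joint weak convergence of these laws on $\sX \times \sY$, then extract weak convergence of the $Y_1 = y$-conditional using that the $Y_1$-marginal converges in total variation (by step one), so the Radon--Nikodym denominators in the Bayes formula are uniformly well-behaved.

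Combining these two ingredients by a dominated-convergence estimate then yields the desired limit identity; this is exactly the argument carried out in \cite[Theorem 3.7]{FeKaZg14}. The principal obstacle is the second step: passing from joint weak convergence of $(X_1, Y_1)$-laws to conditional weak convergence is genuinely subtle, and it is precisely where $TV$-continuity of $Q$ (rather than mere weak continuity) is unavoidable --- this is also exactly the failure point exploited in part (ii), where one exhibits (as in \cite[Example 4.1]{FeKaZg14}) a deterministic channel $Q(\cdot|x) = \delta_{g(x)}$ with $g$ continuous but not constant, for which $Q$ is weakly but not $TV$-continuous, and the induced posteriors fail to converge weakly. For part (iii), the counterexample in \cite[Example 2.1]{saldi2014near} shows that the mixing over $y$ can destroy setwise convergence in $u$ even when all of Assumption \ref{weak:as3} is in force; the point is that setwise convergence is a strictly stronger mode than weak convergence and it is not preserved by integration against $H(dy|z,u)$ in general.
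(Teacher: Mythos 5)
The paper offers no proof of this theorem at all: it is imported verbatim from \cite{FeKaZg14} (parts (i) and (ii)) and \cite{saldi2014near} (part (iii)), so the only ``proof'' to compare against is the citation itself. Your outline for part (i) --- first upgrading $H(\cdot|z_n,u_n)\to H(\cdot|z,u)$ to total-variation convergence (which works because $TV$-continuity of $Q$ makes the family $x\mapsto\int Q(dy|x)f(y)$, $\|f\|_\infty\le 1$, equicontinuous, so weak convergence of $z_n$ suffices), then handling the Bayes map $F$, then a dominated-convergence step --- is exactly the decomposition used in \cite[Theorem 3.7]{FeKaZg14}, and deferring (ii) and (iii) to the cited counterexamples is precisely what the paper does. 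One caveat on your second step: what \cite{FeKaZg14} actually proves is not weak convergence of $F(z_n,u_n,y)$ to $F(z,u,y)$ for $H$-a.e.\ $y$, but the weaker integrated statement that $\int f(F(z_n,u_n,y))\,H(dy|z_n,u_n)\to\int f(F(z,u,y))\,H(dy|z,u)$ for bounded continuous $f$ on $\P(\sX)$, obtained via convergence in probability of the posteriors; pointwise a.e.\ convergence of conditional laws does not in general follow from joint weak convergence plus $TV$ convergence of the $\sY$-marginal, so if you wrote the argument out in full you would need to weaken that intermediate claim accordingly.
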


For an infinite horizon discounted cost problem with bounded costs, an optimal policy can be computed through the iterated use of the {\it discounted cost optimality operator} to be introduced below. Note that under Assumption~\ref{weak:as3}(c), by an application of the dominated convergence theorem $\tilde{c}$ is continuous and bounded. Now, for a MDP with weakly continuous transition probabilities and compact action spaces, it follows that an optimal control policy exists and that the optimal cost is continuous in the initial state (or probability measure in the context here): This follows because the {\it discounted cost optimality operator} $T: C_b(\sZ) \to C_b(\sZ)$ (see e.g. \cite[Chapter 8.5]{hernandezlasserre1999further}):
\[\bigg(T(v)\bigg)(z) = \min_{u} \bigg( \tilde{c}(z,u) + \beta E[v(z_1) | z_0=z, u_0=u] \bigg)\]
is a contraction from $C_b(\sZ)$ to itself under the supremum norm. As a result, there exists a fixed point, which is {\it continuous}. This fixed point is the value function. 

This argument shows that the value function is continuous in the belief state, $z_0(Y)(x \in\cdot) = P(x\in\cdot|Y)$, which is the {\it posterior} distribution of the state variable given the observations. However, convergence of the distribution of the priors may not always imply the convergence of the posteriors: For the single stage case, with the cost function of the belief process $\tilde{c}$ defined as in (\ref{weak:eq8}), the value function is given by $J^*(z)=\inf_{u \in \mathds{U}}\tilde{c}(z,u)$. It can be seen that the value function is again continuous in the belief state $z$ if $c$ is continuous in $x$ and $\mathds{U}$ is compact (by an application of the dominated convergence theorem). However, as we have seen in the counterexample used to prove Theorem \ref{weak_count}; even though $P_n \to P$ weakly, value functions do not converge. The next theorem shows that with further conditions, convergence of the posteriors (belief states) can also be guaranteed. Before the main result we first present a key lemma.

\begin{lemma}\label{cost_weak}
Under Assumption \ref{weak:as3}, as $P_n \to P$ weakly,
\begin{align*}
\sup_{\gamma \in \Gamma}\big|E_{P}\big[c(X_k,\gamma(Y_{[0,k]}))\big] - E_{P_n}\big[c(X_k,\gamma(Y_{[0,k]})) \big] \big| \to 0
\end{align*}
for any time stage $k<\infty$.
\end{lemma}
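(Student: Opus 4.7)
I would prove this by induction on $k$, via the auxiliary structural statement that the family of $X_0$-conditional expected terminal costs
\[
h_k^\gamma(x_0) := E\bigl[c(X_k,\gamma_k(Y_{[0,k]}))\,\big|\,X_0 = x_0\bigr], \quad \gamma \in \Gamma,
\]
is uniformly bounded by $\|c\|_\infty$ and \emph{equicontinuous} on $\mathds{X}$. Granted this, the lemma follows exactly as in the last paragraph of the proof of Theorem \ref{thm:weaksuff}: if $\sup_\gamma |\int h_k^\gamma\,d(P_n-P)|$ did not tend to $0$, pass to a subsequence $\gamma_n$ realising the sup; then $\{h_k^{\gamma_n}\}$ is equicontinuous and uniformly bounded, so Lemma C.1 of \cite{gupta2014existence} forces $\int h_k^{\gamma_n}\,d(P_n-P)\to 0$, a contradiction.

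The base case $k=0$ is essentially inequality (\ref{equiC}): for $x^{(m)}\to x$,
\[
|h_0^\gamma(x^{(m)}) - h_0^\gamma(x)| \leq \|c\|_\infty\,\|Q(\cdot|x^{(m)}) - Q(\cdot|x)\|_{TV} + \sup_{u\in\mathds{U}} |c(x^{(m)},u) - c(x,u)|,
\]
which tends to $0$ uniformly in $\gamma$ by Assumption \ref{assumptionTV} and uniform continuity of the bounded continuous $c$ on $\{x^{(m)},x\}\times\mathds{U}$ (using compactness of $\mathds{U}$). For the inductive step, conditioning on $(Y_0,X_1)$ yields
\[
h_k^\gamma(x_0) = \int Q(dy_0|x_0)\int \mathcal{T}(dx_1|x_0,\gamma_0(y_0))\, h_{k-1}^{\tilde\gamma^{y_0}}(x_1),
\]
where $\tilde\gamma^{y_0}$ is the admissible policy for the shifted POMDP obtained by folding the fixed $y_0$ into every subsequent information set. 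The family $\{h_{k-1}^{\tilde\gamma^{y_0}} : \gamma\in\Gamma,\, y_0\in\mathds{Y}\}$ is a sub-family of the one covered by the induction hypothesis, hence uniformly bounded and equicontinuous.

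Splitting $h_k^\gamma(x^{(m)}) - h_k^\gamma(x)$ into (i) a $Q$-perturbation term, again bounded uniformly in $\gamma$ by $\|c\|_\infty\|Q(\cdot|x^{(m)})-Q(\cdot|x)\|_{TV} \to 0$, and (ii) a $\mathcal{T}$-perturbation term integrated against $Q(\cdot|x)$, the second piece reduces to controlling
\[
\sup_{u\in\mathds{U},\,\phi\in\Phi}\left|\int \mathcal{T}(dx_1|x^{(m)},u)\phi(x_1) - \int \mathcal{T}(dx_1|x,u)\phi(x_1)\right|,
\]
where $\Phi$ is the equicontinuous uniformly bounded family supplied by the induction hypothesis.

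\textbf{Main obstacle.} The hardest step is showing this last supremum tends to $0$, because both the action $u$ and the test function $\phi$ vary. I would argue by contradiction: if it failed, choose $u_m\in\mathds{U}$ and $\phi_m\in\Phi$ violating the bound by some $\varepsilon>0$. Compactness of $\mathds{U}$ gives a subsequence with $u_m\to u^*$, and joint weak continuity of $\mathcal{T}$ in $(x,u)$ then yields both $\mathcal{T}(\cdot|x^{(m)},u_m)\to\mathcal{T}(\cdot|x,u^*)$ and $\mathcal{T}(\cdot|x,u_m)\to\mathcal{T}(\cdot|x,u^*)$ weakly along that subsequence. Applying Lemma C.1 of \cite{gupta2014existence} to the equicontinuous family $\{\phi_m\}$ against each of these two weakly convergent sequences forces both integrals to the same limit, contradicting the $\varepsilon$-gap and completing the induction.
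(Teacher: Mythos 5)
Your proposal is correct and follows essentially the same route as the paper's proof: both reduce the lemma to showing that the family of $X_0$-conditional expected costs is uniformly bounded and equicontinuous uniformly over $\gamma$, both split each perturbation into a channel term controlled by the total-variation continuity of $Q$ and a kernel term controlled by a uniform weak-continuity statement for $\mathcal{T}$ over an equicontinuous family (your ``main obstacle'' is precisely the paper's claim (\ref{arzela_asc})), and both finish with the Lemma C.1 argument against $P_n \to P$. The only difference is stylistic: you establish the kernel claim by subsequence extraction and contradiction using compactness of $\mathds{U}$, whereas the paper constructs an explicit Arzel\`a--Ascoli finite net on a compact set obtained from tightness of $\{\mathcal{T}(\cdot|x',u)\}$; these are equivalent in substance.
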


\begin{proof}
The proof can be found in the Appendix \ref{lem_212}.
\end{proof}

\begin{theorem}\label{ContWeakConv}
Suppose that Assumption \ref{weak:as3} holds. Then, as $P_n \to P$ weakly $|J_{\beta}^*(P_n,{\cal T})-J_{\beta}^*(P,{\cal T})| \to 0$. 
\end{theorem}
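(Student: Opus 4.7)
The plan is to reduce the infinite-horizon problem to a finite-horizon one by exploiting boundedness of $c$ and geometric discounting, and then apply Lemma~\ref{cost_weak} stagewise. First I would fix $\varepsilon>0$ and choose $N=N(\varepsilon)$ large enough that the tail satisfies
\[ \sup_{\gamma\in\Gamma}\sup_{\mu\in\P(\mathds{X})} E_\mu^{Q,\gamma}\bigg[\sum_{t=N+1}^\infty \beta^t c(X_t,U_t)\bigg] \leq \|c\|_\infty \frac{\beta^{N+1}}{1-\beta} < \varepsilon/4,\]
using Assumption~\ref{weak:as3}(c). This reduces uniform control of $|J_\beta(P_n,\mathcal{T},\gamma)-J_\beta(P,\mathcal{T},\gamma)|$ over $\gamma\in\Gamma$ to uniform control of the finite-horizon cost $\sum_{t=0}^N \beta^t E_\mu^{Q,\gamma}[c(X_t,U_t)]$.

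Next I would apply Lemma~\ref{cost_weak} at each finite stage $k=0,1,\ldots,N$. Since under Assumption~\ref{weak:as3} the lemma guarantees
\[\sup_{\gamma\in\Gamma}\bigl|E_P[c(X_k,\gamma(Y_{[0,k]}))]-E_{P_n}[c(X_k,\gamma(Y_{[0,k]}))]\bigr|\longrightarrow 0\]
for each fixed $k$, and since there are only finitely many such stages $k\leq N$, the triangle inequality yields
\[ \sup_{\gamma\in\Gamma}\Bigl|\sum_{t=0}^N \beta^t \bigl(E_P^{Q,\gamma}[c(X_t,U_t)] - E_{P_n}^{Q,\gamma}[c(X_t,U_t)]\bigr)\Bigr| \longrightarrow 0. \]
Combining this with the tail bound gives $\sup_{\gamma\in\Gamma}|J_\beta(P_n,\mathcal{T},\gamma)-J_\beta(P,\mathcal{T},\gamma)|<\varepsilon/2$ for all $n$ large.

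Finally I would invoke the same two-sided comparison used in the single-stage results. Letting $\gamma_n^*$ and $\gamma^*$ be optimal (or $\varepsilon$-optimal) policies for $P_n$ and $P$ respectively, the elementary inequality
\[ |J_\beta^*(P_n,\mathcal{T})-J_\beta^*(P,\mathcal{T})| \leq \max\bigl(J_\beta(P,\mathcal{T},\gamma_n^*)-J_\beta(P_n,\mathcal{T},\gamma_n^*),\; J_\beta(P_n,\mathcal{T},\gamma^*)-J_\beta(P,\mathcal{T},\gamma^*)\bigr) \]
is bounded above by the uniform quantity just controlled, hence $|J_\beta^*(P_n,\mathcal{T})-J_\beta^*(P,\mathcal{T})|\to 0$.

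The main obstacle is the uniformity in $\gamma\in\Gamma$ at each finite stage, which is precisely what Lemma~\ref{cost_weak} supplies; the rest is routine. It is essential that Assumption~\ref{weak:as3}(b) (total variation continuity of $Q$) and (a) (weak continuity of $\mathcal{T}$) both enter the proof of Lemma~\ref{cost_weak}, since under weaker assumptions the stagewise continuity fails (as in Theorem~\ref{weak_count}), and the finite truncation step relies crucially on uniform boundedness of $c$ from Assumption~\ref{weak:as3}(c).
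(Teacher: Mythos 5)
Your proposal is correct and follows essentially the same route as the paper's own proof: truncate the discounted sum using boundedness of $c$, apply Lemma~\ref{cost_weak} uniformly over $\gamma\in\Gamma$ at each of the finitely many remaining stages, and conclude via the two-sided comparison with optimal (or $\varepsilon$-optimal) policies. The only differences are cosmetic choices of constants in the tail bound.
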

\begin{proof}
We start with the following bound;
\begin{align}\label{main_bnd}
      \begin{split}
        |J_\beta^*(P,\mathcal{T}&)-J_\beta^*(P_n,\mathcal{T})|\\
        &\leq\max(J_\beta(P, \mathcal{T}, \gamma_n^*)-J(P_n,\mathcal{T},\gamma_n^*),J_\beta(P_n, \mathcal{T}, \gamma^*)-J(P,\mathcal{T},\gamma^*)).
      \end{split}
    \end{align}

Now, we try to show that under Assumption \ref{weak:as3}, as $P_n \to P$ weakly,
\begin{align}\label{weak_same_policy}
\sup_{\gamma \in \Gamma}|J_{\beta}(P_n,\mathcal{T},\gamma) - J_{\beta}(P,\mathcal{T},\gamma)| \to 0.
\end{align}

To prove (\ref{weak_same_policy}) we start with the following inequality:
\begin{align*}
\sup_{\gamma \in \Gamma}&|J_{\beta}(P_n,\mathcal{T},\gamma) - J_{\beta}(P,\mathcal{T},\gamma)| \\
&\leq\sup_{\gamma \in \Gamma}\sum_{k=0}^{T} \beta^k\bigg|E_P\big[c(X_k,\gamma(Y_{[0,k]}))\big] - E_{P_n}\big[c(X_k,\gamma(Y_{[0,k]}))\big]\bigg| + \sum_{k=T}^{\infty}\beta^k2\|c\|_\infty\nonumber.
\end{align*}
First we fix $\epsilon >0$ and find a $T_\epsilon$ such that $\sum_{k=T_\epsilon}^{\infty}\beta^k2\|c\|_\infty\leq\epsilon /2$. Now, we claim that we can find an $N$  such that for every $n>N$ 
\begin{align*}
&\sup_{\gamma \in \Gamma} \sum_{k=0}^{T_\epsilon -1}\bigg|E_{P_n}\big[c(X_k,\gamma(Y_{[0,k]}))\big] - E_P\big[c(X_k,\gamma(Y_{[0,k]}))\big]\bigg|\\
&\leq \sum_{k=0}^{T_\epsilon -1}\sup_{\gamma \in \Gamma}\bigg|E_{P_n}\big[c(X_k,\gamma(Y_{[0,k]}))\big] - E_P\big[c(X_k,\gamma(Y_{[0,k]}))\big]\bigg| \leq\epsilon/2.
\end{align*}

Lemma \ref{cost_weak} implies that for every time stage $k<T_\epsilon$,  $\sup_{\gamma \in \Gamma}\bigg|E_{P}\big[c(X_k,\gamma(Y_{[0,k]}))\big] - E_{P_n}\big[c(X_k,\gamma(Y_{[0,k]})) \big] \bigg|$ can be made less than $\epsilon/2T_\epsilon$ for all $n>N$ for some $N<\infty$. Since $T_\epsilon <\infty$, we can find a common $N$ for all time stages $k<T_\epsilon$. So we can write
\begin{align*}
\sup_{\gamma \in \Gamma}\sum_{k=0}^{T_\epsilon-1} \beta^k \bigg|E_{P}\big[c(X_k,\gamma(Y_{[0,k]}))\big] - E_{P_n}\big[c(X_k,\gamma(Y_{[0,k]})) \big] \bigg|< \epsilon/2
\end{align*}

Combining the results,  we have
\begin{align*}
\sup_{\gamma \in \Gamma}|J_{\beta}(P_n,\mathcal{T},\gamma) - J_{\beta}(P,\mathcal{T},\gamma)| &\leq  \sum_{k=0}^{T_\epsilon-1} \beta^k \bigg|E_{P}\big[c(X_k,\gamma(Y_{[0,k]}))\big] - E_{P_n}\big[c(X_k,\gamma(Y_{[0,k]})) \big] \bigg| \\
&\quad \quad +\sum_{k=T_\epsilon}^\infty\beta^k 2\|c\|_\infty\\
&<\epsilon
\end{align*}
for all $n>N$ for some $N<\infty$ for every given $\epsilon>0$, which proves (\ref{weak_same_policy}).

 Now looking at the term in (\ref{main_bnd}):
\[\max(J_\beta(P, \mathcal{T}, \gamma_n^*)-J(P_n,\mathcal{T},\gamma_n^*),J_\beta(P_n, \mathcal{T}, \gamma^*)-J(P,\mathcal{T},\gamma^*)),\]
we can see that both terms go to zero using (\ref{weak_same_policy}).
\end{proof}

In the following we give a weaker result, which holds with no restrictions, however.

\subsubsection{Continuity under total variation and strategic measures}
For stochastic control problems, {\it strategic measures} are defined (see Sch\"al \cite{Schal}, also \cite{dynkin1979controlled,feinberg1996measurability}) as the set of probability measures induced on the product spaces of the state and action pairs by measurable control policies: Given an initial distribution on the state, and a policy, one can uniquely define a probability measure on the infinite product space consistent with finite dimensional distributions, by Ionescu Tulcea theorem \cite{HernandezLermaMCP}. Now, define a {\it strategic measure} under a policy $\gamma^n= \{\gamma^n_0,\gamma^n_1, \cdots, \gamma^n_k,\cdots\}$ as a probability measure defined on ${\cal B}(\mathds{X} \times \mathds{Y} \times \mathds{U})^{\mathds{Z}_+}$ by:
\begin{align*}
&P^{\gamma^n}_{P_n}(\dd(x_0,y_0,u_0),\dd(x_1,y_1,u_1),\cdots) \nonumber \\
& \quad = P_n(\dd x_0) Q(\dd y_0|x_0) 1_{\{\gamma^n(y_0) \in \dd u_0\}} \mathcal{T}(\dd x_1|x_0,u_0) Q(\dd y_1|x_1) 1_{\{\gamma^n(y_0,y_1) \in \dd u_1\}} \cdots
\end{align*}
Under a strategic measure $P^{\gamma^n}_{P_n}$ we define, 
\begin{align*}
  J^*_{\beta}(P_n,\mathcal{T}) = \inf_{\gamma^n}E^{\gamma^n}_{P_n}\left[\sum_k \beta^k c(x_k,\gamma^n_k(y_{[0,k]}))\right] 
\end{align*}

\begin{theorem}\label{multiStage}
If $c(x,u)$ is a non-negative, measurable and bounded function in $\mathds{X}\times\mathds{U}$ then
\[|J^*_{\beta}(P_n,\mathcal{T}) - J^*_{\beta}(P,\mathcal{T})| \leq  \|P_n(x_0 \in \cdot)-P(x_0 \in \cdot) \|_{TV}  \frac{1}{1 - \beta}  \|c\|_\infty. \]
\end{theorem}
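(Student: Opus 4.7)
The plan is to reduce everything to a single bound of the form $|J_{\beta}(P_n,\mathcal{T},\gamma) - J_{\beta}(P,\mathcal{T},\gamma)| \leq \frac{\|c\|_\infty}{1-\beta}\|P_n - P\|_{TV}$, holding uniformly in $\gamma \in \Gamma$, and then use the standard max-trick already employed throughout the paper (cf.\ inequality (\ref{eq:max})). For any $\varepsilon>0$ let $\gamma^{*}$ and $\gamma_n^{*}$ be $\varepsilon$-optimal policies for $(P,\mathcal{T})$ and $(P_n,\mathcal{T})$ respectively; then
\[
|J^*_\beta(P_n,\mathcal{T}) - J^*_\beta(P,\mathcal{T})| \leq \max\bigl(J_\beta(P,\mathcal{T},\gamma_n^*) - J_\beta(P_n,\mathcal{T},\gamma_n^*),\; J_\beta(P_n,\mathcal{T},\gamma^*) - J_\beta(P,\mathcal{T},\gamma^*)\bigr) + \varepsilon,
\]
so the task reduces to controlling a single $|J_\beta(P_n,\mathcal{T},\gamma) - J_\beta(P,\mathcal{T},\gamma)|$ uniformly in $\gamma$.

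The key observation is to lift the problem to the level of strategic measures. For a fixed policy $\gamma$, the strategic measure $P^\gamma_{P}$ defined in the excerpt is a probability measure on $(\mathds{X}\times\mathds{Y}\times\mathds{U})^{\mathds{Z}_+}$, and the discounted cost is simply
\[
J_\beta(P,\mathcal{T},\gamma) \;=\; \int f(\omega)\, P^\gamma_{P}(d\omega), \qquad f(\omega) := \sum_{k=0}^{\infty} \beta^k c(x_k,u_k),
\]
with $\|f\|_\infty \leq \|c\|_\infty/(1-\beta)$. The construction of $P^\gamma_P$ via Ionescu-Tulcea shows that it disintegrates as $P^\gamma_{P}(d\omega) = P(dx_0)\, K^\gamma(d\omega' | x_0)$, where $K^\gamma(\cdot|x_0)$ is the stochastic kernel assembling the channel $Q$, transition $\mathcal{T}$, and policy $\gamma$ along the remaining coordinates $\omega' = (y_0,u_0,x_1,y_1,u_1,\ldots)$; crucially, $K^\gamma$ does not depend on the initial distribution.

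Using this disintegration, for any bounded measurable $g$ on the path space with $\|g\|_\infty \leq 1$, the function $\bar g(x_0) := \int g(x_0,\omega')K^\gamma(d\omega'|x_0)$ satisfies $\|\bar g\|_\infty \leq 1$, and hence
\[
\left|\int g\, dP^\gamma_{P_n} - \int g\, dP^\gamma_{P}\right| = \left|\int \bar g(x_0)\bigl(P_n - P\bigr)(dx_0)\right| \leq \|P_n - P\|_{TV},
\]
which gives $\|P^\gamma_{P_n} - P^\gamma_{P}\|_{TV} \leq \|P_n - P\|_{TV}$ uniformly in $\gamma$. Applying this to the normalized function $f\cdot (1-\beta)/\|c\|_\infty$ yields
\[
|J_\beta(P_n,\mathcal{T},\gamma) - J_\beta(P,\mathcal{T},\gamma)| \leq \frac{\|c\|_\infty}{1-\beta}\,\|P_n - P\|_{TV},
\]
uniformly in $\gamma$. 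Combining with the max bound and letting $\varepsilon \downarrow 0$ gives the theorem. The main (minor) subtlety is simply verifying the disintegration identity $P^\gamma_{P} = P \otimes K^\gamma$ carefully on the infinite product space, which follows directly from the product formula in the definition of strategic measures; no contraction or dynamic programming machinery is needed.
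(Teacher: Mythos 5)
Your proof is correct and follows essentially the same route as the paper: the same max-trick reduction to a fixed-policy comparison, followed by the observation that the total variation distance between strategic measures under a common policy is controlled by $\|P_n - P\|_{TV}$ (the paper invokes the argument of Lemma~\ref{lemma:joint} for this, which is exactly your disintegration step). The only cosmetic difference is that you bound the entire discounted sum as a single path-space function with supremum norm $\|c\|_\infty/(1-\beta)$, whereas the paper bounds term-by-term and sums the geometric series.
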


\begin{proof}
From inequalities \eqref{eq:ineq1}, \eqref{eq:ineq2} and (\ref{eq:max}) we have that $|J^*_{\beta}(P_n,\mathcal{T})-J^*_{\beta}(P,\mathcal{T})|$ is upper bounded as follows,
\begin{align*}
  &|J^*_{\beta}(P_n,\mathcal{T})-J^*_{\beta}(P,\mathcal{T})|\nonumber \\
  &\quad \leq \max\bigg(\sum_k \beta^k  \|P^{\gamma^n}_{P}(x_k \in \cdot, y_{[0,k]}\in\cdot) - P^{\gamma^n}_{P_n}(x_k \in\cdot,  y_{[0,k]}\in\cdot) \|_{TV} \sup c(x,\gamma^n(y_{[0,k]})), \nonumber \\
  & \quad\quad\quad \quad \quad \sum_k \beta^k  \|P^{\gamma}_{P_n}(x_k\in\cdot, y_{[0,k]}\in\cdot) - P^{\gamma}_{P}( x_k\in\cdot, y_{[0,k]}\in\cdot) \|_{TV} \sup c(x,\gamma(y_{[0,k]})\bigg) \nonumber 
  \end{align*}
  For any $\gamma \in \Gamma$, we have,
  \begin{align}
&\sum_k \beta^k  \|P^{\gamma}_{P_n}(x_k\in\cdot, y_{[0,k]}\in\cdot) - P^{\gamma}_{P}(x_k \in\cdot, y_{[0,k]}\in\cdot) \|_{TV} \sup c(x,\gamma(y_{[0,k]})) \nonumber \\
  &\quad = \sum_k \beta^k  \|P^{\gamma}_{P}(x_k\in\cdot, y_{[0,k]}\in\cdot) - P^{\gamma}_{P_n}(x_k\in\cdot, y_{[0,k]}\in\cdot) \|_{TV} \|c\|_\infty \nonumber \\
  &\quad \leq  \frac{1}{1 - \beta}  \|c\|_\infty \|P_n(x_0\in\cdot)-P(x_0\in\cdot)\|_{TV}.
\end{align}
Here, we use the property that 
\begin{align}
 & \|P^{\gamma}_{P_n}(x_k\in\cdot, y_{[0,k]}\in\cdot) - P^{\gamma}_P(x_k\in\cdot, y_{[0,k]}\in\cdot) \|_{TV} \nonumber \\
 & \quad \leq \|P^{\gamma}_{P_n}((x,y,u)_{[0,\infty)}\in\cdot) - P^{\gamma}_P((x,y,u)_{[0,\infty)}\in\cdot) \|_{TV} =  \|P_n(x_0\in\cdot)-P(x_0\in\cdot) \|_{TV},  \nonumber
\end{align}
similar to the derivation in Lemma \ref{lemma:joint}.
\end{proof}

\subsection{Some remarks on the infinite horizon average cost setup}
In this section, we show that the conditions presented earlier may not lead to continuity for the infinite horizon setup but alternative conditions will likely be useful. Consider an infinite horizon average cost setup with the  objective function given by
\begin{align*}
    J_{\infty}(P,\mathcal{T},\gamma)= \limsup_{T \to \infty} {\frac{1}{T}}E_P^{Q,\gamma}\left[\sum_{t=0}^{T-1}c(X_t,U_t)\right].
  \end{align*}
The optimal cost is given by
\[J_{\infty}^*(P,\mathcal{T})=\inf_{\gamma\in\Gamma}J_{\infty}(P,\mathcal{T},\gamma).\]

Given an optimal stationary policy, whose existence follows from the conditions given in Assumption \ref{weak:as3}, say through the convex analytic method (see Borkar \cite{Borkar2}), the process process $\pi_k$ becomes Markovian. It is known that (see \cite[Theorems~2.3.4-2.3.5]{Hernandez}) if this belief Markov process admits a unique invariant measure then
\[\frac{1}{N}E_{\pi_0}\big[\sum_{k=0}^{N-1}c(\pi_k)\big] \to c\] 
almost surely for all initial conditions (that is: 'priors') for some constant $c$. Thus, if the belief process has a unique invariant measure, the continuity in priors holds immediately as the average cost does not depend on the prior measure provided that the initial prior belongs to the support set of the invariant measure. However, checking ergodicity is a challenging problem for the controlled setup; this is a subject of current research.

On the other hand, we show in the following that Assumption \ref{weak:as3} does not alone guarantee continuity.


\begin{exmp}
Let $\mathds{X} = \mathds{R}$, $\mathds{U}=[-1,1]$, $\mathds{Y}=[-1,1]$.
Suppose we are given two initial distributions, the transition kernel and the measurement channel as
\begin{align*}
P(\cdot) &= \delta_0(\cdot)\\
P_n(\cdot) &= \frac{1}{2}\delta_{1/n}(\cdot) + \frac{1}{2}\delta_{-1/n}(\cdot)\\
\mathcal{T}(\cdot|x,u)&=\delta_{2x}(\cdot)\\
Q(\cdot|x)&=U[-1,1].
\end{align*}
Notice that the transition kernel is weakly continuous in $(x,u)$, the measurement channel is continuous in total variation and $P_n\to P$ weakly. 

The stage-wise cost function is defined by as follows.
\begin{align*}
c(x,u)=&\begin{cases}(x+u)^2 \quad \text{ if } |x|\leq1\\
(1+u)^2  \quad \text{ if } |x|>1\end{cases}
\end{align*}
So the cost is always bounded. The optimal control actions for both initial distributions at any time $k\geq0$ are $\gamma(y_{[0,k]})=0$ and $\gamma_n(y_{[0,k]})=0$.

It is easy to see that the optimal cost for $P$ is 0. The optimal cost for $P_n$ can be calculated as follows.
\begin{align*}
&J_\infty^*(P_n,\mathcal{T})=\lim_{N \to \infty}\frac{1}{N}\bigg(\sum_{k=0}^{\log_2 n}\big(\frac{2^k}{n}\big)^2 + \sum_{k=\log_2 n +1}^{N}(1)^2\bigg)\\
&=\lim_{N \to \infty}\frac{1}{N}\bigg(\frac{4(n^2)}{3}-\frac{1}{3} +N-\log_2n +1\bigg)=1 \neq 0
\end{align*}
\end{exmp}
A more complete treatment for the average cost case will be reported in future work.

\section{Robustness}\label{robustness}

\subsection{Robustness to incorrect priors and mismatch bounds for single stage problems}

First, for this subsection, we will consider the single stage stochastic control problem with cost function $c$, initial distribution $P$. We shall denote this problem by $\Xi=(c,P)$. Consider the following problem: let $\widetilde{P}$ be another initial distribution. Decision maker $DM$ computes an optimal policy, $\widetilde{\gamma}^*$, for the problem ${\widetilde{\Xi}=(c,\widetilde{P})}$ and applies it to $\Xi$. Can we approximate the loss in performance, that is, can we find a bound on $J(P,Q,\widetilde{\gamma}^*)-J^*(P,Q)$? This situation naturally arises when the initial distribution, $P$, is uncertain and $DM$ has a prior belief, $\widetilde{P}$, which is perhaps based on an incorrect initial model.


 \begin{prop}\label{prop:tvbound}
    Assume that $c:\mathds{X}\times\mathds{U}\to\R$ is nonnegative, measurable, and bounded. Let $\widetilde{\gamma}^*$ be an optimal (or $\varepsilon$-optimal) control policy for the single stage stochastic control problem $\widetilde{\Xi}=(c,\widetilde{P},Q)$, where $c$ is a cost function, $\widetilde{P}$ is an initial distribution, and $Q$ is a measurement channel. Let $P$ be another probability distribution on $\mathds{X}$. Then $|J(P,Q,\widetilde{\gamma})-J^*(P,Q)|\leq2\|c\|_\infty \|P-\widetilde{P}\|_{TV}$.
  \end{prop}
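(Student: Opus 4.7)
The plan is to insert the intermediate quantity $J(\widetilde{P},Q,\widetilde{\gamma}^*)$ and apply the triangle inequality:
\[
|J(P,Q,\widetilde{\gamma}^*)-J^*(P,Q)|\leq |J(P,Q,\widetilde{\gamma}^*)-J(\widetilde{P},Q,\widetilde{\gamma}^*)|+|J(\widetilde{P},Q,\widetilde{\gamma}^*)-J^*(P,Q)|.
\]
This decomposition converts the robustness bound into two problems already solved earlier in the paper: a \emph{same-policy, different-prior} bound, and a \emph{different-prior, optimal-cost} continuity bound.

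For the first term, the policy $\widetilde{\gamma}^*$ is held fixed, so the difference is the integral of the single (bounded) function $c(x,\widetilde{\gamma}^*(y))$ against the two joint measures $PQ$ and $\widetilde{P}Q$. Bounding the integrand by $\|c\|_\infty$ and invoking the definition of total variation together with Lemma~\ref{lemma:joint}(iii), which gives $\|PQ-\widetilde{P}Q\|_{TV}=\|P-\widetilde{P}\|_{TV}$, yields
\[
|J(P,Q,\widetilde{\gamma}^*)-J(\widetilde{P},Q,\widetilde{\gamma}^*)| \leq \|c\|_\infty\,\|P-\widetilde{P}\|_{TV}.
\]

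For the second term, I would use the (near-)optimality of $\widetilde{\gamma}^*$ for $\widetilde{P}$, so that $J(\widetilde{P},Q,\widetilde{\gamma}^*)=J^*(\widetilde{P},Q)$ (up to an $\varepsilon$ slack in the $\varepsilon$-optimal case), and then apply the total variation continuity result established in Section~\ref{section:weakcont}--\ref{ch:Prior} for $J^*(\cdot,Q)$, which directly gives $|J^*(\widetilde{P},Q)-J^*(P,Q)|\leq \|c\|_\infty\,\|P-\widetilde{P}\|_{TV}$ by essentially the same TV argument applied after taking the sup over policies.

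Adding the two estimates produces the factor $2$ in the proposition. There is no real obstacle: the only care needed is to express the single-stage cost as an integral against the joint measure $PQ$ so that Lemma~\ref{lemma:joint}(iii) applies without modification, and, in the $\varepsilon$-optimal case, to observe that the additive $\varepsilon$ slack is harmless since it may be taken arbitrarily small.
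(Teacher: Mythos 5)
Your proposal is correct and is essentially the paper's own argument: the paper likewise inserts $J(\widetilde{P},Q,\widetilde{\gamma}^*)$, bounds the same-policy term by $\|c\|_\infty\|P-\widetilde{P}\|_{TV}$ via the estimate \eqref{eq:TVbound} (which rests on Lemma~\ref{lemma:joint}(iii)), and bounds $|J^*(\widetilde{P},Q)-J^*(P,Q)|$ by the total variation continuity of the optimal cost, handling $\varepsilon$-optimality with an arbitrarily small slack exactly as you note. No gaps.
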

  \begin{proof}
    Let $\gamma^*$ denote an optimal (or $\varepsilon$-optimal) policy for the problem $\Xi=(c,P,Q)$. We have,
    \begin{align*}
      |J(P,Q,\widetilde{\gamma}^*)-J^*(P,Q)|&=|J(P,Q,\widetilde{\gamma}^*)-J(\widetilde{P},Q,\widetilde{\gamma}^*)+J(\widetilde{P},Q,\widetilde{\gamma}^*)-J(P,Q,\gamma^*)|\\
      &\leq|J(P,Q,\widetilde{\gamma}^*)-J(\widetilde{P},Q,\widetilde{\gamma}^*)|+|J^*(\widetilde{P},Q)-J^*(P,Q)| ~(+2\varepsilon)\\
      &\leq2\|c\|_\infty \|P-\widetilde{P}\|_{TV},
    \end{align*}
    where we have used Equation \eqref{eq:TVbound} for the final inequality.
  \end{proof}

Similar to the continuity section, for the weak convergence of priors we have a negative result. The following result says that the mismatch error may not diminish even if our belief model converges weakly to the true model.
 \begin{prop}\label{prop:SWextension}
    Let an initial distribution $P\in\mathcal{P}(\mathds{X})$, and a cost function $c:\mathds{X}\times\mathds{U}\to \R$ be given. Assume that $P_n \to P$ weakly and let $\gamma_{P_n}^*$ be an optimal (or $\varepsilon$-optimal) policy for the control problem $\Xi_n=(c,P_n)$. It does not follow that $J(P,Q,\gamma_{P_n}^*)\to J^*(P,Q)$ as $n\to\infty$. This result holds even if $c$ is bounded and continuous in both $x$ and $u$.
  \end{prop}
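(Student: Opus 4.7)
My plan is to establish the proposition by a counterexample, and the natural candidate is to recycle the construction used in the proof of Theorem~\ref{weak_count}. That example already shows that the \emph{optimal} cost map $P\mapsto J^*(P,Q)$ fails to be continuous under weak convergence for a bounded continuous quadratic $c$ and a specific erasure channel; it is plausible that the same construction witnesses the failure of robustness, because the obstruction to continuity there is precisely that $\gamma_{P_n}^*$ differs materially from $\gamma^*$ even as $P_n\to P$ weakly.

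Concretely, I would take $\mathds{X}=\mathds{U}=\mathds{Y}=[0,1]$, $c(x,u)=(x-u)^2$, the channel $Q(\cdot\mid x)=\tfrac12\delta_x+\tfrac12\delta_0$, and $P=\tfrac12\delta_0+\tfrac12\delta_1$, $P_n=\tfrac12\delta_{1/n}+\tfrac12\delta_1$, so $P_n\to P$ weakly. The proof of Theorem~\ref{weak_count} already supplies the optimal policies $\gamma_{P_n}^*(0)=\tfrac{n+1}{2n}$, $\gamma_{P_n}^*(1)=1$, $\gamma_{P_n}^*(1/n)=\tfrac{1}{n}$, together with $J^*(P,Q)=\tfrac{1}{6}$.

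Next I would compute $J(P,Q,\gamma_{P_n}^*)$ directly. Under the true prior $P$ the state $X$ takes only the values $0$ or $1$, so the channel produces $Y\in\{0,1\}$ almost surely and only the values $\gamma_{P_n}^*(0)$ and $\gamma_{P_n}^*(1)$ are ever invoked (the value at $1/n$ is irrelevant). Enumerating the four $(X,Y)$ atoms and their probabilities gives
\[
J(P,Q,\gamma_{P_n}^*)=\tfrac12\Bigl(\tfrac{n+1}{2n}\Bigr)^{\!2}+\tfrac14\Bigl(\tfrac{n-1}{2n}\Bigr)^{\!2}+\tfrac14\cdot 0,
\]
which tends to $\tfrac{1}{8}+\tfrac{1}{16}=\tfrac{3}{16}$ as $n\to\infty$. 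Since $\tfrac{3}{16}\neq\tfrac{1}{6}$, we conclude $J(P,Q,\gamma_{P_n}^*)\not\to J^*(P,Q)$, establishing the claim.

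There is essentially no hard step: the main thing to check is that the policies $\gamma_{P_n}^*$, defined only as measurable functions on $[0,1]$ (with the proof of Theorem~\ref{weak_count} pinning them down on $\{0,1,1/n\}$), can be extended arbitrarily off the support of the $P_n$-induced $Y$-marginal without affecting the mismatch calculation under $P$, because the set of $y$-values ever seen under $P$ is $\{0,1\}$. The one potential subtlety is that $\varepsilon$-optimal versions of $\gamma_{P_n}^*$ also produce the same limiting mismatch $\tfrac{3}{16}$ up to $\varepsilon$, so the conclusion is robust to that mild ambiguity in the statement.
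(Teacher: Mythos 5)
Your proof is correct, and it takes a genuinely different route from the paper. The paper proves this proposition by extending its setwise-convergence counterexample (Theorem~\ref{ex:setwise}): the square-wave densities $f_n=1+h_n$ converging to the uniform density, the channel $Q(\cdot|x)=\tfrac12\delta_x+\tfrac12 U([0,1])$, and a lengthy appendix computation yielding $J(P,Q,\gamma_{P_n}^*)=\tfrac{2}{27}+\tfrac{5}{72n^2}\to\tfrac{2}{27}\neq\tfrac{1}{16}=J^*(P,Q)$. You instead recycle the atomic erasure-channel example of Theorem~\ref{weak_count}, and your arithmetic checks out: under $P$ the joint law charges $(0,0)$, $(1,0)$, $(1,1)$ with masses $\tfrac12,\tfrac14,\tfrac14$, the mismatched policy is pinned down at $y\in\{0,1\}$ because these are atoms of positive $P_nQ$-mass (so every optimal policy for $P_n$ agrees with the conditional mean there), and $J(P,Q,\gamma_{P_n}^*)=\tfrac12\bigl(\tfrac{n+1}{2n}\bigr)^2+\tfrac14\bigl(\tfrac{n-1}{2n}\bigr)^2\to\tfrac{3}{16}\neq\tfrac16$. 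Your approach is considerably more elementary (finite atomic computations versus the appendix integrals), and it suffices for the proposition as stated. What the paper's choice buys is a strictly stronger negative result: in its example $P_n\to P$ \emph{setwise}, not merely weakly, so robustness is shown to fail even under that stronger mode of convergence; your $P_n=\tfrac12\delta_{1/n}+\tfrac12\delta_1$ converges only weakly (e.g.\ $P_n(\{0\})=0\not\to\tfrac12=P(\{0\})$). One cosmetic point: since exact optimal policies exist and are unique on the relevant atoms in your example, the parenthetical about $\varepsilon$-optimal policies is unnecessary — a single valid choice of optimizers already refutes the convergence claim.
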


\begin{proof}
    We use an extension of the proof from Theorem~\ref{ex:setwise}. Let $\mathds{X}=\mathds{Y}=\mathds{U}=[0,1]$ and let $c(x,u)=(x-u)^2$. We let $P_n \sim f_n$ for $n\in\N$ as in the proof of Theorem~\ref{ex:setwise}. We have that $P_n\to P \sim U([0,1])$ setwise. We let the channel be
    \begin{align*}
      Q(\cdot|x)\sim\frac{1}{2}\cdot\delta_x+\frac{1}{2}\cdot U([0,1]).
    \end{align*}
    As presented in the proof of Theorem~\ref{ex:setwise}, the optimal policy for the control problem $\Xi_n=(c,P_n,Q)$ is given by,
    \begin{align*}
      \gamma_{P_n}^*(y)=
      \begin{cases}
        \frac{1}{2}-\frac{1}{4n} & \text{if } y\in\cup_{k=1}^n R_{n,k}\\
        \frac{1}{3}\cdot\left(\frac{1}{2}-\frac{1}{4n}\right)+\frac{2}{3}y & \text{if } y\in\cup_{k=1}^n L_{n,k}
      \end{cases}.
    \end{align*}
    If the decision maker applies $\gamma_{P_n}^*$ to the control problem $\Xi=(c,P,Q)$, this results in the following cost (the calculations can be found in \ref{app:SWextension}):
    \begin{align*}
      J(P,Q,\gamma_{P_n}^*)&=\frac{2}{27}+\frac{5}{72n^2}.
    \end{align*}
    We notice that as $n\to\infty$, $|J(P,Q,\widetilde{\gamma}^*)-J^*(P,Q)|\to\frac{2}{27}-\frac{1}{16}=\frac{5}{432}\neq0$.
  \end{proof}

We now present a positive result for weak convergence. 
 
\begin{theorem}\label{thm:weaksuffRobust}
Suppose that $c(x,u)$ is bounded, $\mathds{U}$ is compact and Assumption \ref{assumptionTV} holds. Then, as $P_n \to P$ weakly $\lim_{n\to\infty}|J(P,Q,\gamma_{P_n}^*)-J^*(P,Q)| \to 0$, that is the system is robust to errors in the priors under weak convergence.
\end{theorem}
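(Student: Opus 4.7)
The plan is to decompose the error by adding and subtracting an intermediate cost term:
\begin{align*}
|J(P,Q,\gamma_{P_n}^*) - J^*(P,Q)| &\leq |J(P,Q,\gamma_{P_n}^*) - J(P_n,Q,\gamma_{P_n}^*)| \\
&\quad + |J(P_n,Q,\gamma_{P_n}^*) - J^*(P,Q)|\\
&= |J(P,Q,\gamma_{P_n}^*) - J(P_n,Q,\gamma_{P_n}^*)| + |J^*(P_n,Q) - J^*(P,Q)|,
\end{align*}
where the last equality uses that $\gamma_{P_n}^*$ is optimal for $P_n$. The second term tends to $0$ by the continuity result of Theorem \ref{thm:weaksuff}, whose hypotheses coincide with those assumed here.

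The crux is therefore to show that the first term, which may be written as
$$\left|\int (P_n-P)(dx)\, f_n(x)\right|, \qquad f_n(x) := \int Q(dy|x)\, c(x,\gamma_{P_n}^*(y)),$$
vanishes as $n \to \infty$. Since $c$ is bounded, the family $\{f_n\}$ is uniformly bounded by $\|c\|_\infty$. I will invoke the standard fact (Lemma C.1 of \cite{gupta2014existence}, quoted in the proof of Theorem \ref{thm:weaksuff}) that if $\{f_n\}$ is equicontinuous and $P_n\to P$ weakly, then $\int f_n\, d(P_n - P) \to 0$. Thus it suffices to establish equicontinuity of the family $\{f_n\}$.

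Equicontinuity is precisely the content of property (\ref{equiC}) derived in the proof of Theorem \ref{thm:weaksuff}: for any $x_k \to x$,
$$\lim_{k\to\infty}\sup_{\gamma\in\Gamma} \left|\int Q(dy|x_k) c(x_k,\gamma(y)) - \int Q(dy|x) c(x,\gamma(y))\right| = 0,$$
which holds because Assumption \ref{assumptionTV} controls the first term in the splitting (\ref{firstT}) while compactness of $\mathds{U}$ together with continuity of $c$ in $x$ controls the second term (\ref{secondT}) by uniform continuity. Since the supremum in (\ref{equiC}) is taken over \emph{all} admissible policies, it applies in particular to $\gamma_{P_n}^*$ for each $n$, yielding the required equicontinuity of $\{f_n\}$.

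The main technical point is the equicontinuity step, since one must ensure the convergence in (\ref{equiC}) is genuinely uniform over $\gamma\in\Gamma$, not merely pointwise in $\gamma$; this is precisely what is gained from the total-variation continuity of $Q$ (which dominates the channel term independently of $\gamma$) and from uniform continuity of $c(\cdot,u)$ on the compact set $\mathds{U}$. Once equicontinuity is in hand, weak convergence $P_n \to P$ closes out the argument together with the continuity result of Theorem \ref{thm:weaksuff}.
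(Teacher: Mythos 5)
Your proposal is correct and follows essentially the same route as the paper: the same decomposition through $J(P_n,Q,\gamma_{P_n}^*)$, with the second term handled by Theorem~\ref{thm:weaksuff} and the first term handled by the equicontinuity property (\ref{equiC}) combined with the weak-convergence fact from Lemma C.1 of \cite{gupta2014existence}. In fact you spell out more explicitly than the paper does why (\ref{equiC}) yields equicontinuity of the specific family $f_n(x)=\int Q(dy|x)c(x,\gamma_{P_n}^*(y))$, which the paper only references as ``the same argument used in (\ref{maxterm})''.
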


\begin{proof}
  \begin{align}
    &\lim_{n\to\infty}|J(P,Q,\gamma_{P_n}^*)-J^*(P,Q)|\nonumber \\
    &\quad\leq\lim_{n\to\infty}|J(P,Q,\gamma_{P_n}^*)-J(P_n,Q,\gamma_{P_n}^*)| + \lim_{n\to\infty}|J(P_n,Q,\gamma_{P_n}^*)-J(P,Q)| \label{eq:weakrobust}\\
    &\quad=0.\nonumber
  \end{align}
  We note that the first term goes to 0 with the same argument used in \ref{maxterm} and the second terms goes to zero by Theorem~\ref{thm:weaksuff}. We conclude that under the given assumptions, the control problem is robust under weak convergence.
 \end{proof} 

\subsection{Robustness to incorrect priors for multi stage problems}


The following result holds in generality. 

\begin{theorem}
If $c(x,u)$ is a non-negative, measurable and bounded function in $\mathds{X}\times\mathds{U}$ then
\[|J_{\beta}(P,Q,\gamma_{P_n}^*)-J_{\beta}^*(P,Q)| \leq 2 \|P_n(\dd x_0)-P(\dd x_0) \|_{TV} \frac{1}{1 - \beta}  \|c\|_\infty. \]
\end{theorem}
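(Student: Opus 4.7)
The plan is to follow exactly the same decomposition used in Proposition \ref{prop:tvbound} for the single-stage case, and then bound each of the resulting two pieces by invoking the strategic-measure machinery that was already set up for Theorem \ref{multiStage}. Letting $\gamma^*$ denote an optimal (or $\varepsilon$-optimal) policy for the true prior $P$, I would write
\begin{align*}
|J_{\beta}(P,Q,\gamma_{P_n}^*) - J_{\beta}^*(P,Q)|
&= |J_{\beta}(P,Q,\gamma_{P_n}^*) - J_{\beta}(P,Q,\gamma^*)| \\
&\leq |J_{\beta}(P,Q,\gamma_{P_n}^*) - J_{\beta}(P_n,Q,\gamma_{P_n}^*)|
+ |J_{\beta}(P_n,Q,\gamma_{P_n}^*) - J_{\beta}(P,Q,\gamma^*)|,
\end{align*}
and observe that the second term equals $|J_{\beta}^*(P_n,\mathcal{T}) - J_{\beta}^*(P,\mathcal{T})|$ (up to $\varepsilon$ if only near-optimal policies exist), so Theorem \ref{multiStage} already bounds it by $\frac{\|c\|_\infty}{1-\beta}\|P_n-P\|_{TV}$.

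The remaining job is to show the same bound for the \emph{fixed-policy} mismatch $|J_{\beta}(P,Q,\gamma_{P_n}^*) - J_{\beta}(P_n,Q,\gamma_{P_n}^*)|$. For this, I would reproduce the strategic-measure argument used in the proof of Theorem \ref{multiStage} but without taking any infimum: with $\gamma = \gamma_{P_n}^*$ held fixed, the induced strategic measures $P^{\gamma}_{P}$ and $P^{\gamma}_{P_n}$ on $\mathcal{B}((\mathds{X}\times\mathds{Y}\times\mathds{U})^{\mathbb{Z}_+})$ differ only in the initial factor $P$ vs.\ $P_n$, and Lemma \ref{lemma:joint}(iii) applied iteratively along the Ionescu--Tulcea construction yields
\[
\|P^{\gamma}_{P}(\cdot) - P^{\gamma}_{P_n}(\cdot)\|_{TV} = \|P - P_n\|_{TV}.
\]
Expanding $J_\beta(\cdot,Q,\gamma)$ as $\sum_{k=0}^\infty \beta^k \int c(x_k,\gamma(y_{[0,k]}))\,dP^{\gamma}_{(\cdot)}$ and using the dual characterization of total variation term-by-term gives
\[
|J_{\beta}(P,Q,\gamma) - J_{\beta}(P_n,Q,\gamma)| \;\leq\; \sum_{k=0}^\infty \beta^k \|c\|_\infty \|P-P_n\|_{TV} \;=\; \frac{\|c\|_\infty}{1-\beta}\|P-P_n\|_{TV}.
\]

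Summing the two bounds produces the stated factor of $2$. The main (very minor) obstacle is purely bookkeeping: one must be careful that the strategic-measure total-variation identity in fact factors cleanly through a common $Q$ and $\mathcal{T}$, so that only the initial-state factor contributes to the discrepancy; this is precisely what the derivation in Lemma \ref{lemma:joint} (applied successively to $Q$ and to $\mathcal{T}(\cdot|x,u)$) delivers. Once that is in place, the argument is merely the triangle inequality plus Theorem \ref{multiStage}, and no extra hypotheses beyond non-negativity, measurability, and boundedness of $c$ are needed.
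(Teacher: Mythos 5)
Your proposal is correct and follows essentially the same route as the paper: the same triangle-inequality decomposition into the fixed-policy mismatch term plus the difference of optimal costs, with the latter handled by Theorem \ref{multiStage} and the former by the same strategic-measure total-variation identity (only the initial factor $P$ versus $P_n$ differs, so the induced measures satisfy $\|P^{\gamma}_{P}-P^{\gamma}_{P_n}\|_{TV}=\|P-P_n\|_{TV}$), yielding $\frac{\|c\|_\infty}{1-\beta}\|P_n-P\|_{TV}$ for each piece and hence the factor of $2$.
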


\begin{proof}
We use that 
\[J_{\beta}(P,Q,\gamma_{P_n}^*)-J_{\beta}^*(P,Q) = J_{\beta}(P,Q,\gamma_{P_n}^*) - J_{\beta}(P_n,Q,\gamma_{P_n}^*) + J_{\beta}(P_n,Q,\gamma_{P_n}^*) - J_\beta^*(P,Q).\]
From inequalities \eqref{eq:ineq1}, \eqref{eq:ineq2} and (\ref{eq:max}) we have that $|J_{\beta}(P,Q,\gamma_{P_n}^*) - J_{\beta}(P_n,Q,\gamma_{P_n}^*)|$ is upper bounded as $ \|P_n(\dd x_0)-P(\dd x_0) \|_{TV} \frac{1}{1 - \beta}  \|c\|_\infty $.
The analysis is then complete by considering Theorem \ref{multiStage}.
\end{proof}

 We now develop a robustness result under weak convergence of priors for multi-stage case. First, we give a lemma showing that for any multi-stage setting with a controlled Markov chain satisfying Assumption \ref{weak:as3}, the cost at any time stage is continuous in priors under weak convergence.

\begin{theorem}\label{weak_robust}
Under Assumption \ref{weak:as3}, as $P_n \to P$ weakly, we have,
\[|J_{\beta}(P,\mathcal{T},\gamma_{P_n}^*)-J_{\beta}^*(P,\mathcal{T})| \to 0\]
\end{theorem}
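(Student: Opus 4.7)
The plan is to mirror the single-stage robustness argument (Theorem \ref{thm:weaksuffRobust}) at the discounted infinite-horizon level, by splitting the mismatch into a ``same-policy, different-prior'' piece and a ``different-policy, same-prior'' piece, and then handing each piece to a result that is already in hand.

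First I would write the triangle-type bound
\begin{align*}
|J_\beta(P,\mathcal{T},\gamma_{P_n}^*) - J_\beta^*(P,\mathcal{T})|
&\leq |J_\beta(P,\mathcal{T},\gamma_{P_n}^*) - J_\beta(P_n,\mathcal{T},\gamma_{P_n}^*)| \\
&\quad + |J_\beta(P_n,\mathcal{T},\gamma_{P_n}^*) - J_\beta^*(P,\mathcal{T})|.
\end{align*}
Since $\gamma_{P_n}^*$ is (an $\varepsilon$-)optimal for the prior $P_n$, the second term equals (up to $\varepsilon$) $|J_\beta^*(P_n,\mathcal{T}) - J_\beta^*(P,\mathcal{T})|$, which vanishes as $n \to \infty$ by Theorem \ref{ContWeakConv}. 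So the whole task reduces to controlling the first term, where the policy is held fixed at $\gamma_{P_n}^*$ but the prior varies from $P_n$ to $P$.

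The key observation is that the intermediate step (\ref{weak_same_policy}) established inside the proof of Theorem \ref{ContWeakConv} is stronger than mere continuity: it gives
\[
\sup_{\gamma \in \Gamma}|J_{\beta}(P_n,\mathcal{T},\gamma) - J_{\beta}(P,\mathcal{T},\gamma)| \to 0
\]
as $P_n \to P$ weakly, under Assumption \ref{weak:as3}. Since $\gamma_{P_n}^* \in \Gamma$ for every $n$, trivially
\[
|J_\beta(P,\mathcal{T},\gamma_{P_n}^*) - J_\beta(P_n,\mathcal{T},\gamma_{P_n}^*)| \leq \sup_{\gamma \in \Gamma}|J_{\beta}(P_n,\mathcal{T},\gamma) - J_{\beta}(P,\mathcal{T},\gamma)|,
\]
so this first term goes to zero as well. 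Combining the two pieces finishes the argument.

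The conceptual difficulty was already absorbed into the earlier continuity theorem: the uniform-in-$\gamma$ bound (\ref{weak_same_policy}) was obtained by truncating the horizon using $\beta^k$ tail smallness and then using Lemma \ref{cost_weak} to show that the finitely many per-stage costs converge uniformly over $\gamma \in \Gamma$. The main obstacle, if one were proving this from scratch, would be precisely that uniform-in-policy convergence of finite-horizon expected stage costs; without it, using the non-uniform continuity of $J_\beta(\cdot,\mathcal{T},\gamma)$ in the prior for each \emph{fixed} $\gamma$ would be insufficient, since the policies $\gamma_{P_n}^*$ vary with $n$ and in general have no weak-limit behaviour one can appeal to. Fortunately, because the tail bound $\sum_{k \geq T}\beta^k 2\|c\|_\infty$ is independent of $\gamma$ and Lemma \ref{cost_weak} yields a $\gamma$-uniform estimate at each finite stage, the supremum over $\gamma$ can be brought inside before taking $n \to \infty$, and the proof goes through exactly as sketched.
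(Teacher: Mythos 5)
Your proposal is correct and follows essentially the same route as the paper's proof: the same triangle-inequality decomposition, with the second term handled by Theorem \ref{ContWeakConv} (using optimality of $\gamma_{P_n}^*$ for $P_n$) and the first term by the policy-uniform bound (\ref{weak_same_policy}). Your added remark explaining why the uniformity over $\gamma$ is the essential ingredient (since $\gamma_{P_n}^*$ varies with $n$) is a faithful reading of what makes the argument work.
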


\begin{proof}
We use the following bound again,
\[|J_{\beta}(P,\mathcal{T},\gamma_{P_n}^*)-J_{\beta}^*(P,\mathcal{T})| \leq |J_{\beta}(P,\mathcal{T},\gamma_{P_n}^*) - J_{\beta}(P_n,\mathcal{T},\gamma_{P_n}^*)| + |J_{\beta}(P_n,\mathcal{T},\gamma_{P_n}^*) - J_\beta^*(P,\mathcal{T})|.\]
Here, $\gamma_{P_n}^*$ is optimal for prior $P_n$, the existence of $\gamma_{P_n}^*$, and an optimal policy $\gamma$ for prior $P$ follows from Theorem \ref{ContWeakConv}. The second term goes to zero by Theorem \ref{ContWeakConv}. The first term goes to zero by (\ref{weak_same_policy}) which states that
\begin{align*}
\sup_{\gamma \in \Gamma}|J_{\beta}(P_n,\mathcal{T},\gamma) - J_{\beta}(P,\mathcal{T},\gamma)| \to 0.
\end{align*}
\end{proof}

\section{Implications for Empirical Learning Methods in Stochastic Control}\label{empiricalLearning}
In engineering practice, when one does not know the probability measure for a random variable one typically attempts to learn it via test inputs
or empirical observations.  Let $\{(X_i), i \in \mathbb{N} \}$ be an $\mathbb{X}$-valued i.i.d random variable
sequence generated according to some distribution $\mu$. 

Defining for every (fixed) Borel $B \subset \mathbb{X}$, and $n \in
\mathbb{N}$, the empirical occupation measures
\[
\mu_n(B)=
\frac{1}{n}\sum_{i=1}^{n} 1_{\{X_i \in B\}},
\]
one has $\mu_n(B) \to \mu(B)$ almost surely (a.s$.$) by the strong law
of large numbers. Also, $\mu_n \to \mu$ weakly with probability one (\cite{Dudley02}, Theorem 11.4.1).

However, $\mu_n$ can not converge to $\mu$ in total variation, in general. On the other hand, if we know that $\mu$ admits a density, we can find estimators to estimate $\mu$ under total variation \cite{Devroye85}.

As discussed above, the empirical averages converge almost surely. By a similar reasoning, for a given bounded measurable function $f$, $\int \mu_n(dx) f(x)$ converges to $\int \mu(dx) f(x)$. This then also holds for any {\it finite} collection of functions, $f_1, \cdots, f_n$ for some $n \in \mathbb{N}$. A relevant question is the following: Can one ensure uniform convergence (over a family of functions) with arbitrary precision by only guaranteeing convergence for a finite collection of functions. This entails the problem of covering a family of functions with arbitrarily small neighborhoods of finitely many functions under an appropriate distance metric. The answer to this question is studied by the theory of {\it empirical risk minimization}: In the learning theoretic context when one tries to estimate the source distribution, the convergence of optimal costs under $\mu_n$ to the cost optimal for $\mu$ is called the {\it consistency of empirical risk minimization} \cite{Vap00}. 

In particular, if the following uniform convergence holds,
\begin{align}\label{unif_conv}
&\lim_{n \to \infty} \sup_{f \in {\cal F}} \bigg|\int
f(x) \mu_n(dx) - \int f(x) \mu(dx) \bigg|
=0,
\end{align}
for a class of measurable functions ${\cal F}$, then ${\cal F}$ is called a {\em $\mu$-Glivenko-Cantelli class} \cite{Dudley}. If the class ${\cal F}$ is $\mu$-Glivenko-Cantelli for every $\mu$, it is called a {\it universal Glivenko-Cantelli} class. One example of a universal Glivenko-Cantelli family of real functions on $\mathbb{R}^N$ is the
family $\{f:\, \|f\|_{BL}\le M\}$ for some $0<M<\infty$, where
$\|f\|_{BL}=\|f\|_\infty + \sup_{x_1\neq x_2}\frac{|f(x_1)-f(x_2)|}{|x_1-x_2|}$ ( \cite{Dudley}). For related characterizations and further examples, see \cite{raginsky2013empirical} \cite{VanHandel} \cite{dudley1999uniform}.

In another direction, if (\ref{unif_conv}) holds for any sequence of measures $\{\mu_n\}$ converging weakly to $\mu$ (rather than only empirical models), then ${\cal F}$ is called a  $\mu$-uniform class. For a subset, $B$, of $\mathds{X}$, the oscillation of ${\cal F}$ on $B$ is defined as
\[ w_{\cal F}(B)=\sup \{|f(x)-f(y)|, f \in {\cal F}, x,y \in B \},\]
in the case where $\cal F$ consists of a single function $f$ we use the notation $w_f (B)$ or $w_f B$. Then a characterization for uniformity classes is given by the following \cite{billingsley1967uniformity}: a necessary and sufficient condition for ${\cal F}$ to be a $P$-uniformity class is that 
\[  w_{\cal F}(\mathds{X}) < \infty\]
and
\[ \lim_{\delta \to 0} \sup_{f \in {\cal F}} P\{x: w_fS(x,\delta) > \epsilon\} =0\] for any $\epsilon>0$, where $S(x,\delta)$ is the ball around $x$ with radius $\delta$.
It can be seen that one example of a uniformity class on $\R^N$ is again the
family $\{f:\, \|f\|_{BL}\le M\}$ for some $0<M<\infty$.
 For a detailed discussion and characterization of these classes see \cite{billingsley1967uniformity}.


\subsection{Application to robustness to incorrect priors}
$\quad$


\begin{theorem}
Suppose that the prior model of the system is estimated with the i.i.d. measurements such that for every (fixed) Borel $B \subset \mathbb{X}$, and $n \in
\mathbb{N}$, the empirical occupation measures
\begin{align}\label{empir}
P_n(B)=
\frac{1}{n}\sum_{i=1}^{n} 1_{\{X_i \in B\}},
\end{align}
then under one of the following conditions,
 \begin{itemize}
\item[(i)]
if the stage-wise cost function $c(x,u)$ is non-negative, bounded and continuous on $\mathds{X}\times\mathds{U}$, the action space $\mathds{U}$ is compact and the measurement channel $Q(\cdot|x)$ is continuous in total variation,
\item[(ii)]
If we restrict the class of policies to ${\cal G}=\{\gamma: \int Q(dy|x)c(x,\gamma(y)) \in {\cal E}\},$ where ${\cal E}$ is a class of $P$-Glivenko-Cantelli family of functions,
\item[(iii)]
if we restrict the class of policies to ${\cal G}=\{\gamma: c(x,\gamma(y)) \in {\cal E}\},$ where ${\cal E}$ is a class of $PQ$-uniformity family of functions and if the measurement channel $Q(\cdot|x)$ is weakly continuous in $x$,
\end{itemize}
as $n \to \infty$, with probability 1, we have, $J^*(P_n,Q)\to J^*(P,Q)$ and $J(P,Q,\gamma_n^*) \to J^*(P,Q)$, where $\gamma_n^*$ is the optimal policy designed for the estimated model $P_n$. That is the optimal cost function and the optimal policies are consistent under empirical estimation.
\end{theorem}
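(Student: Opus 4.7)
The starting observation is that by Varadarajan's theorem (\cite{Dudley02}, Thm.~11.4.1), the empirical measures $P_n$ defined in (\ref{empir}) satisfy $P_n \to P$ weakly with probability one. The strategy is then to establish, under each of the three sets of hypotheses, a suitable convergence statement $|J(P_n,Q,\gamma)-J(P,Q,\gamma)| \to 0$ (either pointwise with enough regularity, or uniformly in $\gamma$), and to convert this into convergence of $J^*(P_n,Q)$ and into the robustness statement $J(P,Q,\gamma_n^*)\to J^*(P,Q)$ via the standard sandwich bound \eqref{eq:max}.

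For case (i), the plan is to apply the results already proved in the paper directly along the almost-sure event on which $P_n \to P$ weakly. Continuity $J^*(P_n,Q)\to J^*(P,Q)$ follows from Theorem~\ref{thm:weaksuff}, since its hypotheses (weak convergence of priors, $c$ bounded continuous, $\mathds{U}$ compact, $Q$ continuous in total variation) are exactly those assumed here. The robustness conclusion $J(P,Q,\gamma_n^*)\to J^*(P,Q)$ then follows from Theorem~\ref{thm:weaksuffRobust} applied to the same a.s.\ weakly convergent sequence.

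For case (ii), I would write $J(P',Q,\gamma)=\int h_\gamma(x)\,P'(dx)$ with $h_\gamma(x):=\int Q(dy|x)\,c(x,\gamma(y))$. Under the restriction $\gamma \in \mathcal{G}$, the family $\{h_\gamma : \gamma \in \mathcal{G}\}$ is contained in the $P$-Glivenko--Cantelli class $\mathcal{E}$, so by definition
\[\sup_{\gamma \in \mathcal{G}}\bigl|J(P_n,Q,\gamma)-J(P,Q,\gamma)\bigr| \;=\; \sup_{\gamma \in \mathcal{G}}\Bigl|\!\int h_\gamma\,dP_n - \int h_\gamma\,dP\Bigr| \;\longrightarrow\; 0\quad\text{a.s.}\]
Case (iii) is handled analogously, but viewing $J(P',Q,\gamma)=\int c(x,\gamma(y))\,(P'Q)(dx,dy)$. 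Lemma~\ref{lemma:joint}(i) and the weak continuity of $Q$ yield $P_nQ \to PQ$ weakly on the a.s.\ event, and by hypothesis the family $\{(x,y)\mapsto c(x,\gamma(y)): \gamma \in \mathcal{G}\}$ is a $PQ$-uniformity class, so the characterization of Billingsley--Topsøe \cite{billingsley1967uniformity} again delivers $\sup_{\gamma\in\mathcal{G}}|J(P_n,Q,\gamma)-J(P,Q,\gamma)| \to 0$ almost surely.

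With the uniform convergence in hand for (ii) and (iii), the remaining step is purely mechanical. Fix $\varepsilon>0$ and let $\gamma_n^*$ (resp.\ $\gamma^*$) be $\varepsilon$-optimal policies in $\mathcal{G}$ for $P_n$ (resp.\ $P$). The bound \eqref{eq:max} gives
\[|J^*(P_n,Q)-J^*(P,Q)| \;\le\; \sup_{\gamma\in\mathcal{G}}|J(P_n,Q,\gamma)-J(P,Q,\gamma)| + 2\varepsilon,\]
which tends to $2\varepsilon$ a.s.; letting $\varepsilon \downarrow 0$ yields the first conclusion. For the robustness conclusion, one writes
\[|J(P,Q,\gamma_n^*)-J^*(P,Q)| \;\le\; |J(P,Q,\gamma_n^*)-J(P_n,Q,\gamma_n^*)| + |J^*(P_n,Q)-J^*(P,Q)| + \varepsilon,\]
and bounds the first term again by $\sup_{\gamma\in\mathcal{G}}|J(P_n,Q,\gamma)-J(P,Q,\gamma)|$. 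The main conceptual obstacle is the one that the paper isolates in Section~\ref{empiricalLearning}: empirical measures converge only weakly, never in total variation for continuous $P$, so case (i) must genuinely rely on the weak-convergence regularity of the channel (Assumption~\ref{assumptionTV}), and cases (ii) and (iii) must genuinely restrict the admissible class so as to trade off generality against a Glivenko--Cantelli/uniformity property on the induced cost functionals.
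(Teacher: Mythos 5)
Your proposal is correct and follows essentially the same route as the paper: case (i) is reduced to Theorems~\ref{thm:weaksuff} and~\ref{thm:weaksuffRobust} on the almost-sure event where $P_n\to P$ weakly, case (ii) invokes the Glivenko--Cantelli property of $\{h_\gamma\}$ to get uniform convergence as in (\ref{unif_conv}), and case (iii) combines the $PQ$-uniformity class definition with Lemma~\ref{lemma:joint}(i). The only difference is that you spell out the final sandwich step via \eqref{eq:max} explicitly, which the paper leaves implicit.
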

\begin{proof}
 \begin{itemize}
\item[(i)]
The result follows from Theorem \ref{thm:weaksuff} and Theorem \ref{thm:weaksuffRobust}.
\item[(ii)] 
We write 
\begin{align*}
\lim_{n\to \infty} \sup_{\gamma \in\Gamma}\big|\int P_n(dx) \int Q(dy|x)c(x,\gamma(y))- \int P(dx) \int Q(dy|x)c(x,\gamma(y))\big|.
\end{align*}
The results follows from the discussion made for (\ref{unif_conv}).
\item[(iii)]
We make the following argument. 
Similar to the above discussion, if optimal
policies are assumed to be
from the restricted class of policies ${\cal G}$, a sufficient condition for the convergence of optimal costs is the following form of uniform weak convergence:
\begin{align*}
&\lim_{n \to \infty} \sup_{\gamma \in {\cal G}}
\bigg|\int_{\mathbb{X}\times \mathbb{Y} }
c(x,\gamma(y)) P_nQ(dx,dy)   \\
& \qquad\qquad\quad- \int_{\mathbb{X}\times \mathbb{Y} }
c(x,\gamma(y))  PQ(dx,dy)\bigg | = 0.
\end{align*}
The above argument holds when ${\cal G}=\{\gamma: c(x,\gamma(y)) \in {\cal E}\},$ where ${\cal E}$ is a class of $PQ$-uniformity family of functions and when the joint measure $P_nQ$ converges weakly to $PQ$. We have that the empirical measures $P_n$ converges weakly to $P$ with probability 1. Thus, the result follows from Lemma \ref{lemma:joint}(i).
\end{itemize}
\end{proof}

\adk{
Conditions (ii) and (iii) might be difficult to check in general. Therefore, we analyze these conditions from a practical point of view. Condition $(ii)$ requires $\int Q(dy|x)c(x,\gamma(y))$ to be a $P$-Glivenko-Cantelli class of functions. There are various characterizations of such class of functions (see \cite{Vap00}) One example of Glivenko-Cantelli class of functions is bounded Lipschitz functions. Thus, if $\int Q(dy|x)c(x,\gamma(y))$ is bounded and Lipschitz in $x$, the condition (ii) is satisfied. This requirement is met under the following set of restrictions
\begin{itemize}
\item $\|Q(\cdot|x)-Q(\cdot|x')\|_{TV}\leq \alpha |x-x'|$ for some $\alpha<\infty$ and $|c(x,u)-c(x',u)|\leq \beta |x-x'|$ for all $u \in \mathds{U}$ for some $\beta <\infty$.
\end{itemize}
The above condition is stronger than condition (i).

Condition (iii) requires $c(x,\gamma(y))$ to be a $PQ$-uniformity class of functions and $Q(\cdot|x)$ to be weakly continuous. One example of uniformity class of functions is again bounded Lipchitz functions. Thus, if $c(x,\gamma(y))$ is bounded and Lipschitz in $x$ and $y$ and if $Q$ is weakly continuous, the condition $(iii)$ is satisfied. The following is sufficient:
\begin{itemize}
\item $|c(x,u)-c(x',u')|\leq \alpha |x-x'|+\beta |u-u'|$ for some $\alpha<\infty$ and $\beta<\infty$. 
\item $\Gamma$ is restricted to be the space of Lipschitz functions such that for any $\gamma \in \Gamma$ we have $|\gamma(y)-\gamma(y')|\leq \eta|y-y'|$.
\item $Q(\cdot|x)$ is weakly continuous.
\end{itemize}
This set assumptions weakens the restrictions on the channel. However, putting assumptions on the space of policies is artificial as there is usually no guarantee that the optimal policy is Lipschitz. 
}

We next consider the multi-stage case.

\begin{theorem}
Suppose that the prior model of the system is estimated with the i.i.d. measurements as in (\ref{empir}). Under Assumption \ref{weak:as3}, as $n \to \infty$ we have almost surely, $J_\beta^*(P_n,{\cal T})\to J_\beta^*(P,{\cal T})$ and $J_\beta(P,{\cal T},\gamma_n^*) \to J_\beta^*(P,{\cal T})$, where $\gamma_n^*$ is the optimal policy designed for the estimated model $P_n$. That is, the optimal cost function and the optimal policies are consistent under empirical estimation for the multi stage problem.
\end{theorem}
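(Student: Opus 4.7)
The plan is to reduce this multi-stage empirical learning statement to the two already-proved weak-convergence results, namely the continuity Theorem~\ref{ContWeakConv} and the robustness Theorem~\ref{weak_robust}, by inserting the standard almost sure weak convergence of empirical measures into the role of the converging prior sequence $P_n \to P$.

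First I would recall Varadarajan's theorem (the reference in the excerpt is \cite{Dudley02}, Theorem 11.4.1): if $\{X_i\}_{i \in \mathbb{N}}$ are i.i.d. with law $P$ on $\mathbb{X}$ and $P_n$ are the empirical occupation measures defined in (\ref{empir}), then there exists a measurable event $\Omega_0$ with $\mathbb{P}(\Omega_0) = 1$ such that for every $\omega \in \Omega_0$ the realization $P_n(\omega)$ converges weakly to $P$. This is the only probabilistic input; from here on the argument is deterministic and performed pathwise on $\Omega_0$.

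Next, I would fix $\omega \in \Omega_0$ and apply Theorem~\ref{ContWeakConv}. Since Assumption~\ref{weak:as3} is in force (weak continuity of $\mathcal{T}$, total variation continuity of $Q$, bounded continuous $c$, compact $\mathds{U}$), the optimal discounted cost is continuous in the prior under weak convergence, so $J_\beta^*(P_n,\mathcal{T}) \to J_\beta^*(P,\mathcal{T})$. Then, noting that Assumption~\ref{weak:as3} also guarantees the existence of an optimal policy $\gamma_n^* := \gamma_{P_n}^*$ for the estimated model (as used in the proof of Theorem~\ref{weak_robust}, itself relying on the contraction argument for the belief-MDP discounted cost optimality operator), I would invoke Theorem~\ref{weak_robust} on the same sample path to conclude that $J_\beta(P,\mathcal{T},\gamma_n^*) \to J_\beta^*(P,\mathcal{T})$.

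There is essentially no obstacle beyond these invocations; the only point requiring a brief remark is that both convergence statements hold simultaneously for every $\omega \in \Omega_0$, so the two almost sure convergences hold on the same full-measure event. One could also stress that the key structural feature of the argument is that the weak-convergence mode of $P_n \to P$ is exactly the mode in which (a) empirical measures are known to converge almost surely and (b) the continuity/robustness of the discounted cost has been established under Assumption~\ref{weak:as3}, so that the plug-in (separated) design yields an asymptotically consistent controller.
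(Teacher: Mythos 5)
Your proposal is correct and matches the paper's own proof, which likewise cites the almost sure weak convergence of the empirical measures and then invokes Theorem~\ref{ContWeakConv} and Theorem~\ref{weak_robust} pathwise. Your additional remarks on fixing a common full-measure event and on the existence of the optimal policies are just a more careful spelling-out of the same two-line argument.
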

\begin{proof}
Given that $P_n \to P$ weakly (almost surely), the result follows from Theorem \ref{ContWeakConv} and Theorem \ref{weak_robust}.
\end{proof}

\section{Conclusion}
We studied the topological properties of single
and multi stage optimization problems in stochastic control on the space of
initial probability measures, and applications of these 
to robustness of the control policies applied to systems with incomplete models. We made the observation that while weak convergence is in general too {\it weak} for continuity and robustness, channel and transition kernel regularities often allow for continuity and robustness under weak convergence. This is a practically very important result since often in engineering applications, system models are learned through training data which only guarantees weak convergence to the true model in general.

\section{Acknowledgements}
The authors are grateful to Graeme Baker who as an NSERC USRA summer student worked on the initial stages for some of the results reported here and the numerical analysis utilized in the proof of Theorem \ref{ex:setwise} presented in Section \ref{calculationsSetwise}. The authors are also grateful to Prof. Tam\'as Linder for the joint work in \cite{yuksel12:siam}, and Prof. Naci Saldi for many technical discussions with regard to Theorem \ref{weak:thm6}. 

  \appendix
  \section{Technical Proofs}\label{app1}

Here, we include some additional, mainly numerical, derivations utilized in the paper. The numerical derivations may or may not be included in the final version of the paper.

\subsection{Proof of Lemma \ref{cost_weak}}\label{lem_212}
\begin{proof}
Our goal is to show that for a given $\epsilon$ the term can be bounded by $\epsilon$ for $n>N$ for a sufficiently large $N$. For the ease of notation we will first study the case where $k=2$, then we will look at the general case.

In the following, to economize the notation, we write $\gamma(Y_{[0,t]})$ to denote $\{\gamma_k(Y_0,\cdots,Y_k; U_0,\cdots, U_{k-1}),  k \leq t\}$.
\begin{eqnarray}
&&\sup_{\gamma \in \Gamma}\bigg|E_{P}\big[c(X_2,\gamma(Y_{[0,2]}))\big] - E_{P_n}\big[c(X_2,\gamma(Y_{[0,2]})) \big] \bigg| \nonumber \\
&& = \sup_{\gamma \in \Gamma}\bigg| \int P(dx_0) Q(dy_0 | x_0)\mathcal{T}(dx_1|x_0,\gamma(y_0))Q(dy_1|x_1) \times \bigg(E[c(X_2,\gamma(Y_{[0,2]})) | x_0,x_1, y_{0},y_1 ] \bigg) \nonumber \\
&& \quad - \int P_n(dx_0) Q(dy_0|x_0) )\mathcal{T}(dx_1|x_0,\gamma(y_0))Q(dy_1|x_1) \times \bigg(E[c(X_2,\gamma(Y_{[0,2]})) | x_0, x_1, y_{0},y_1 ] \bigg)\bigg| \nonumber \\
&& = \sup_{\gamma \in \Gamma}\bigg| \int P(dx_0) Q(dy_0 | x_0)\mathcal{T}(dx_1|x_0,\gamma(y_0))Q(dy_1|x_1) \times \bigg(E[c(X_2,\gamma(Y_{[0,2]})) | x_1, y_{0},y_1 ] \bigg) \nonumber \\
&& \quad - \int P_n(dx_0) Q(dy_0|x_0) )\mathcal{T}(dx_1|x_0,\gamma(y_0))Q(dy_1|x_1) \times \bigg(E[c(X_2,\gamma(Y_{[0,2]})) | x_1, y_{0},y_1 ] \bigg)\bigg|\nonumber
\end{eqnarray}

In the last equality, we used the fact that conditioned on all observations and most recent state variable, we can take out the conditioning on the earlier state variables using the Markov properties of the system. This follows from:
\begin{align*}
&P^{\gamma}(dx_2, dy_0, dy_1, dy_2 | x_0,x_1,y_0,y_1)  \\
&= P(dy_2|x_2) P^{\gamma}(dx_2, dy_0, dy_1 | x_0,x_1,y_0,y_1) \\
&=  P(dy_2|x_2) P(dx_2 | x_1,\gamma(y_0,y_1)) P(dy_0, dy_1 | y_0, y_1) \\
&=  P(dy_2|x_2) P^{\gamma}(dx_2 | x_1,y_0,y_1) P(dy_0, dy_1 | y_0, y_1) \\
&= P^{\gamma}(dx_2, dy_0, dy_1, dy_2 | x_1,y_0,y_1),
\end{align*}
 where $P^{\gamma}$ denote the induced probability measure on the state and the measurement variables given a policy $\gamma \in \Gamma$.

Now, if we can show that the term $ \int Q(dy_0 | x_0)\mathcal{T}(dx_1|x_0,\gamma(y_0))Q(dy_1|x_1) \times \bigg(E[c(X_2,\gamma(Y_{[0,2]})) | x_1, y_{0},y_1 ]\bigg)$ is a continuous and bounded function of $x_0$ uniformly for all $\gamma \in \Gamma$, we can make the difference less than $\epsilon$ since $P_n \to P$ weakly. To show the continuity observe the following
\begin{align}
& \lim_{x_0' \to x_0}\sup_{\gamma \in \Gamma}\bigg|\int Q(dy_0|x_0) \mathcal{T}(dx_1|x_0,\gamma(y_0)) Q(dy_1|x_1)E[c(X_2,\gamma(Y_{[0,2]})) | y_{0}, x_1,y_1) ]    \nonumber \\
& \qquad - \int Q(dy_0|x_0') \mathcal{T}(dx_1|x_0',\gamma(y_0))Q(dy_1|x_1)E[c(X_2,\gamma(Y_{[0,2]})) |  y_0,x_1,y_1] \bigg|\nonumber \\
&\leq   \lim_{x_0' \to x_0}\sup_{\gamma \in \Gamma}\bigg|\int Q(dy_0|x_0)\int \mathcal{T}(dx_1|x_0',\gamma(y_0)) Q(dy_1|x_1) \times E[c(X_2,\gamma(Y_{[0,2]})) | x_1, y_{0},y_1 ] \nonumber \\
& \quad - \int Q(dy_0|x_0') \int \mathcal{T}(dx_1|x_0',\gamma(y_0)) Q(dy_1|x_1) \times  E[c(X_2,\gamma(Y_{[0,2]})) | x_1, y_{0},y_1 ]\bigg|\nonumber \\
& + \lim_{x_0' \to x_0}\sup_{\gamma \in \Gamma}\int Q(dy_0|x_0)\bigg|\int \mathcal{T}(dx_1|x_0,\gamma(y_0)) \int Q(dy_1|x_1)\times E[c(X_2,\gamma(Y_{[0,2]})) | x_1, y_{0},y_1]\nonumber\\
& \quad -\int \mathcal{T}(dx_1|x_0',\gamma(y_0)) \int Q(dy_1|x_1) \times E[c(X_2,\gamma(Y_{[0,2]})) | x_1, y_{0},y_1 ]\bigg|.\nonumber\\
\label{multiWeak2}
\end{align}
In (\ref{multiWeak2}), the first term goes to 0 by Assumption \ref{weak:as3}(b) with the following argument,
\begin{align*}
&\lim_{x_0'\to x_0}\sup_{\gamma \in \Gamma}\bigg|\int Q(dy_0|x_0) \mathcal{T}(dx_1|x_0',\gamma(y_0)) Q(dy_1|x_1)E[c(X_2,\gamma(Y_{[0,2]})) | y_{0}, x_1,y_1]  \nonumber \\
& \qquad-\int Q(dy_0|x_0') \mathcal{T}(dx_1|x_0',\gamma(y_0))Q(dy_1|x_1)E[c(X_2,\gamma(Y_{[0,2]})) | y_0,x_1,y_1]\bigg| \nonumber \\
&\leq\lim_{x_0'\to x_0}\|c\|_\infty\|Q(y_0 \in \cdot|x_0)-Q(y_0 \in \cdot|x_0')\|_{TV}=0.
\end{align*}

Before analyzing the second term, we claim the following: For a family of functions, $\{f(\gamma,x_k)\}$ which is uniformly bounded,  and equicontinuous over $\gamma \in \Gamma$,
\begin{align}\label{arzela_asc}
 \lim_{x_{k-1}' \to x_{k-1}}\sup_{\gamma \in \Gamma} \bigg|\int &\mathcal{T}(dx_k|x_{k-1},\gamma(y_{[0,k-1]}))f(\gamma,x_k)\nonumber\\
&- \int \mathcal{T}(dx_k|x_{k-1}',\gamma(y_{[0,k-1]}))f(\gamma,x_k)\bigg|=0.
\end{align}
To see this, observe the following: 
By the Arzela-Ascoli Theorem, for any given compact set $K$, and $\eta>0$ there is a finite family of continuous functions $\mathds{F}:=\{f_1,\dots,f_N\}$ such that for any $\gamma$, we can find an $f_i \in \mathds{F}$ with $\sup_{x_k \in K}|f(\gamma,x_k)-f_i(x_k) |\leq \eta$. Furthermore, since $\mathcal{T}(dx_k|x_{k-1},u_{k-1})$ is continuous, for any $r>0$, the set $\{\mathcal{T}(dx_k|x'_{k-1},u_{k-1}), |x'_{k-1}-x_{k-1}| \leq r, u_{k-1} \in \mathds{U}\}$ as a continuous image of a compact set is itself weakly compact, and thus tight, and hence, for every $\epsilon > 0$, there exists a compact set $K_{\epsilon}$ such that $\int_{K_{\epsilon}} \mathcal{T}(dx_k|x_{k-1}',\gamma(y_{[0,k-1]})) \geq 1- \epsilon$ for all $|x_{k-1}'-x_{k-1}|\leq r$ and $\gamma \in \Gamma$. 

Now, we fix an $\epsilon>0$, choose an $r>0$ and a compact set $K_\epsilon$ according to above discussion and fix a finite family of continuous functions $\mathds{F}:=\{f_1,\dots,f_N\}$ such that for any $\gamma$, we can find an $f_i \in \mathds{F}$ with $\sup_{x_k \in K}|f(\gamma,x_k)-f_i(x_k) |\leq \epsilon$. We also choose a $0<\delta\leq r$ such that $\sup_{\gamma \in \Gamma} |\int_{K_\epsilon} \mathcal{T}(dx_k|x_{k-1},\gamma(y_{[0,k-1]}))f_i(x_k) - \int_{K_\epsilon} \mathcal{T}(dx_k|x_{k-1}',\gamma(y_{[0,k-1]}))f_i(x_k)\big|\leq \epsilon$ for $|x_{k-1}'-x_{k-1}|\leq \delta$ and for all $f_i\in \mathds{F}$, which we can do since $\mathcal{T}(\cdot|x,u)$ is weakly continuous in $x$ uniformly for $u \in \mathds{U}$ and there are finitely many $f_i$.  With this setup, let us look at (\ref{arzela_asc}) again for $\{x_{k-1}':|x_{k-1}'-x_{k-1}|\leq \delta\}$.
\begin{align*}
&\sup_{\gamma \in \Gamma}\big|\int \mathcal{T}(dx_k|x_{k-1},\gamma(y_{[0,k-1]})f(\gamma,x_k) -\int \mathcal{T}(dx_k|x_{k-1}',\gamma(y_{[0,k-1]}))f(\gamma,x_k)\big|\\
&\leq\sup_{\gamma \in \Gamma}\big|\int_{\mathds{X}\setminus K_\epsilon} \mathcal{T}(dx_k|x_{k-1},\gamma(y_{[0,{k-1}]}))f(\gamma,x_k) -\int_{\mathds{X}\setminus K_\epsilon} \mathcal{T}(dx_k|x_{k-1}',\gamma(y_{[0,k-1]}))f(\gamma,x_k)\big|\\
&\qquad+\sup_{\gamma \in \Gamma}\big|\int_{K_\epsilon} \mathcal{T}(dx_k|x_{k-1},\gamma(y_{[0,k-1]}))f(\gamma,x_k) -\int_{K_\epsilon} \mathcal{T}(dx_k|x_{k-1}',\gamma(y_{[0,k-1]})f(\gamma,x_k)\big|\\
&\leq 2\epsilon\|c\|_\infty + \sup_{\gamma \in \Gamma}\big|\int_{K_\epsilon} \mathcal{T}(dx_k|x_{k-1},\gamma(y_{[0,k-1]}))\big(f(\gamma,x_k) -f_i(x_k)\big)\\
&\qquad\qquad+ \int_{K_\epsilon} \mathcal{T}(dx_k|x_{k-1},\gamma(y_{[0,k-1]}))f_i(x_k) - \int_{K_\epsilon} \mathcal{T}(dx_k|x_{k-1}',\gamma(y_{[0,k-1]}))f_i(x_k)\\
&\qquad\qquad + \int_{K_\epsilon} \mathcal{T}(dx_k|x_{k-1}',\gamma(y_{[0,k-1]}))\big(f_i(x_k)-f(\gamma,x_k)\big)\big|\\
&\leq 2\epsilon\|c\|_\infty + \sup_{\gamma \in \Gamma} \int_{K_\epsilon} \mathcal{T}(dx_k|x_{k-1},\gamma(y_{[0,k-1]}))f_i(x_k)\\
&\phantom{xxxxxxxxxxxxxxxxxx} - \int_{K_\epsilon} \mathcal{T}(dx_k|x_{k-1}',\gamma(y_{[0,k-1]}))f_i(x_k)\big| + 2\epsilon\\
&\leq2\epsilon\|c\|_\infty + 3\epsilon
\end{align*}
where $f_i(x_k)$ is chosen according to the discussion above such that $f_i$ is $\epsilon$ close to $f(\gamma,x_k)$.
As $\epsilon$ is arbitrary, (\ref{arzela_asc}) holds true.

Now we return to (\ref{multiWeak2}) again. To show that the second term also goes to 0, we use (\ref{arzela_asc}) with the claim that 
\begin{align*}
f(\gamma,x_1):= \int Q(dy_1|x_1)\times E[c(X_2,\gamma(Y_{[0,2]})) | x_1, y_{0},y_1]
\end{align*}
is a continuous and bounded function of $x_1$ uniformly for all $\gamma \in \Gamma$. To see this, we write
\begin{align*}
&\lim_{x_1' \to x_1}\sup_{\gamma \in \Gamma}\bigg( \int Q(dy_1|x_1)\times E[c(X_2,\gamma(Y_{[0,2]})) | x_1, y_{0},y_1]  \\
& \qquad \quad \quad \quad -  \int Q(dy_1|x_1')\times E[c(X_2,\gamma(Y_{[0,2]})) | x_1', y_{0},y_1]\bigg)\\
&\leq \lim_{x_1' \to x_1}\sup_{\gamma \in \Gamma}\bigg| \int Q(dy_1|x_1)\times E[c(X_2,\gamma(Y_{[0,2]})) | x_1, y_{0},y_1] \\
& \qquad \quad \quad \quad -  \int Q(dy_1|x_1')\times E[c(X_2,\gamma(Y_{[0,2]})) | x_1, y_{0},y_1]\bigg|\\
&+\lim_{x_1' \to x_1}\sup_{\gamma \in \Gamma} \int Q(dy_1|x_1)\times\bigg| E[c(X_2,\gamma(Y_{[0,2]})) | x_1, y_{0},y_1]- E[c(X_2,\gamma(Y_{[0,2]})) | x_1', y_{0},y_1]\bigg|\\
&=0.
\end{align*}

Above, for the first term we used Assumption \ref{weak:as3}(b), and for the second term we have the following
\begin{align}\label{arzela_step}
&\lim_{x_1' \to x_1}\sup_{\gamma \in \Gamma}\big| E[c(X_2,\gamma(Y_{[0,2]})) | x_1,y_0,y_1 ]- E[c(X_2,\gamma(Y_{[0,2]})) | x_1',y_0,y_1 ]\big|\nonumber\\
&=\lim_{x_1' \to x_1}\sup_{\gamma \in \Gamma}\big|\int \mathcal{T}(dx_2|x_1,\gamma(y_{[0,1]}))Q(dy_2|x_2)c(x_2,\gamma(y_{[0,2]}))\nonumber \\
& \qquad\qquad \qquad -\int \mathcal{T}(dx_2|x_1',\gamma(y_{[0,1]}))Q(dy_2|x_2)c(x_2,\gamma(y_{[0,2]}))\big|.
\end{align}

For the last term, we used (\ref{arzela_asc}) with $f(\gamma,x_2)=\int Q(dy_2|x_2)c(x_2,\gamma(y_{[0,2]}))$. To see that   $f(\gamma,x_2)$ is a uniformly bounded, equicontinuous family of functions over $\gamma \in \Gamma$, observe the following;
\begin{align*}
&\lim_{x_2' \to x_2}\sup_{\gamma \in \Gamma}\big|\int Q(dy_2|x_2')c(x_2',\gamma(y_{[0,2]}))-\int Q(dy_2|x_2)c(x_2,\gamma(y_{[0,2]}))\big|\\
&\leq \lim_{x_2' \to x_2}\sup_{\gamma \in \Gamma}\big| \int Q(dy_2|x_2')c(x_2',\gamma(y_{[0,2]})) - \int Q(dy_2|x_2)c(x_2',\gamma(y_{[0,2]}))\big| \\
&\quad \quad  +\lim_{x_2' \to x_2}\sup_{\gamma \in \Gamma}\big| \int Q(dy_2|x_2)\big(c(x_2',\gamma(y_{[0,2]})) - c(x_2,\gamma(y_{[0,2]}))\big)\big|\\
&=0.
\end{align*} 
For the first term we used that the channel $Q$ is continuous in total variation and for the second term we used that $c(x_2,\gamma(y_{[0,2]}))$ is continuous in $x_2$ uniformly for all $\gamma \in \Gamma$, because for every $\gamma$, $\gamma(y_{[0,2]})$ corresponds to a control action $u \in \mathds{U}$ and $\mathds{U}$ is compact. 

Thus, we can find an $N$ such that for all $n>N$, 
\[\sup_{\gamma \in \Gamma}\bigg|E_{P}\big[c(X_2,\gamma(Y_{[0,2]}))\big] - E_{P_n}\big[c(X_2,\gamma(Y_{[0,2]})) \big] \bigg| <\epsilon.\]
Now, we generalize the argument for any time stage $k\geq0$,
\begin{align*}
& \sup_{\gamma \in \Gamma}\bigg|E_{P}\big[c(X_k,\gamma(Y_{[0,k]}))\big] - E_{P_n}\big[c(X_k,\gamma(Y_{[0,k]})) \big] \bigg| \nonumber \\
& = \sup_{\gamma \in \Gamma}\bigg| \int P(dx_0) Q(dy_0 | x_0)\mathcal{T}(dx_1|x_0,\gamma(y_0))Q(dy_1|x_1)\dots \mathcal{T}(dx_{k-1}|x_{k-2},\gamma(y_{[0,k-2]}))\\
&\phantom{xxxxxxxx}Q(dy_{k-1}|x_{k-1}) \times \bigg(E[c(X_k,\gamma(Y_{[0,k]})) | x_0,x_1, \dots x_{k-1}, y_{0},\dots,y_{k-1} ] \bigg) \nonumber \\
& \quad-\int P_n(dx_0) Q(dy_0|x_0) )\mathcal{T}(dx_1|x_0,\gamma(y_0))\dots\mathcal{T}(dx_{k-1}|x_{k-2},\gamma(y_{[0,k-2]}))Q(dy_{k-1}|x_{k-1}) \nonumber\\
&\quad\quad\quad \times \bigg(E[c(X_k,\gamma(Y_{[0,k]})) | x_0,x_1, \dots x_{k-1}, y_{0},\dots,y_{k-1} ] \bigg)\bigg| \nonumber \\
&= \sup_{\gamma \in \Gamma}\bigg| \int P(dx_0) Q(dy_0 | x_0)\mathcal{T}(dx_1|x_0,\gamma(y_0))Q(dy_1|x_1)\dots \mathcal{T}(dx_{k-1}|x_{k-2},\gamma(y_{[0,k-2]}))\\
&\phantom{xxxxxxxxx}Q(dy_{k-1}|x_{k-1}) \times \bigg(E[c(X_k,\gamma(Y_{[0,k]})) | x_{k-1}, y_{0},\dots,y_{k-1} ] \bigg) \nonumber \\
& \quad-\int P_n(dx_0) Q(dy_0|x_0) )\mathcal{T}(dx_1|x_0,\gamma(y_0))\dots\mathcal{T}(dx_{k-1}|x_{k-2},\gamma(y_{[0,k-2]}))Q(dy_{k-1}|x_{k-1}) \nonumber\\
&\quad\quad\quad \times \bigg(E[c(X_k,\gamma(Y_{[0,k]})) | x_{k-1}, y_{0},\dots,y_{k-1} ] \bigg)\bigg| 
\end{align*}
the last equality follows from the Markov property of the system as we do not depend on the past state variables given all the past observations and most recent state variable.
We again need to show the continuity in $x_0$.
\begin{eqnarray}
&&\lim_{x_0' \to x_0} \sup_{\gamma \in \Gamma}\bigg[\int Q(dy_0|x_0)\dots \mathcal{T}(dx_{k-1}|x_{k-2},\gamma(y_{[0,k-2]})) E[c(X_k,\gamma(Y_{[0,k]})) | x_{k-1},y_{0}\dots y_{k-1} ]    \nonumber \\
&& \quad\quad\quad\quad - \int Q(dy_0|x_0')\dots \mathcal{T}(dx_{k-1}|x_{k-2},\gamma(y_{[0,k-2]})E[c(X_k,\gamma(Y_{[0,k]})) |x_{k-1}, y_{0},\dots,y_{k-1} ] \bigg]\nonumber \\
&& \leq \lim_{x_0' \to x_0}\sup_{\gamma \in \Gamma} \bigg|\int Q(dy_0|x_0)\int \mathcal{T}(dx_1|x_0',\gamma(y_0)) \dots T(dx_{k-1}|x_{k-2},\gamma(y_{[0,k-2]})) Q(dy_{k-1}|x_{k-1})\nonumber \\
&& \qquad \qquad \qquad\qquad\qquad  \times E[c(X_k,\gamma(Y_{[0,k]})) | x_{k-1}, y_{0},\dots,y_{k-1} ] \nonumber \\
&&\hspace{1.7cm} - \int Q(dy_0|x_0') \int \mathcal{T}(dx_1|x_0',\gamma(y_0)) \dots \mathcal{T}(dx_{k-1}|x_{k-2},\gamma(y_{[0,k-2]})) Q(dy_{k-1}|x_{k-1})\nonumber \\
&& \qquad \qquad \qquad\qquad\qquad  \times  E[c(X_k,\gamma(Y_{[0,k]})) | x_{k-1}, y_{0},\dots,y_{k-1} ]\bigg|\nonumber \\
&&\quad+ \lim_{x_0' \to x_0}\sup_{\gamma \in \Gamma} \int Q(dy_0|x_0)\bigg|\int \mathcal{T}(dx_1|x_0,\gamma(y_0)) \dots   \mathcal{T}(dx_{k-1}|x_{k-2},\gamma(y_{[0,k-2]})) \nonumber\\
&&\hspace{5cm}Q(dy_{k-1}|x_{k-1}) \times E[c(X_k,\gamma(Y_{[0,k]})) | x_{k-1}, y_{0},\dots,y_{k-1} ]\nonumber\\
&&\hspace{4cm}-\int \mathcal{T}(dx_1|x_0',\gamma(y_0)) \dots \mathcal{T}(dx_{k-1}|x_{k-2},\gamma(y_{[0,k-2]}))\nonumber\\ 
&&\hspace{5cm}Q(dy_{k-1}|x_{k-1}) \times E[c(X_k,\gamma(Y_{[0,k]})) | x_{k-1}, y_{0},\dots,y_{k-1} ]\bigg|\nonumber\\
&&=0. \label{multiWeak3}
\end{eqnarray}
Similar to the special case where $k=2$, at (\ref{multiWeak3}), for the first term we used Assumption \ref{weak:as3}(b). 

For the second term, we used the fact that the term,
\begin{align*}
\int Q(dy_1|x_1)\dots \mathcal{T}(dx_{k-1}|x_{k-2},\gamma(y_{[0,k-2]})) Q(dy_{k-1}|x_{k-1}) E[c(X_k,\gamma(Y_{[0,k]})) |x_{k-1}, y_{0},\dots,y_{k-1} ]
\end{align*}
is a continuous and bounded function of $x_1$ uniformly for all $\gamma \in \Gamma$. For this, it is necessary to show that,
\begin{align*}
\int Q(dy_2|x_2)\dots \mathcal{T}(dx_{k-1}|x_{k-2},\gamma(y_{[0,k-2]})) Q(dy_{k-1}|x_{k-1}) E[c(X_k,\gamma(Y_{[0,k]})) | x_{k-1}, y_{0},\dots,y_{k-1}]
\end{align*}
is a continuous function of $x_2$ uniformly over $\Gamma$. Then, inductively, the last step is to show that
\begin{align*}
 E[c(X_k,\gamma(Y_{[0,k]})) | x_{k-1}, y_{0},\dots,y_{k-1}]=\int c(x_k,\gamma(y_{[0,k]})) \mathcal{T}(dx_k|x_{k-1},\gamma(y_{[0,k-1]}))Q(dy_k|x_k)
\end{align*}
is a continuous function of $x_{k-1}$ uniformly over $\Gamma$, which holds since $ \mathcal{T}(dx_k|x_{k-1},\gamma(y_{[0,k-1]}))$ is weakly continuous in $x_{k-1}$ and the continuity is uniform on $\Gamma$ since for every $\gamma \in \Gamma$, $\gamma(y_{[0,k-1]})$ corresponds to a control action from $\mathds{U}$ and $\mathds{U}$ is compact. The fact that $Q(dy_k|x_k)$ is continuous in total variation in $x_k$ and $c(x_k,\gamma(y_{[0,k]}))$ is continuous in $x_k$ uniformly over $\Gamma$ completes the proof.
\end{proof}

  \subsection{Proof of Lemma~\ref{lemma:lusin}}
  The proof is adapted from the proof of Theorem 3.2 in \cite{yuksel12:siam}.
  \begin{proof}
    Let $\mu_\mathds{Y}$ satisfy $\mu_\mathds{Y}(A)=\mu(\mathds{X}\times A)$ for $A\in\mathcal{B}(\mathds{Y})$. By Lusin's theorem (\cite[Theorem 2.24]{Rud87}), for all $\gamma\in\Gamma$ and $\varepsilon>0$, there is a continuous function $f:\mathds{U}\to\mathds{Y}$ such that
    \begin{align*}
      \mu_\mathds{Y}\{y:f(y)\neq \gamma(y)\} < \varepsilon.
    \end{align*}
    For convenience of notation we let $B=\{y:f(y)\neq \gamma(y)\}$. Observe the following
    \begin{align*}
      \int \left|c(x,\gamma(y))-c(x,f(y))\right|\mu(\dd x, \dd y) &= \int_{\mathds{X}\times B} \left|c(x,\gamma(y))-c(x,f(y))\right|\mu(\dd x, \dd y)\\ &< \varepsilon \cdot \|c\|_\infty
    \end{align*}
    where $\|c\|_\infty$ denotes the supremum norm of $c$, which is finite by assumption. We have that 
    \begin{align}\label{eq:lusinbound}
      \int c(x,f(y)) \mu (\dd x, \dd y) < \int c(x,g(y)) \mu (\dd x, \dd y) + \varepsilon \cdot \|c\|_\infty.
    \end{align}
    Define
    \begin{align*}
      j(\mu,\mathcal{C})=\inf_{\gamma\in\mathcal{C}}\int c(x,\gamma(y))\mu(\dd x, \dd y),\text{ and }j(\mu,\Gamma)=\inf_{\gamma\in\Gamma}\int c(x,\gamma(y))\mu(\dd x, \dd y).
    \end{align*}
    Note that $j(\mu,\mathcal{C})\geq j(\mu,\Gamma)$ as $\mathcal{C}\subset \Gamma$. From \eqref{eq:lusinbound} we have $j(\mu,\mathcal{C})\leq j(\mu,\Gamma) + \varepsilon \cdot \|c\|_\infty$, which gives $j(\mu,\mathcal{C})\leq j(\mu,\Gamma)$ as $\varepsilon$ was arbitrary. Hence $j(\mu,\mathcal{C})= j(\mu,\Gamma)$.
  \end{proof}

  \section{Additional Proofs}\label{app}
  \subsection{Calculations for Proof of Theorem~\ref{ex:setwise}}\label{calculationsSetwise}

In the following, $p$ will denote the density of the corresponding probability measure. 

  \subsubsection{Computing $\gamma_P^*$}
  \begin{align*}
    \gamma_P^* &= E[X|Y=y]= \int_0^1 x p(x|y)\dd x= \int_0^1 x \frac{p(y|x)}{\int_0^1 p(y|x)p(x)\dd x}\dd x
  \end{align*}
  by Bayes' Rule. We now compute the term in the denominator,
  \begin{align*}
    \int_0^1 p(y|x)\cancelto{1}{p(x)}\dd x = \int_0^1 \left(\frac{1}{2} + \frac{1}{2}\delta_x(y) \right) \dd x=1.
  \end{align*}
  This gives
  \begin{align*}
    \gamma_P^* &=\int_0^1 x p(y|x)\dd x=\int_0^1 x \left(\frac{1}{2} + \frac{1}{2}\delta_x(y) \right)\dd x=\int_0^1 \frac{1}{2}x \dd x + \frac{1}{2}y\\
    &=\frac{1}{4}+\frac{1}{2}y = \frac{1}{2}\left(\frac{1}{2}+y\right)
  \end{align*}
  as desired.
  \subsubsection{Computing $J^*(P,Q)$}
  \begin{align*}
    J^*(P,Q)&=E[c(x,\gamma_P^*(y))]=E[(x-\gamma_P^*(y))^2]\\
    &=\int_0^1 \int_0^1 \left(x-\frac{1}{4}-\frac{1}{2}y\right)^2 Q(\dd y|x)P(\dd x).
  \end{align*}
  We now compute the inner integral for some fixed $x$,
  \begin{align*}
    \int_0^1 &\left(x-\frac{1}{4}-\frac{1}{2}y\right)^2 Q(\dd y|x)\\ &=\frac{1}{2} \int_0^1 \left(x-\frac{1}{4}-\frac{1}{2}y\right)^2 \dd y + \frac{1}{2} \int_0^1 \left(x-\frac{1}{4}-\frac{1}{2}y\right)^2 \delta_x(y) \dd y\\
    &=\frac{1}{2}\left(\frac{13}{48}-x+x^2\right)+\frac{1}{2}\left(\frac{1}{16}-\frac{x}{4}+\frac{x^2}{4}\right)\\
    &=\frac{1}{6}-\frac{5x}{8}+\frac{5x^2}{8}.
  \end{align*}
  This gives
  \begin{align*}
    J^*(P,Q)=\int_0^1 \left(\frac{1}{6}-\frac{5x}{8}+\frac{5x^2}{8}\right) \dd x=\frac{1}{16}.
  \end{align*}
  \subsubsection{Computing $\gamma_{P_n}^*$}\label{app:gammapn}
  Let $n\in\N$. If $y\in R$ then we have,
  \begin{align*}
    p(y)&=\int_0^1 p(y|x)f_n(x) \dd x=\int_L \frac{1}{2}(2) \dd x =\int_L \dd x\\
    &=\lambda(L) = \frac{1}{2},
  \end{align*}
  where $\lambda$ is the Lebesgue measure. We now compute $E[X|Y=y]$,
  \begin{align}
    E[X|Y=y]&=\int_0^1 x p(x|y)\dd x\nonumber \\
    &= \int_0^1 x \frac{p(y|x)f_n(x)}{\int_0^1 p(y|x)P(x)\dd x}\dd x\nonumber\\
    &= \int_L x \frac{1/2\cdot2}{1/2} \dd x\nonumber\\
    &= \sum_{k=1}^n \left.x^2\right|_{\frac{2k-2}{2n}}^{\frac{2k-1}{2n}}= \sum_{k=1}^n \left( \left(\frac{2k-1}{2n}\right)^2-\left(\frac{2k-2}{2n}\right)^2 \right)\nonumber\\
    &= \sum_{k=1}^n \frac{4k-3}{4n^2} = \frac{1}{n^2} \sum_{k=1}^n k-\frac{1}{n^2} \sum_{k=1}^n \frac{3}{4}\nonumber\\
    &= \frac{1}{n^2} \cdot \frac{n(n+1)}{2}-\frac{1}{n^2}\cdot\frac{3n}{4}= \frac{1}{2}-\frac{1}{4n}\label{expectation1},
  \end{align}
  where we have used Bayes' Rule to establish the second equality. Now we consider the case where $y\in L$. We calculate $p(y)$,
  \begin{align*}
    p(y)&=\int_0^1 p(y|x)f_n(x) \dd x=\int_L \frac{1}{2}(2) \dd x + \frac{1}{2}\cdot 2 =\int_L \dd x + 1\\
    &=\lambda(L) + 1 = \frac{3}{2}.
  \end{align*}
  This gives,
  \begin{align*}
     E[X|Y=y]&=\int_0^1 x p(x|y)\dd x\\
    &= \int_0^1 x \frac{p(y|x)f_n(x)}{\int_0^1 p(y|x)p(x)\dd x}\dd x\\
    &= \int_L x \frac{1/2\cdot2}{3/2} \dd x + \int_L x \frac{1/2\cdot2}{3/2} \delta_x (y) \dd x \\
    &= \frac{1}{3}\cdot\left(\frac{1}{2}-\frac{1}{4n}\right)+\frac{2}{3}y,
  \end{align*}
  where we use \eqref{expectation1} to give the last equality. We see that the optimal policy is
  \begin{align*}
      \gamma_{P_n}^*(y)=E[X|Y=y]=
      \begin{cases}
        \frac{1}{2}-\frac{1}{4n} & \text{if } y\in R\\
        \frac{1}{3}\cdot\left(\frac{1}{2}-\frac{1}{4n}\right)+\frac{2}{3}y & \text{if } y\in L
      \end{cases}.
  \end{align*}
  \subsubsection{Computing $J^*(P_n,Q)$}\label{app:JstarPn}
  Let $n\in\N$.
  \begin{align*}
    J^*(P_n,Q)&=E[c(x,\gamma_{P_n}^*(y))]\\
    &=E[(x-\gamma_{P_n}^*(y))^2]\\
    &= \int_0^1 \left( \int_0^1(x-\gamma_{P_n}^*(y))^2 Q(\dd y|x)\right) P_n(\dd x)\\
    &=2\int_L \left( \underbrace{\int_L(x-\gamma_{P_n}^*(y))^2 Q(\dd y|x)}_{\text{(i)}} +  \underbrace{\int_R(x-\gamma_{P_n}^*(y))^2 Q(\dd y|x)}_{\text{(ii)}}\right) \dd x
  \end{align*}
  We now compute term (i) for $x\in L$:
  \begin{align*}
    \int_L(x-\gamma_{P_n}^*(y))^2 Q(\dd y|x)&=\int_L\left(x-\frac{1}{3}\left(\frac{1}{2}-\frac{1}{4n}\right)-\frac{2}{3}y\right)^2 Q(\dd y|x)\\
    &=\frac{1}{2} \left(x-\frac{1}{3}\left(\frac{1}{2}-\frac{1}{4n}\right)-\frac{2}{3}x\right)^2+\dots\\&\quad+\int_L\frac{1}{2}\left(x-\frac{1}{3}\left(\frac{1}{2}-\frac{1}{4n}\right)-\frac{2}{3}y\right)^2 \dd y\\
    &=\frac{1}{72}+\frac{1}{288 n^2}-\frac{1}{72n}+\left(-\frac{1}{18}+\frac{1}{36 n}\right) x+\frac{x^2}{18}+\dots\\&\quad+\int_L \left(\frac{1}{72}+\frac{1}{288 n^2}-\frac{1}{72 n}+\left(-\frac{1}{6}+\frac{1}{12 n}\right)x+\dots\right.\\&\quad\left.+\frac{x^2}{2}+\left(\frac{1}{9}-\frac{1}{18 n}-\frac{2}{3}x\right) y+\frac{2y^2}{9} \right) \dd y\\
    &=\frac{1}{72}+\frac{1}{288 n^2}-\frac{1}{72n}+\left(-\frac{1}{18}+\frac{1}{36 n}\right) x+\frac{x^2}{18}+\dots\\&\quad+ \left(\frac{1}{72}+\frac{1}{288 n^2}-\frac{1}{72 n}+\left(-\frac{1}{6}+\frac{1}{12 n}\right)x+\frac{x^2}{2}\right)\int_L \dd y +\dots\\&\quad+\left(\frac{1}{9}-\frac{1}{18 n}-\frac{2}{3}x\right)\int_L y \dd y + \frac{2}{9} \int_L y^2 \dd y 
  \end{align*}
  For ease we compute the integrals in the above step separately. Some are familiar from Section~\ref{app:gammapn}.
  \begin{align*}
    \int_L \dd y &= \lambda(L) = \frac{1}{2},\\
    \int_L y \dd y &= \frac{1}{2}\sum_{k=1}^n \left.y^2\right|_{\frac{2k-2}{2n}}^{\frac{2k-1}{2n}}\\
    &= \frac{1}{2}\left(\frac{1}{2}-\frac{1}{4n}\right)\text{ (by \eqref{expectation1})},\\
    \int_L y^2 \dd y &=\frac{1}{3}\sum_{k=1}^n \left.y^3\right|_{\frac{2k-2}{2n}}^{\frac{2k-1}{2n}}\\
    &= \frac{1}{3}\left(\sum_{k=1}^n \left( \left(\frac{2k-1}{2n}\right)^3-\left(\frac{2k-2}{2n}\right)^3 \right)\right)\\
    &= \frac{1}{3}\left( \sum_{k=1}^n \frac{7-18k+12k^2}{8n^3}\right)\\
    &= \frac{1}{3}\left( \frac{7}{8n^3} \sum_{k=1}^n 1 -\frac{9}{4n^3}\sum_{k=1}^n k+\frac{3}{2n^3}\sum_{k=1}^n k^2 \right)\\
    &=  \frac{7}{24n^3} (n) -\frac{3}{4n^3}\left(\frac{n^2+n}{2}\right)+\frac{1}{2n^3}\left(\frac{2n^3+3n^2+n}{6}\right)\\
    &=\frac{1}{6}-\frac{1}{8n}.
  \end{align*}
  These combine with the above to give (after some simplification),
  \begin{align*}
    \int_L(x-\gamma_{P_n}^*(y))^2 Q(\dd y|x)=\frac{37}{432}+\frac{7}{576 n^2}-\frac{11}{144n}+\left(-\frac{11}{36}+\frac{11}{72 n}\right) x+\frac{11}{36}x^2.
  \end{align*}
  Next we compute term (ii) for $x\in L$:
  \begin{align*}
    \int_R(x-\gamma_{P_n}^*(y))^2 Q(\dd y|x)&=\frac{1}{2}\int_R \left(x-\frac{1}{2}+\frac{1}{4n}\right)^2 \dd y=\frac{1}{4}\left(x-\frac{1}{2}+\frac{1}{4n}\right)^2\\
    &=\frac{1}{16}+\frac{1}{64 n^2}-\frac{1}{16 n}+\left(-\frac{1}{4}+\frac{1}{8 n}\right) x+\frac{x^2}{4}.
  \end{align*}
  For $x\in L$ we compute (i)$+$(ii),
  \begin{align*}
    \text{(i)}+\text{(ii)}&=\frac{4}{27}+\frac{1}{36 n^2}-\frac{5}{36 n}+\left(-\frac{5}{9}+\frac{5}{18 n}\right)x+\frac{5}{9}x^2.
  \end{align*}
  This gives,
  \begin{align*}
    J(P_n,Q)&=\int_0^1 \left(c(x,\gamma_{P_n}^*(y)\right)^2 P_n(\dd x)\\
    &=2\int_L \left( \frac{4}{27}+\frac{1}{36 n^2}-\frac{5}{36 n}+\left(-\frac{5}{9}+\frac{5}{18 n}\right)x+\frac{5}{9}x^2\right) \dd x\\
    &=2\left( \frac{4}{27}+\frac{1}{36 n^2}-\frac{5}{36 n}\right)\int_L  \dd x+2\left(-\frac{5}{9}+\frac{5}{18 n}\right)\int_L x \dd x+2\left(\frac{5}{9}\right)\int_L x^2 \dd x\\
    &=\frac{1}{18}-\frac{1}{24n^2}.
  \end{align*}

  \subsection{Calculations for Proof of Proposition~\ref{prop:SWextension}}\label{app:SWextension}
  Let $n\in\N$. From Section~\ref{app:gammapn} we know that 
  \begin{align*}
    \gamma_{P_n}^*(y)=E[X|Y=y]=
      \begin{cases}
        \frac{1}{2}-\frac{1}{4n} & \text{if } y\in R\\
        \frac{1}{3}\cdot\left(\frac{1}{2}-\frac{1}{4n}\right)+\frac{2}{3}y & \text{if } y\in L
      \end{cases}.
  \end{align*}
  With $P\sim U([0,1])$ we have,
  \begin{align*}
    &J(P,Q,\gamma_{P_n}^*)=E[c(x,\gamma_{P_n}^*(y))]=E[(x-\gamma_{P_n}^*(y))^2]\\
    &= \int_0^1 \left( \int_0^1(x-\gamma_{P_n}^*(y))^2 Q(\dd y|x)\right) P(\dd x)\\
    &=\int_L \left( \int_0^1(x-\gamma_{P_n}^*(y))^2 Q(\dd y|x)\right) \dd x+\int_R\left( \int_0^1(x-\gamma_{P_n}^*(y))^2 Q(\dd y|x)\right) \dd x\\
    &=\frac{1}{2}J^*(P_n,Q)+\int_R \left(\int_L(x-\gamma_{P_n}^*(y))^2 Q(\dd y|x) +  \int_R(x-\gamma_{P_n}^*(y))^2 Q(\dd y|x)\right) \dd x\\
    &=\frac{1}{36}-\frac{1}{48n^2}+\int_R \left( \underbrace{\int_L(x-\gamma_{P_n}^*(y))^2 Q(\dd y|x)}_{\text{(iii)}} +  \underbrace{\int_R(x-\gamma_{P_n}^*(y))^2 Q(\dd y|x)}_{\text{(iv)}}\right) \dd x,
  \end{align*}
  where we have used the computation from Section~\ref{app:JstarPn} for $J^*(P_n,Q)$. For $x\in R$ we compute (iii):
  \begin{align*}
    \int_L(x-\gamma_{P_n}^*(y))^2 Q(\dd y|x)&=\int_L\left(x-\frac{1}{3}\left(\frac{1}{2}-\frac{1}{4n}\right)-\frac{2}{3}y\right)^2 Q(\dd y|x)\\
    &=\int_L\frac{1}{2}\left(x-\frac{1}{3}\left(\frac{1}{2}-\frac{1}{4n}\right)-\frac{2}{3}y\right)^2 \dd y\\
    &=\left(\frac{1}{72}+\frac{1}{288 n^2}-\frac{1}{72 n}+\left(-\frac{1}{6}+\frac{1}{12 n}\right)x+\frac{x^2}{2}\right)\int_L \dd y +\dots\\&\quad+\left(\frac{1}{9}-\frac{1}{18 n}-\frac{2}{3}x\right)\int_L y \dd y + \frac{2}{9} \int_L y^2 \dd y,
  \end{align*}
  where the last line follows from the computation of (i) in Section~\ref{app:JstarPn}. After some simplification we arrive at:
  \begin{align*}
    \int_L(x-\gamma_{P_n}^*(y))^2 Q(\dd y|x)&=\frac{31}{432}+\frac{5}{576n^2}-\frac{1}{16n}+\left(-\frac{1}{4}+\frac{1}{8n}\right)x+\frac{x^2}{4}.
  \end{align*}
  For $x\in R$ we compute term (iv):
  \begin{align*}
    \int_R(x-\gamma_{P_n}^*(y))^2 Q(\dd y|x)&=\int_R \left(x-\frac{1}{2}+\frac{1}{4n}\right)^2 Q(\dd y|x)\\
    &=\frac{1}{2}\left(x-\frac{1}{2}+\frac{1}{4n}\right)^2 + \frac{1}{2}\int_R\left(x-\frac{1}{2}+\frac{1}{4n}\right)^2 \dd y\\
    &=\frac{3}{4}\left(x-\frac{1}{2}+\frac{1}{4n}\right)^2\\
    &=\frac{3}{16}+\frac{3}{64 n^2}-\frac{3}{16 n}+\left(-\frac{3}{4}+\frac{3}{8 n}\right) x+\frac{3}{4}x^2.
  \end{align*}
  For $x\in R$ we have,
  \begin{align*}
    \text{(iii)}+\text{(iv)}&=\frac{7}{27}+\frac{1}{18 n^2}-\frac{1}{4 n}+\left(-1+\frac{1}{2 n}\right) x+x^2.
  \end{align*}
  This gives
  \begin{align*}
   & \int_R\left( \int_0^1(x-\gamma_{P_n}^*(y))^2 Q(\dd y|x)\right) \dd x=\int_R\left( \frac{7}{27}+\frac{1}{18 n^2}-\frac{1}{4 n}+\left(-1+\frac{1}{2 n}\right) x+x^2\right) \dd x\\
    &=\left( \frac{7}{27}+\frac{1}{18 n^2}-\frac{1}{4 n}\right)\int_R \dd x+\left(-1+\frac{1}{2 n}\right) \int_R x \dd x+ \int_R x^2 \dd x.
  \end{align*}
  We compute the integrals in the previous step separately:
  \begin{align*}
    \int_R \dd x&= \lambda(R)=\frac{1}{2},\\
    \int_R x \dd x&= \int_0^1 x \dd x - \int_L x \dd x=\frac{1}{2}-\left(\frac{1}{4}-\frac{1}{8n}\right)=\frac{1}{4}+\frac{1}{8n},\\
    \int_R x^2 \dd x&= \int_0^1 x^2 \dd x - \int_L x^2 \dd x=\frac{1}{3}-\left(\frac{1}{6}-\frac{1}{8n}\right)=\frac{1}{6}+\frac{1}{8n}.
  \end{align*}
  With the above, this gives (after some simplification):
  \begin{align*}
    \int_R\left( \int_0^1(x-\gamma_{P_n}^*(y))^2 Q(\dd y|x)\right) \dd x &=\frac{5}{108}+\frac{13}{144n^2}.
  \end{align*}
  Therefore, $J(P,Q,\gamma_{P_n}^*)=\frac{2}{27}+\frac{5}{72n^2}.$

\bibliographystyle{plain}

\begin{thebibliography}{10}
\bibitem{basbern} {\sc T. Ba\c{s}ar and P. Bernhard}, {\em H-infinity Optimal Control and Related Minimax Design Problems: A Dynamic Game Approach},
Birkh\"auser, Boston, MA, 1995.

\bibitem{YukselBaker}{\sc G. Baker and S. Y\"uksel}, {\em Continuity and Robustness to Incorrect Priors in Estimation and Control }, Proceedings of the IEEE Int. Symp. on Information Theory (ISIT), 2016, pp. ~1999-2013.
  
\bibitem{benavoli2011robust} {\sc A. Benavoli and L. Chisc}, {\em Robust Stochastic Control Based on Imprecise Probabilities}, IFAC Proceedings Volumes, 44(1), 2011, pp. ~ 4606-4613.
  
\bibitem{bertsekas78} {\sc D. P. Bertsekas and S. Shreve}, {\em Stochastic Optimal Control: The Discrete Time Case}, Academic Press, New York, 1978.

\bibitem{Billingsley} {\sc P. Billingsley}, {\em Convergence of Probability Measures}, Wiley, New York, 1968.

\bibitem{billingsley1967uniformity} {\sc P. Billingsley and F. Tops{\O}e}, {\em Uniformity in weak convergence}, Probability Theory and Related Fields, 7(1), 1967, pp. ~1--16.
 

\bibitem{Borkar2} {\sc V. S. Borkar}, {\em Convex Analytic Methods in Markov Decision Processes}, in {H}andbook of {M}arkov Decision Processes, E. A. Feinberg, A. Shwartz (Eds.), Kluwer, Boston, MA, 2001, pp. ~347--375.

\bibitem{budhiraja1997exponential} {\sc A. Budhiraja and D. Ocone}, {\em Exponential stability of discrete-time filters for bounded observation noise}, Systems Control Lett., 30(4), 1997, pp.~185--193.

\bibitem{budhiraja1999exponential} {\sc A. Budhiraja and D. Ocone}, {\em Exponential stability in discrete-time filtering for non-ergodic signals}, Stochastic Process. Appl., 82(2), 1999, pp.~245--257.
  

\bibitem{CharalambousRezaei07} {\sc C. D. Charalambous and F. Rezaei}, {\em Stochastic uncertain systems subject to relative entropy constraints: Induced norms and monotonicity properties of minimax games}, IEEE Trans. Automat. Control, 52(4), 2007, pp. ~647--663.

\bibitem{chigansky2009intrinsic} {\sc P. Chigansky, and R. Liptser, and R. van Handel}, {\em Intrinsic methods in filter stability}, 
  title={Intrinsic methods in filter stability}, Handbook of Nonlinear Filtering, 2009.
  
\bibitem{Devroye85} {\sc L. Devroye and L. Gy\"orfi}, {\em Non-parametric Density Estimation: The $L_1$ View}, Wiley, New York, 1985.

\bibitem{dudley1999uniform} {\sc R. M. Dudley}, {\em Uniform Central Limit Theorems}, Cambridge Univ. Press, Cambridge,  23, 1999.

\bibitem{Dudley02} {\sc R. M. Dudley}, {\em Real Analysis and Probability}, 2nd edition, Cambridge Univ. Press, Cambridge, 2002.

\bibitem{Dudley} {\sc R. M. Dudley and E. Gine and J. Zinn}, {\em Uniform and Universal Glivenko-Cantelli Classes}, J. Theoret. Probab., 4, 1991, pp.~485--510.

\bibitem{dupuis2000robust} {\sc P. Dupuis and M. R. James and I. Petersen}, {\em Robust Properties of Risk-Sensitive Control}, Math. Control Signals Systems, 13(4), 2000, pp.~318--332.
  
\bibitem{dynkin1979controlled} {\sc E. B. Dynkin and A. A. Yushkevich}, {\em Controlled Markov Processes}, Springer, 1979.
 

\bibitem{feinberg1996measurability} {\sc E. A. Feinberg}, {\em On Measurability and Representation of Strategic Measures in Markov Decision Processes}, Lecture Notes-Monograph Series, 1996, pp.~29--43.


\bibitem{FeKaZg14} {\sc E.A.~Feinberg and P.O.~Kasyanov and M.Z.~Zgurovsky}, {\em Partially Observable Total-Cost Markov Decision Process with Weakly Continuous Transition Probabilities}, Math. Oper. Res., 41(2), 2016, pp.~656--681.


\bibitem{goeva2016reconstructing} {\sc A. Goeva and H. Lam and H. Qian and B. Zhang}, {\em Reconstructing Input Models in Stochastic Simulation}, Proceed. the 2014 Winter Simulation Conf.,  IEEE Press, 2014, pp.~698--709.


\bibitem{gossner2008entropy} {\sc O. Gossner and T. Tomala}, {\em Entropy Bounds on Bayesian Learning}, J. Math. Econom., 44(1), 2008, pp.~24--32.

\bibitem{GrayInfo} {\sc R. M. Gray}, {\em Entropy and Information Theory}, Springer, New York, 1990.


\bibitem{gupta2014existence} {\sc A. Gupta, and S. Y\"uksel, and T. Ba\c{s}ar, and C. Langbort}, {\em On the Existence of Optimal Policies for a Class of Static and Sequential Dynamic Teams}, SIAM J. Control Optim., 53, 2015, pp.~1681--1712.
  
\bibitem{Ramon2008discrete} {\sc R. Van Handel}, {\em Discrete Time Nonlinear Filters with Informative Observations are Stable}, Electron. Commun. Probab. Volume 13, 2008, pp.~562--575.

\bibitem{hansen2001robust} {\sc L. P. Hansen, and T. J. Sargent}, {\em Robust Control and Model Uncertainty}, American Economic Review, 91(2), 2001, pp.~60--66.

\bibitem{Her89} {\sc O.~Hern\'andez-Lerma}, {\em Adaptive Markov Control Processes}, Springer, 1989.

\bibitem{HernandezLermaMCP} {\sc O. Hernandez-Lerma and J. B. Lasserre}, {\em Discrete-Time Markov Control Processes: Basic Optimality Criteria}, Springer, 1996.

\bibitem{hernandezlasserre1999further} {\sc O. Hern{\'a}ndez-Lerma, and J. B. Lasserre}, {\em Further Topics on Discrete-Time Markov Control Processes}, Springer, 1999.
  
\bibitem{Hernandez} {\sc O. Hern\'andez-Lerma and J. B. Lasserre}, {\em Markov Chains and Invariant Probabilities}, Birkh\"auser, Basel, 2003.


\bibitem{KellyWeakConv} {\sc C-H. Hsieh and B. R. Barmish and J. A. Gubner}, {\em Kelly Betting Can Be Too Conservative}, IEEE Conf. on Decision and Control, 2016.
 

\bibitem{jacobson1973optimal} {\sc D. Jacobson}, {\em Optimal Stochastic Linear Systems with Exponential Performance Criteria and Their Relation to Deterministic Differential Games}, IEEE Trans. Automat. Control, 18(2), 1973, pp.~124--131.

\bibitem{lam2016} {\sc H. Lam}, {\em Robust Sensitivity Analysis for Stochastic Systems}, Math. Oper. Res., 41(4), 2016, pp.~1248--1275.


\bibitem{Langen81} {\sc H.J. Langen}, {\em Convergence of Dynamic Programming Models}, Math. Oper. Res., 6(4), 1981, pp.~493--512.

\bibitem{oksendal2014forward} {\sc B. {\O}ksendal and A. Sulem}, {\em Forward--Backward Stochastic Differential Games and Stochastic Control Under Model Uncertainty}, J. Optim. Theory Appl., 161(1), 2014, pp.~22--55.
 

\bibitem{Par67} {\sc K.R.~Parthasarathy}, {\em Probability Measures on Metric Spaces}, AMS Bookstore, 1967.


\bibitem{dupuis2000kernel} {\sc I. Petersen, M. R. James and P. Dupuis}, {\em Minimax Optimal Control of Stochastic Uncertain Systems with Relative Entropy Constraints}, IEEE Trans. Automat. Control, 45(3), 2000, pp.~398--412.

\bibitem{dai1996connections} {\sc P. Dai Pra, L. Meneghini and W. J. Runggaldier}, {\em Connections Between Stochastic Control and Dynamic Games}, Math. Control Signals Systems, 9(4), 1996, pp.~303--326.
 
\bibitem{raginsky2013empirical} {\sc M. Raginsky}, {\em Empirical Processes, Typical Sequences and Coordinated Actions in Standard Borel Spaces}, IEEE Trans. Inform. Theory, 59(3), 2013, pp.~1288--1301.
  

\bibitem{Rhe74} {\sc D.~Rhenius}, {\em Incomplete Information in Markovian Decision Models}, Ann. Statist., 2, 1974, pp.~1327--1334.


\bibitem{Rud87} {\sc W. Rudin}, {\em Real and Complex Analysis}, 3rd edition, McGraw-Hill, New York, 1987.

\bibitem{saldi2014near} {\sc N. Saldi, and S. Y\"uksel, and T. Linder}, {\em Near Optimality of Quantized Policies in Stochastic Control Under Weak Continuity Conditions}, J. Math. Anal. Appl., 435(1), 2016, pp.~321--337.
  

\bibitem{Schal} {\sc M. Sch\"al}, {\em Conditions for Optimality in Dynamic Programming and for the Limit of
n-Stage Optimal Policies to Be Optimal}, Z. Wahrscheinlichkeitsth, 32(3), 1975, pp.~179--196


\bibitem{serfozo82} {\sc R. Serfozo}, {\em Convergence of Lebesgue Integrals with Varying Measures}, Sankhy{\=a}: The Indian J. Stat., Series A, 44(3), 1982, pp.~380--402.


\bibitem{tzortzis2015dynamic} {\sc I. Tzortzis and C.D. Charalambous and T. Charalambous}, {\em Dynamic Programming Subject to Total Variation Distance Ambiguity}, SIAM J. Control Optim., 53(4), 2015, pp.~2040--2075.
 
\bibitem{VanHandel} {\sc R.~van Handel}, {\em The Universal Glivenko--Cantelli Property}, Probab. Theory Related Fields, 155(3--4), 2012, pp.~911--934.

\bibitem{Vap00} {\sc V. N. Vapnik}, {\em The Nature of Statistical Learning Theory}, 2nd edition, Springer, New York, 2000. 

\bibitem{villani2008optimal} {\sc C. Villani}, {\em Optimal Transport: Old and New}, Springer, 2008.
 
\bibitem{wheeden77} {\sc R. L. Wheeden and A. Zygmund}, {\em Measure and Integral}, Marcel Dekker, New York, 1977.

\bibitem{whittle1991risk} {\sc P. Whittle}, {\em A Risk-Sensitive Maximum Principle: The Case of Imperfect State Observation},  IEEE Trans. Automat. Control, 36(7), 1991, pp.~793--801.
 
\bibitem{WuVer11} {\sc Y. Wu and S. Verd\'u}, {\em Witsenhausen's Counterexample: A View from Optimal Transport Theory}, Proc. IEEE Conference on Decision and Control, 2016, pp.~5732--5737.

\bibitem{WuVerdu} {\sc Y. Wu and S. Verd\'u}, {\em Functional Properties of Minimum Mean-Square Error and Mutual Information}, IEEE Trans. Inform. Theory, 58(3), 2012, pp.~1289--1301.
 
\bibitem{yuksel12:siam} {\sc S. Y\"uksel and T. Linder}, {\em Optimization and Convergence of Observation Channels in Stochastic Control}, SIAM J. Control Optim., 50, 2012, pp.~864--887.

\bibitem{Yus76} {\sc A.A.~Yushkevich}, {\em Reduction of A Controlled Markov Model with Incomplete Data to A Problem with Complete Information in the Case of Borel State and Control Spaces}, Theory Prob. Appl., 21, 1976, pp.~153--158.

\bibitem{zames1981feedback} {\sc G. Zames}, {\em Feedback and Optimal Sensitivity: Model Reference Transformations, Multiplicative Seminorms and Approximate Inverses}, IEEE Trans. Automat. Control, 26(2), 1981, pp.~301--320.

\bibitem{zhou1996robust} {\sc K. Zhou, J. C. Doyle and K. Glover}, {\em Robust and Optimal Control}, Volume 40, Prentice-Hall, New Jersey 1996.

\bibitem{zhu2015risk} {\sc H. Zhu and E. Zhou}, {\em Risk Quantification in Stochastic Simulation under Input Uncertainty},  arXiv preprint arXiv:1507.06015, 2015.
  

\end{thebibliography}

\end{document}